\numberwithin{equation}{section}
\declaretheoremstyle[bodyfont=\it,qed=\qedsymbol]{noproofstyle}
\declaretheorem[numberlike=equation]{observation}
\declaretheorem[name=Observation,numbered=no]{observation*}
\declaretheorem[numberlike=equation]{theorem}
\declaretheorem[name=Theorem,numbered=no]{theorem*}
\declaretheorem[numberlike=equation]{lemma}
\declaretheorem[name=Lemma,numbered=no]{lemma*}
\declaretheorem[numberlike=equation]{corollary}
\declaretheorem[name=Corollary,numbered=no]{corollary*}
\declaretheorem[name=Proposition,numbered=no]{proposition*}
\declaretheorem[numberlike=equation]{claim}
\declaretheorem[name=Claim,numbered=no]{claim*}
\declaretheorem[numberlike=equation]{conjecture}
\declaretheorem[name=Conjecture,numbered=no]{conjecture*}
\declaretheorem[name=Question,numbered=no]{question*}
\declaretheoremstyle[bodyfont=\it,qed=$\lozenge$]{defstyle} 
\declaretheorem[numberlike=equation,style=defstyle]{definition}
\declaretheorem[unnumbered,name=Definition,style=defstyle]{definition*}
\declaretheorem[unnumbered,name=Example,style=defstyle]{example*}
\declaretheorem[unnumbered,name=Notation=defstyle]{notation*}
\declaretheorem[unnumbered,name=Construction,style=defstyle]{construction*}
\declaretheorem[numberlike=equation,style=defstyle]{remark}
\declaretheorem[unnumbered,name=Remark,style=defstyle]{remark*}
\newcommand{\shortECCC}[2]{\texttt{\href{http://eccc.weizmann.ac.il/report/\ifnumcomp{#1}{>}{93}{19}{20}#1/#2/}{eccc:TR#1-#2}}}
\newcommand{\parseECCC}[1]{
\StrSubstitute{#1}{TR}{}[\tmpstring]%
\IfSubStr{\tmpstring}{/}{ 
\StrBefore{\tmpstring}{/}[\ecccyear]%
\StrBehind{\tmpstring}{/}[\ecccreport]%
}{
\StrBefore{\tmpstring}{-}[\ecccyear]%
\StrBehind{\tmpstring}{-}[\ecccreport]%
}%
\shortECCC{\ecccyear}{\ecccreport}}
	\renewcommand{\vec}[1]{{\mathbf{#1}}}
	\newcommand{\va}{{\vec{a}}\@ifnextchar{^}{\!\:}{}}
	\newcommand{\vb}{{\vec{b}}\@ifnextchar{^}{\!\:}{}}
	\newcommand{\vc}{{\vec{c}}\@ifnextchar{^}{\!\:}{}}
	\newcommand{\vd}{{\vec{d}}\@ifnextchar{^}{\!\:}{}}
	\newcommand{\ve}{{\vec{e}}\@ifnextchar{^}{\!\:}{}}
	\newcommand{\vy}{{\vec{y}}\@ifnextchar{^}{\!\:}{}}
	\newcommand{\vs}{{\vec{s}}\@ifnextchar{^}{\!\:}{}}
	\newcommand{\vt}{{\vec{t}}\@ifnextchar{^}{\!\:}{}}
	\newcommand{\vx}{{\vec{x}}\@ifnextchar{^}{}{}}		
	\newcommand{\vz}{{\vec{z}}\@ifnextchar{^}{\!\:}{}}
	\newcommand{\vY}{{\vec{Y}}\@ifnextchar{^}{\!\:}{}}
	\newcommand{\vX}{{\vec{X}}\@ifnextchar{^}{}{}}		
	\newcommand{\vZ}{{\vec{Z}}\@ifnextchar{^}{\!\:}{}}
	\newcommand{\vG}{{\vec{G}}\@ifnextchar{^}{\!\:}{}}
\newcommand{\F}{\mathbb{F}}
\newcommand{\Q}{\mathbb{Q}}
\newcommand{\N}{\mathbb{N}}
\newcommand{\h}{\mathcal{H}}
\newcommand{\set}[1]{\left\{#1\right\}}
\newcommand{\abs}[1]{\left|#1\right|}
\DeclareMathOperator{\size}{size}
\newcommand{\sps}{\sum\prod\sum}
\def\epsilon{\varepsilon}
\date{}
\title{Some Closure Results for Polynomial Factorization and Applications}
\author{
Chi-Ning Chou\thanks{School of Engineering and Applied Sciences, Harvard University, Cambridge, Massachusetts, USA. Email: \texttt{chiningchou@g.harvard.edu}.}
\and
Mrinal Kumar\thanks{Center for Mathematical Sciences and Applications, Harvard University, Cambridge, Massachusetts, USA. Email: \texttt{mrinalkumar08@gmail.com}.}
\and
Noam Solomon\thanks{Center for Mathematical Sciences and Applications, Harvard University, Cambridge, Massachusetts, USA. Email: \texttt{noam.solom@gmail.com}.}
}
\begin{document}
\maketitle

\begin{abstract}
In a sequence of  fundamental results in the 80's, Kaltofen~\cite{k85, Kal86, Kal87, k89} showed that factors of multivariate polynomials with small arithmetic circuits have small arithmetic circuits. In other words, the complexity class $\VP$ is closed under taking factors. A natural question in this context is to understand if other natural classes of multivariate polynomials, for instance, arithmetic formulas, algebraic branching programs, bounded depth arithmetic circuits or the class $\VNP$, are closed under taking factors. 

In this paper, we show that all factors of degree at most $\log^a n$ of polynomials with $\poly(n)$ size depth $k$ circuits have $\poly(n)$ size circuits of depth at most $O(k + a)$. This partially answers a  question of Shpilka-Yehudayoff (Q. 19 in~\cite{sy}) and has applications to hardness-randomness tradeoffs for bounded depth arithmetic circuits.  

More precisely, this shows that  a superpolynomial lower bound for bounded depth arithmetic circuits, for a family of explicit polynomials of degree $\poly(\log n)$ implies deterministic sub-exponential time algorithms for polynomial identity testing (PIT) for bounded depth arithmetic circuits. This is incomparable to a beautiful result of Dvir et al.~\cite{DSY09}, where they showed that super-polynomial lower bounds for  constant depth arithmetic circuits for any explicit family of polynomials (of potentially high degree) implies sub-exponential time deterministic PIT for bounded depth circuits of \emph{bounded individual degree}. Thus, we remove the ``bounded individual degree" condition in~\cite{DSY09} at the cost of strengthening the hardness assumption to hold for polynomials of \emph{low} degree. 

As direct applications of our techniques, we also obtain simple proofs of the following results.
\begin{itemize}
\item The complexity class $\VNP$ is closed under taking factors. This confirms a conjecture of B{\"u}rgisser (Conj. 2.1 in~\cite{bur00}), and improves upon a recent result of Dutta, Saxena and Sinhababu~\cite{DSS17} who showed a quasipolynomial upper bound on the number of auxiliary variables and the complexity of the verifier circuit of factors of polynomials in $\VNP$. 

\item A factor of degree at most $d$ of a polynomial $P$ which can be computed by an arithmetic formula (resp. algebraic branching program) of size $s$ has a formula (resp. algebraic branching program) of size at most $\poly(s, d^{\log d}, \deg(P))$. This result was first shown by Dutta et al.~\cite{DSS17}, and we obtain a slightly different proof  as an easy consequence of our techniques. 
\end{itemize}
Our proofs rely on a combination of the \emph{lifting} based ideas developed in polynomial factoring literature and the depth reduction results for arithmetic circuits, and hold over fields of characteristic zero or sufficiently large.



\end{abstract}

\thispagestyle{empty}
\newpage
\pagenumbering{arabic}
\section{Introduction}
A fundamental question in computational algebra is the question of polynomial factorization : Given a polynomial $P$, can we efficiently compute the factors of $P$? In this paper, we will be interested in the following closely related question : Given a \emph{structured} polynomial $P$, what can we say about the structure of factors of $P$? 
 
In a sequence of seminal results, Kaltofen~\cite{k85, Kal86, Kal87, k89} showed that  if a polynomial $P$ of degree $d$ in $n$ variables has an arithmetic circuit of size $s$, then each of its factors has an arithmetic circuit of size $\poly(s, n, d)$. Moreover, he also showed that given the circuit for $P$, the circuits for its factors can be computed in time $\poly(s, n, d)$ by a randomized algorithm. 

Another way of stating this result is that the complexity class $\VP$, which we now define, is \emph{uniformly closed under taking factors}.
\begin{definition}[VP]\label{def:VP}
	A family of polynomials $\{f_n\}$ over a field $\F$ is said to be in the class $\VP_{\F}$ if there exist polynomially bounded functions $d, k, v : \N\rightarrow \N$ and a circuit family $\{g_n\}$ such that $\deg(f_n)\leq d(n)$, $\size(g_n)\leq s(n)$, and $f_n$ is computed by $g_n$ for every sufficiently large $n \in \N$.
\end{definition}
We remark that factorization is a fundamental algebraic notion, and so closure under factorization indicates that a complexity class is algebraically nice in some sense. Thus, it is a natural question to ask if any of the other naturally and frequently occurring classes of polynomials like $\class{VF}$ (polynomials with small formulas), $\class{VBP}$ (polynomials with small algebraic branching programs), constant depth arithmetic circuits, or the class $\VNP$ (the algebraic analog of $\NP$ or $\#\P$) are closed under taking factors. 

In recent years, we have had some progress on the question of closure under factorization 
for bounded depth arithmetic circuits (see \cite{DSY09, O16}) or the classes $\class{VF}, \class{VBP}$ and $\VNP$ (see~\cite{DSS17}). We will discuss these results in a later part of this section.

In addition to being basic questions in algebraic complexity, some of these closure results also have applications to extending the hardness vs randomness framework of Kabanets and Impagliazzo~\cite{ki03} to formulas, branching programs or bounded depth arithmetic circuits. Indeed, Kaltofen's closure result for arithmetic circuits is crucial ingredient in the proof of Kabanets and Impagliazzo~\cite{ki03}. 
 
\subsection{Hardness and Randomness}
Two of the most basic questions in algebraic complexity theory are the question of proving super-polynomial lower bounds on the size of arithmetic circuits computing some explicit polynomial family, and that of designing efficient deterministic algorithms for Polynomial Identity Testing (PIT). 

The progress on these questions for general arithmetic circuits has been painfully slow. To date, there are no non-trivial algorithms for PIT for general arithmetic circuits, while the best known lower bound, due to Bauer and Strassen~\cite{BS83}, is a slightly superlinear lower bound $\Omega(n\log n)$, proved over three decades ago. In fact, even for the class of bounded depth arithmetic circuits, no non-trivial deterministic PIT algorithms are known, and the best lower bounds known are just slightly superlinear~\cite{Raz10a}. 

In a very influential work, Kabanets and Impagliazzo~\cite{ki03} showed that the questions of derandomizing PIT and that of proving lower bounds for arithmetic circuits are equivalent in some sense. Their result adapts the Hardness vs Randomness framework of Nisan and Wigderson~\cite{nw94} to the algebraic setting. In their proof, Kabanets and Impagliazzo combine the use of Nisan-Wigderson generator with Kaltofen's result that all factors of a low degree (degree $\poly(n)$) polynomial with $\poly(n)$ sized circuit are computable by size $\poly(n)$ circuits~\cite{k89}. They showed that given an explicit family of \emph{hard} polynomials, one can obtain a \emph{non-trivial}\footnote{Here, non-trivial means subexponential time, or quasipolynomial time, based on the hardness assumption.} deterministic algorithm for PIT. 

The extremely slow progress on the lower bound and PIT questions for general circuits has led to a lot of attention on understanding these questions for more structured sub-classes of arithmetic circuits. Arithmetic formula~\cite{k85}, algebraic branching programs~\cite{k17}, multilinear circuits~\cite{Raz06, raz-yehudayoff, RSY08}, and constant depth arithmetic circuits~\cite{nw1997, Raz10a, GKKS14, FLMS13, KS14} are some examples of such circuit classes. An intriguing question is to ask if the equivalence of PIT and lower bounds also carries over to these more structured circuit classes. For example, does super-polynomial lower bounds for arithmetic formulas imply non-trivial deterministic algorithms for PIT for arithmetic formulas, and vice-versa? 

The answers to these questions do not follow directly from the results in~\cite{ki03}; and extending the approach of Kabanets and Impagliazzo to answer these questions seems to be intimately related to the questions about closure of arithmetic formulas and bounded depth circuits under polynomial factorization. 

We now describe our results, and discuss how they relate to prior work.
\section{Results and Prior Work}\label{sec:results and prior work}

\subsection{Factors of Polynomials with Bounded Depth Circuits}
For our first set of results, we study the bounded depth circuit complexity of factors of polynomials which have small bounded depth circuits. We prove the following theorem. 
\begin{theorem}\label{thm:factor ckt ub-intro}
Let $\F$ be a field of characteristic zero. Let $P \in \F[\vx]$ be a polynomial of degree at most $r$ in $n$ variables that can be computed by an arithmetic circuit of size $s$ of depth at most $\Delta$. Let $f\in \F[\vx]$ be an irreducible  polynomial of degree at most $d$ such that $f$ divides $P$. Then, $f$ can be computed by a circuit of depth at most $\Delta + O(1)$ and size at most $O(\poly(s, r, n)\cdot d^{O(\sqrt{d})})$. 
\end{theorem}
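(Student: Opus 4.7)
The plan is to write $f$ in the form $\Psi(Q_1, \ldots, Q_m)$, where the $Q_i$'s are a small family of auxiliary polynomials computable from the given depth-$\Delta$ circuit for $P$ with only $O(1)$ additional depth and $\poly(s, r, n)$ size, and $\Psi$ is a ``combiner'' polynomial of low degree and small circuit complexity. The Agrawal--Vinay--Koiran--Tavenas depth reduction is then applied only to $\Psi$, leaving the circuit for $P$ essentially intact and keeping the size blow-up under control.

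First, set up a Hensel-lifting configuration. After a suitable invertible linear change of variables and shift (a generic choice suffices over any field of characteristic zero or sufficiently large), we may assume that $f$ is monic in $x_1$ of degree $d$, and that for a generic $\vec{a} \in \F^{n-1}$ the univariate specializations $f(x_1, \vec{a})$ and $g(x_1, \vec{a})$ (with $g = P/f$) are coprime in $\F[x_1]$. Introduce a parameter $t$ and define $\tilde{P}(t, x_1, \vy) := P(x_1, \vec{a} + t(\vy - \vec{a}))$, where $\vy = (x_2, \ldots, x_n)$. Writing $\tilde{P} = \sum_{k=0}^{r} t^k P_k(x_1, \vy)$, each $P_k$ can be extracted from the given circuit for $P$ with $O(1)$ extra depth and $\poly(s, r, n)$ size by polynomial interpolation in $t$ at $r+1$ distinct scalars (a depth-$2$ linear combination of shifted evaluations of $P$).

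Next, run the classical Hensel/Newton lifting in $\F[\vy][\![t]\!][x_1]$, starting from the coprime base factorization $\tilde{P}(0, x_1, \vy) = f(x_1, \vec{a}) \cdot g(x_1, \vec{a})$, to obtain a power-series lift whose truncation modulo $t^{d+1}$ recovers $f(x_1, \vy)$ upon setting $t = 1$. Unrolling this iteration yields a closed-form expression $f = \Psi(P_0, P_1, \ldots, P_d)$ in which $\Psi$ is a polynomial of degree at most $d$ in its $d+1$ arguments, with a straight-line program of size $\poly(d)$. Finally, apply the depth-reduction theorem of Agrawal--Vinay--Koiran--Tavenas to $\Psi$, treating the $P_i$'s as formal variables: since $\Psi$ has degree $\le d$ and circuit size $\poly(d)$, it admits a homogeneous depth-four $\Sigma\Pi\Sigma\Pi$ representation of size $d^{O(\sqrt{d})}$. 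Substituting the depth-$(\Delta + O(1))$, size-$\poly(s, r, n)$ circuits for the $P_i$'s as leaves of this depth-four circuit produces a circuit for $f$ of depth $\Delta + O(1)$ and size $\poly(s, r, n) \cdot d^{O(\sqrt{d})}$, as required.

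The step I expect to be the main obstacle is justifying that $\Psi$ is a \emph{polynomial} (not a rational function) in the $P_i$'s of degree $\le d$ and with a $\poly(d)$-size straight-line program. The Hensel iteration natively involves divisions by (powers of) the Bezout coefficients of $f(x_1, \vec{a})$ and $g(x_1, \vec{a})$ in $\F[x_1]$; since these are fixed once $\vec{a}$ is chosen, they should be absorbable into constants appearing in $\Psi$, but carrying this out cleanly, together with the normalization step that recovers $f$ itself rather than an associate, is where the remaining technical work lies. A secondary subtlety is that the depth-reduction bound must be applied to the degree of $\Psi$ as a polynomial in the $P_i$'s (which is $O(d)$) rather than to its degree in the underlying variables $\vx$ after substitution (which could be as large as $r d$); this is automatic because depth reduction operates abstractly on the circuit computing $\Psi$.
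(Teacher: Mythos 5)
Your high-level architecture — write $f$ as a low-degree, small-circuit combiner applied to a few generator polynomials extracted from the circuit for $P$, and depth-reduce only the combiner — is exactly the paper's. But the instantiation via factor-Hensel lifting of the coprime factorization $\tilde{P}(0,\cdot)=f(x_1,\va)\cdot g(x_1,\va)$ in the deformation parameter $t$ has a gap precisely at the step you flag, and it is more serious than absorbing Bezout divisions into constants. Each lifting step requires reducing polynomials modulo $f_k$ (and tracking the cofactor $g_k$ of $x_1$-degree $r-d$) in the variable $x_1$; coefficient extraction in $x_1$ is not a ring operation on the arguments $P_0,\dots,P_d$ treated as atomic quantities, so the combiner $\Psi$ must really act on the $\Theta(rd)$ individual $(t,x_1)$-coefficients, and its straight-line program has size $\poly(r,d)$, not $\poly(d)$. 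This is fatal quantitatively: in the depth-reduction theorem the circuit size and the number of variables of $\Psi$ both enter the \emph{base} of the $O(\sqrt{d})$ exponent, so you obtain $(rd)^{O(\sqrt{d})}$ rather than $\poly(r)\cdot d^{O(\sqrt{d})}$, and $r^{O(\sqrt{d})}$ is superpolynomial in the regime ($d\approx\log^2 n$, $r=\poly(n)$) needed for the hardness-vs-randomness application. (A smaller omission: your coprimality of $f(x_1,\va)$ and $g(x_1,\va)$ fails when $f$ has multiplicity $>1$ in $P$; one must first pass to an appropriate derivative of $P$.)

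The paper sidesteps all of this by lifting \emph{roots} rather than the factor: after making $P$ monic in a fresh variable $y$, reducing multiplicity, and choosing a shift separating the roots of $f(\mathbf{0},y)$ from those of $P/f$, each of the $d$ roots $\alpha_i$ is lifted by the Newton step $h\mapsto h-P(\vx,h)/\delta$. Taylor-expanding $P$ around the constant term shows that the $k$-th approximant is a degree-$\le k$ polynomial, with an $O(d^3)$-size circuit \emph{independent of $r$}, in only the $d+1$ generators $\h_{\leq d}\left[\frac{\partial^j P}{\partial y^j}(\vx,\alpha_i)\right]$; the factor is then recovered as $f=\h_{\leq d}\left[\prod_{i=1}^d(y-q_i)\right]$. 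All $r$-dependence is confined to the depth-$\Delta$, $\poly(s,r)$-size circuits for the generators, so depth reduction applied to the combiner costs only $d^{O(\sqrt{d})}$. If you want to salvage your route, you would need to reorganize the factor-lifting so that the combiner's size and arity are $\poly(d)$ — which in effect forces you back to lifting the $d$ roots individually.
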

Thus, low degree factors of polynomials with small low depth circuits have small low depth circuits. Our proof gives a smooth tradeoff in the depth of the circuit for the factor and its size. The tradeoff is governed by the depth reduction results for arithmetic circuits (see~\autoref{thm:depth-2k chasm}). 
We remark that the result is also true when the characteristic of the underlying field is sufficiently large. 
The result in the literature, which is most closely related to~\autoref{thm:factor ckt ub-intro}, is due to Oliveira~\cite{O16}. He studied the question of bounded depth circuit complexity of factors of polynomials with small bounded depth circuits, for polynomials of low individual degree. He showed that if a polynomial $P$ of individual degree $r$ is computable by a circuit of size $s$ and depth $\Delta$, then every factor of $P$ of degree at most $d$ can be computed by a circuit of size $\poly(s, r, d^r)$ and depth at most $\Delta + 5$. Thus, for polynomials with small individual degree, the results in~\cite{O16} are strictly better than ours, whereas for polynomials with unbounded individual degree, we get a better upper bound on the complexity of factors of total degree at most $\poly(\log n)$.

One of our main motivations for studying this question is the connection to hardness-randomness tradeoffs for bounded depth arithmetic circuits. In the next section, we describe the implications of our results in this context. 

\subsection{Hardness vs Randomness for Bounded Depth Circuits}
Dvir, Shpilka and Yehudayoff~\cite{DSY09} initiated the study of the question of the equivalence between PIT and lower bounds for bounded depth circuits. Dvir et al. observed that a part of the proof in~\cite{ki03} can be generalized to show that non-trivial PIT for bounded depth circuits implies lower bounds for such circuits. For the converse, the authors only showed a weaker statement; they proved that super-polynomial lower bounds for  depth $\Delta$ arithmetic circuits implies non-trivial PIT for depth $\Delta-5$ arithmetic circuits with \emph{bounded individual degree}. The bounded individual degree condition is a bit unsatisfying, and so, the following question is of interest : Does a super-polynomial lower bound for depth $\Delta$ arithmetic circuits imply non-trivial deterministic PIT for depth $\Delta'$ arithmetic circuits\footnote{Here, we think of $\Delta'$ as $\Delta - O(1)$.}? In particular, can we get rid of the ``bounded individual degree'' condition from the results in~\cite{DSY09}?

In this paper, we partially answer this question in the affirmative. Informally, we prove the following theorem. 
\begin{theorem}[Informal]\label{thm:main informal}
A super-polynomial lower bound for depth $\Delta$ arithmetic circuits for an explicit family of \emph{low degree} polynomials implies non-trivial deterministic PIT for depth $\Delta-5$ arithmetic circuits. 
\end{theorem}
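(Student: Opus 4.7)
The plan is to adapt the hardness-to-randomness framework of Kabanets and Impagliazzo~\cite{ki03} to the bounded-depth setting, using \autoref{thm:factor ckt ub-intro} as the key algebraic ingredient in place of Kaltofen's closure of $\VP$ under factoring. Suppose $\{f_n\}$ is an explicit family of polynomials in $n$ variables of degree at most $r = \poly(\log n)$ that requires superpolynomial-size depth-$\Delta$ arithmetic circuits. I would construct a Nisan--Wigderson style generator $G_{f_n} : \F^\ell \to \F^N$ based on a combinatorial design of $n$-subsets $S_1, \ldots, S_N$ of $[\ell]$, where the $i$-th output coordinate is $f_n$ evaluated on the seed coordinates indexed by $S_i$. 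Standard design parameters give $\ell = N^{\epsilon}$ for any desired constant $\epsilon > 0$. The PIT algorithm, on input a depth-$(\Delta - 5)$ size-$s$ circuit $C(\vy)$, evaluates $C(G_{f_n}(\vs))$ for every $\vs$ in a sufficiently large grid $H^\ell \subseteq \F^\ell$, which runs in time $|H|^\ell = 2^{\tilde{O}(N^\epsilon)}$, i.e., sub-exponential.

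For correctness, I would run the standard hybrid argument. Suppose for contradiction that $C \not\equiv 0$ but $C \circ G_{f_n} \equiv 0$. Then there is a hybrid index $i$ and a partial fixing of the seed coordinates outside $S_i$ such that, after substituting the NW values into the first $i-1$ inputs of $C$ and a free variable $z$ into the $i$-th input, the resulting polynomial vanishes identically once we set $z = f_n(\vx_{S_i})$. A standard interpolation-and-gcd analysis then exhibits $f_n$ (composed with the variable renaming induced by $S_i$) as an irreducible factor of a polynomial $Q(\vx)$ that itself is computable by a circuit of depth $(\Delta - 5) + O(1)$ and size $\poly(s, r, N)$, since $Q$ is a constant-depth composition of $C$ with the NW generator and some partial substitutions. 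Applying \autoref{thm:factor ckt ub-intro} to $Q$ with $f_n$ as a factor of degree $r = \poly(\log n)$ yields a circuit for $f_n$ of depth at most $\Delta$ and size $\poly(s, r, N) \cdot r^{O(\sqrt r)}$, which is quasi-polynomial in $n$. This contradicts the superpolynomial hardness of $f_n$ against depth-$\Delta$ circuits, so $C \circ G_{f_n} \not\equiv 0$ whenever $C \not\equiv 0$, and the PIT algorithm is correct.

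The main obstacle is the quantitative bookkeeping. First, one has to verify that the hybrid argument really extracts $f_n$ itself, rather than some auxiliary polynomial whose hardness is not assumed, as the relevant irreducible factor of $Q$; this is precisely where the analysis in \cite{ki03} invokes a one-variable interpolation trick that must be reproduced in our setting. Second, one must track the depth overhead carefully: the NW composition, the interpolation step, and the factor-closure step from \autoref{thm:factor ckt ub-intro} each contribute an additive $O(1)$ to the depth, and the ``five'' in the $\Delta - 5$ bound of \autoref{thm:main informal} is chosen exactly to absorb these losses. Finally, one must ensure that $r$ stays in the low-degree regime where \autoref{thm:factor ckt ub-intro} is non-trivial; this is precisely why the hardness assumption in \autoref{thm:main informal} is stated for low-degree polynomials, unlike the original Kabanets--Impagliazzo setting where the hard polynomial may have arbitrary degree.
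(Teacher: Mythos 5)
Your proposal follows essentially the same route as the paper: a Nisan--Wigderson generator seeded by the hard low-degree polynomial, a brute-force evaluation over a grid, and a hybrid argument whose punchline is the new bounded-depth closure theorem in place of Kaltofen's theorem. Two points are worth tightening. First, what the hybrid argument actually produces is $f_n$ as a \emph{root}: after the partial fixing you get a nonzero polynomial $A_0(\vy, z)$ with $A_0(\vy, f_n(\vy|_{S_{i+1}})) \equiv 0$, so the relevant irreducible factor is $z - f_n$, not $f_n$ itself; the paper therefore invokes \autoref{thm:root ckt ub} directly rather than \autoref{thm:factor ckt ub-intro} (your version works too, applied to the factor $z - f_n$ and then specializing $z$, but as phrased ``$f_n$ as an irreducible factor of $Q$'' is not literally what the gcd argument gives). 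You should also make explicit why $A_0$ retains a small depth-$(\Delta-5+O(1))$ circuit: the design guarantees $\abs{S_j \cap S_{i+1}} \leq \log N$, so each residual copy of $f_n$ after the fixing is a multilinear polynomial in at most $\log N$ variables and hence a size-$N$ $\sum\prod$ circuit; this is the entire reason the design is there. Second, your final step claims a quasi-polynomial upper bound on the circuit size of $f_n$ and concludes a contradiction with super-polynomial hardness --- that inference is invalid. The paper restricts the degree to $d = O(\log^2 m/\log^2\log m)$ precisely so that $d^{O(\sqrt{d})} = \poly(m)$ and the extracted circuit is genuinely polynomial-size, which is what a super-polynomial lower bound rules out; for larger $\poly(\log n)$ degrees one must correspondingly strengthen the hardness assumption to quasi-polynomial or sub-exponential, as the paper remarks.
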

Here, by low degree polynomials, we mean polynomials in $n$ variables and degree at most  $O(\log^2n/\log^2\log n)$. Thus, by strengthening the hardness hypothesis in~\cite{DSY09}, we remove the bounded individual degree restriction from the implication. We now state the result in~\autoref{thm:main informal} formally.
\begin{theorem}\label{thm:main thm}
Let $\Delta \geq 6$ be a positive integer, and let $\epsilon > 0$ be any real number. Let $\{f_m\}$ be a family of explicit polynomials such that $f_m$ is an $m$-variate multilinear polynomial of degree $d = O\left({\log^2 m}/{\log^2\log m}\right)$ which cannot be computed by an arithmetic circuit of depth $\Delta$ and size $\poly(m)$. Then, there is a deterministic algorithm, which, given as input a circuit $C \in \Q[\vx]$ of size $s$, depth $\Delta-5$ and degree $D$ on $n$ variables, runs in time $(snD)^{O(n^{2\epsilon})}$ and determines if the polynomial computed by $C$ is identically zero. 
\end{theorem}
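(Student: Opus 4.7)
The plan is to adapt the Kabanets-Impagliazzo hardness-versus-randomness framework to the bounded depth setting, with \autoref{thm:factor ckt ub-intro} playing the role that Kaltofen's factorization theorem plays in the original argument. Given the hard polynomial family $\{f_m\}$ of degree $d = O(\log^2 m / \log^2\log m)$, I would build a Nisan-Wigderson style hitting set generator $G_m : \F^{\ell} \to \F^n$ whose $j$-th coordinate is $f_m(\vs|_{S_j})$, where $S_1, \ldots, S_n \subseteq [\ell]$ is a combinatorial $(\ell, d)$-design with $|S_j| = d$ and small pairwise intersections. The PIT algorithm takes as input the depth-$(\Delta - 5)$ circuit $C$ of size $s$ and degree $D$, picks a sufficiently large evaluation grid $T \subseteq \F$, and brute-forces over all seeds $\vs \in T^\ell$, declaring $C \equiv 0$ if and only if $C(G_m(\vs)) = 0$ for every such seed. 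In the standard NW parameter regime with $n = m^{\Theta(1)}$ this yields seed length $\ell = O(n^{2\epsilon})$, giving the claimed runtime $(snD)^{O(n^{2\epsilon})}$.

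For correctness, I would run the usual hybrid argument on the sequence $P_i(\vs, x_{i+1}, \ldots, x_n) := C(f_m(\vs|_{S_1}), \ldots, f_m(\vs|_{S_i}), x_{i+1}, \ldots, x_n)$: if $C \not\equiv 0$ but $C \circ G_m \equiv 0$, there is a first index $i$ at which $P_{i-1}$ is nonzero while $P_i \equiv 0$. By the design property, each substitution $f_m(\vs|_{S_j})$ for $j < i$, once the variables in $\vs \setminus S_i$ have been fixed to suitably generic constants, becomes a polynomial in at most $d$ variables, and hence a $\Sigma\Pi$ circuit of size $\binom{2d}{d} = \poly(m)$ and depth $2$. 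Plugging these at the leaves of $C$ produces a nonzero polynomial $R$ of size $\poly(s, m)$ and depth at most $(\Delta - 5) + O(1)$, while the minimality of $i$ implies that $x_i - f_m(\vs|_{S_i})$ divides $R$, so $f_m$ (up to a relabeling of variables that does not affect its circuit complexity) is an irreducible factor of $R$.

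Applying \autoref{thm:factor ckt ub-intro} to $R$ yields a circuit for $f_m$ of depth at most $(\Delta - 5) + O(1)$ and size $\poly(s, m, n, D) \cdot d^{O(\sqrt d)}$. The choice $d = O(\log^2 m / \log^2\log m)$ gives $\sqrt d \cdot \log d = O(\log m)$, hence $d^{O(\sqrt d)} = 2^{O(\log m)} = \poly(m)$, so this circuit has size $\poly(s, m, n, D)$, which is $\poly(m)$ in the regime where the PIT instance itself has polynomial size. Calibrating the gap ``$5$'' so that the accumulated $O(1)$ depth blowup from partial substitution plus factor extraction is at most $5$, we obtain a $\poly(m)$-size, depth-$\Delta$ circuit for $f_m$, contradicting the hardness assumption and hence certifying correctness of the algorithm.

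The main obstacle is the tight coupling of the parameter choices: the degree $d$ must be small enough that $d^{O(\sqrt d)}$ stays polynomial (which pins down $d = \Theta(\log^2 m / \log^2\log m)$), yet large enough that an $(\ell, d)$-design with output length $n$ exists at seed length $\ell = O(n^{2\epsilon})$; simultaneously, the constant hidden in the $O(1)$ of \autoref{thm:factor ckt ub-intro} combined with the depth increase from the partial substitution into $C$ must fit within the five-layer gap separating $\Delta - 5$ from $\Delta$. Careful bookkeeping of these constants, together with the field-size requirements needed so that the generic restriction in the hybrid step preserves nonvanishing, is where the bulk of the technical work lies.
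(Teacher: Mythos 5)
Your overall architecture is exactly the paper's: an NW-style generator seeded by $f_m$, a brute-force search over a grid of seeds, a hybrid argument locating the first index where the substitution kills the circuit, and then an appeal to the low-degree factor/root structure theorem to contradict the hardness of $f_m$. However, the design parameters as you state them contain a genuine error that breaks the size bound. The sets $S_j$ must have size $m$ (the arity of $f_m$), not $d$ — otherwise $f_m(\vs|_{S_j})$ is not even well-defined — and what makes the hybrid step work is the \emph{intersection} bound $\abs{S_j\cap S_i}\le \log n$: after the generic fixing of the variables outside $S_i$, each $f_m(\vs|_{S_j})$ with $j<i$ becomes a multilinear polynomial in at most $\log n$ variables, hence a $\sum\prod$ circuit with at most $2^{\log n}=n$ monomials and depth increase exactly $2$. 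Your version, in which the restrictions live on ``at most $d$ variables'' with ``$\binom{2d}{d}=\poly(m)$'' monomials, is false for the relevant $d=\Theta(\log^2 m/\log^2\log m)$: a multilinear polynomial in $d$ variables can have $2^d$ monomials, and $\binom{2d}{d}\approx 4^d = m^{\Theta(\log m/\log^2\log m)}$ is superpolynomial in $m$. This is not cosmetic, since the entire point of the low-degree hypothesis is that $d^{O(\sqrt d)}$ is polynomial while $2^{\Theta(d)}$ is not; with your stated parameters the substituted circuit $R$ would already be too large. The error is repairable by reverting to the standard design of \autoref{thm:nw design}.

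Two smaller points. First, you route the contradiction through divisibility by $x_i - f_m(\vs|_{S_i})$ and \autoref{thm:factor ckt ub-intro}; the paper instead observes directly that $f_m(\vs|_{S_i})$ is a \emph{root} of the hybrid polynomial and applies \autoref{thm:root ckt ub}, whose depth increase is pinned down as exactly $+3$. Your route is valid in principle (the linear factor is monic and irreducible, and Gauss's lemma applies), but the general factor theorem only promises $\Delta+O(1)$ — its proof passes through making $P$ monic, shifting, and recombining approximate roots — so the constant $5$ in the theorem statement cannot be certified without unwinding that $O(1)$; the root theorem is the right tool here and is why the paper's accounting $2+3=5$ closes. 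Second, the seed length in the paper is $\ell=O(m^2/\log n)$ with $m=n^{\epsilon}$, which is where $n^{2\epsilon}$ comes from; with $\abs{S_j}=d$ this calculation would not produce the claimed exponent.
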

Some remarks on the above theorem statement. 

\begin{remark}
The running time of the PIT algorithm gets better as the lower bound gets stronger. Also, the constraint on the degree of the hard polynomial family can be further relaxed a bit, at the cost of strengthening the hardness assumption, and increasing the running time of the resulting PIT algorithm\footnote{{If we assume sub-exponential lower bound, then we can get a quasi-polynomial time PIT. Note that this is the parameter region used in~\cite{DSY09}}}. We leave it to the interested reader to work out these details. We also note that the multilinearity assumption on the hard polynomial family is  without loss of generality.
\end{remark}

As discussed earlier,~\autoref{thm:main thm} is closely related to the main result in~\cite{DSY09}. We now discuss their similarities and differences. 
\paragraph*{Comparison with~\cite{DSY09}. } 
\begin{itemize}
\item {\bf Degree constraint on the hard polynomial. } While~\autoref{thm:main thm} requires that the hard polynomial on $m$ variables has degree at most $O(\log^2 m/\log^2\log m)$, Dvir et al.~\cite{DSY09} did not have a similar constraint. 
\item {\bf Individual degree constraint for PIT. }In~\cite{DSY09}, the authors get PIT for low depth circuits with bounded individual degree, whereas our~\autoref{thm:main thm} does not make any assumptions on individual degrees in this context.
\end{itemize}
The key technical challenge for extending the known hardness-randomness tradeoffs for general circuits~\cite{ki03} to restricted circuit classes like formulas or bounded depth circuits is the following question :  Let $P(\vx, y)\in \F[\vx, y]$ be a polynomial of degree $r$ and let $f\in \F[\vx]$ be a polynomial of degree $d$ such that $P(\vx, f) \equiv 0$. Assuming $P$ can be computed by a low depth circuit (or arithmetic formula) of size at most $s$, can $f$ be computed by a low depth circuit (or arithmetic formula)  of size at most $\poly(s, n, d, r)$?

In~\cite{DSY09}, the authors partially answer this question by showing that the polynomial $f$ can be computed by a low depth circuit of size at most $\poly(s, r, d^{\deg_y(P)})$. Thus, for the case of polynomials $P$ which have small individual degree with respect to $y$, they answer the question in affirmative. 

Our main technical observation is the following result, which gives an upper bound on the \emph{low depth} circuit complexity of roots of low degree of a multivariate polynomial which has a small low depth circuit.
\begin{theorem}\label{thm:root ckt ub}
Let $\F$ be a field of characteristic zero. Let $P \in \F[\vx, y]$ be a polynomial of degree at most $r$ in $n+1$ variables that can be computed by an arithmetic circuit of size $s$ of depth at most $\Delta$. Let $f\in \F[\vx]$ be a polynomial of degree at most $d$ such that
\[
P(\vx, f) = 0 \, .
\] 
Then, $f$ can be computed by a circuit of depth at most $\Delta + 3$ and size at most $O((srn)^{10}d^{O(\sqrt{d})})$. 
\end{theorem}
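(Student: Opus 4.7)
The plan is to follow the Hensel lifting strategy from Kaltofen's work on polynomial factorization, and then apply an arithmetic depth-reduction to collapse the high-depth circuit for $f$ into a low-depth one. As a first step, I would reduce to a ``nice'' case: after an affine shift of $\vx$ and a shift of $y$ to make $f(\vec{0}) = 0$ and $P(\vec{0},0) = 0$, and after dividing $P$ by $\gcd_y(P, P_y)$ to ensure that $y = f$ is a simple root, we may assume $c := P_y(\vec{0}, 0) \neq 0$. The characteristic-zero hypothesis enters precisely through this separability reduction, and the shift and gcd-removal change the circuit size/depth parameters only by constant factors.

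With this setup in place, I would run the Newton iteration
\[
g_0 = 0, \qquad g_{k+1} \equiv g_k - \frac{P(\vx, g_k)}{P_y(\vx, g_k)} \pmod{\langle\vx\rangle^{2^{k+1}}},
\]
realizing the division by the truncated geometric series $c^{-1}\sum_{j=0}^{d}(1 - P_y(\vx, g_k)/c)^j$, valid because the error $1 - P_y(\vx, g_k)/c$ vanishes at the origin. After $\ell = \lceil\log_2(d+1)\rceil$ iterations, $g_\ell \equiv f \pmod{\langle\vx\rangle^{d+1}}$; since $\deg f \leq d$, extracting the degree-$\leq d$ homogeneous components of $g_\ell$ recovers $f$ exactly. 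A careful, Kaltofen-style accounting of the unrolled iteration produces an arithmetic circuit for $f$ of size $\poly(s,n,r)$ (of potentially large depth) in which the sub-circuits for $P$ and $P_y$ appear at the leaves.

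The final step is to apply the depth-reduction theorem referenced earlier in the paper (the ``depth-$2k$ chasm'') to the portion of the circuit \emph{above} the leaves, while leaving the depth-$\Delta$ sub-circuits for $P$ and $P_y$ unchanged. Because the upper portion computes a polynomial of total degree at most $d$ in $\vx$, the reduction collapses it into a depth-$3$ sum-of-products-of-sums structure of size $\poly(s,n,r) \cdot d^{O(\sqrt{d})}$; placed on top of the circuits for $P$ and $P_y$, this yields total depth $\Delta + 3$ and size $(srn)^{10} \cdot d^{O(\sqrt{d})}$ after bookkeeping. The main obstacle I anticipate is precisely this \emph{layered} application of the depth reduction: the textbook form of the chasm theorem produces an absolute depth bound of $4$, so to get the additive $\Delta + 3$ bound I need to treat the sub-circuits for $P$ and $P_y$ as formal leaves and only depth-reduce the structure sitting above them. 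Handling the separability normalization and tracking sizes carefully through the Newton unrolling are further technical points, but they follow a well-established template from the polynomial-factoring literature.
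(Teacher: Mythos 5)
Your overall template (preprocess, lift, depth-reduce) is the same as the paper's, but the execution of the final step has a genuine gap, and it is exactly the step where the paper's main new idea lives. First, the structural issue: in the unrolled Newton iteration you compute $P(\vx, g_k)$ and $P_y(\vx, g_k)$, where $g_k$ is itself the output of the previous iteration. The copies of the circuit for $P$ therefore do \emph{not} sit at the leaves of the unrolled circuit --- each copy at stage $k$ has the entire stage-$(k-1)$ circuit feeding into its $y$-input, so there is no clean partition into ``depth-$\Delta$ sub-circuits at the bottom'' and ``low-degree structure on top.'' Second, even granting such a partition, the quantitative claim fails: the depth-reduction theorems give a bound of the form $(s'n'd)^{O(\sqrt{d})}$ where $s'$ is the size of the circuit being reduced. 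If the upper portion has size $\poly(s,n,r)$, depth reduction yields $(snr)^{O(\sqrt{d})}$, not $\poly(s,n,r)\cdot d^{O(\sqrt{d})}$. The paper explicitly notes that $(sn)^{O(\sqrt{d})}$ is what one gets ``for free'' from Kaltofen plus the chasm, and that the whole point of the theorem is to isolate the exponential dependence into a $d^{O(\sqrt{d})}$ factor.

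What is missing is the key lemma of the paper: after shifting so that $\alpha = f(\vec{0})$ is a simple root of $P(\vec{0},y)$, one Taylor-expands $P(\vx, h)$ around the \emph{constant} $\alpha$ rather than around the previous iterate. This shows that every lift $h_i$ can be written as $\h_{\leq i}\left[A_i(g_0,\ldots,g_d)\right]$, where the generators $g_j$ are the degree-$\le d$ truncations of $\frac{\partial^j P}{\partial y^j}(\vx,\alpha)$ --- each a fixed polynomial with a depth-$\Delta$ circuit of size $\poly(s,r)$ --- and $A_i$ is a polynomial in only $d+1$ variables, of degree at most $i$, with circuit size $O(d^3)$. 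Depth reduction is then applied \emph{only to $A_d$}, a circuit of size $\poly(d)$ on $d+1$ variables, which is what produces the $d^{O(\sqrt{d})}$ term; composing the resulting $\sum\prod\sum$ circuit with the depth-$\Delta$ circuits for the $g_j$ (merging adjacent sum layers) gives depth $\Delta+3$. Your proposal would be repaired by replacing the Newton-with-geometric-series unrolling by this generator decomposition; as written, the layered depth reduction you describe cannot deliver the claimed size bound.
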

We end this section with a short discussion on the \emph{low degree} condition in the hypothesis of~\autoref{thm:main thm}. 
\subsubsection{The Low Degree Condition}
The \emph{low degree} condition in the hypothesis of~\autoref{thm:main thm} appears to be extremely restrictive. It is natural to wonder if the question of proving super-polynomial lower bounds for constant depth circuits for an explicit polynomial family of \emph{low degree} much harder than the question of proving super-polynomial lower bound for constant depth circuits for an explicit polynomial family of potentially larger degree \footnote{In general, the degree only has to be upper bounded by a polynomial function in the number of variables.}? Currently, we do not even know quadratic lower bounds for arithmetic circuits of constant depth, and so, perhaps we are quite far from understanding this question. 

It is, however, easy to see that some of the known lower bounds for low depth circuits carries over to the low degree regime. For instance, the proofs of super-polynomial lower bounds for homogeneous depth-$3$ circuits by Nisan and Wigderson~\cite{nw1997}, super-polynomial lower bounds for  homogeneous depth-$4$ circuits based on the idea of shifted partial derivatives (see for example,~\cite{GKKS14, KSS13, FLMS13, KS14}) and super-linear lower bound due to Raz~\cite{Raz10a} do not require the degree of the hard function to be large. 

There are some known exceptions to this. For instance, lower bounds for homogeneous depth-$5$ circuits over finite fields due to Kumar and Saptharishi~\cite{KumarSapth15} are of the form $2^{\Omega(\sqrt{d})}$ and become trivial if $d < \log^2 n$. Another result which distinguishes the low degree and high degree regime is a separation between homogeneous depth-$5$ and homogeneous depth-$4$ circuit~\cite{KumarSapth15} which is only known to be true in the low degree regime (degree less than $\log^2 n$).

Another result of relevance is a result of Raz~\cite{raz10}, which shows that constructing an explicit family of tensors $T_n:[n]^d \rightarrow \F$, of rank at least $n^{d(1-o(1))}$ implies super-polynomial  lower bound for arithmetic formulas, provided $d \leq O(\log n/\log\log n)$. As far as we know, we do not know of such connections in the regime of high degree. 

One prominent family of lower bound results which do not seem to  generalize to this low degree regime are the super-polynomial lower bounds for multilinear formulas~\cite{Raz06}, and multilinear constant depth circuits~\cite{raz-yehudayoff}. In fact, the results in~\cite{raz10} show that super-polynomial lower bounds for set multilinear formulas for polynomials of degree at most $O(\log n/\log\log n)$ implies super-polynomial lower bounds for general arithmetic formulas. 

In the context of polynomial factorization, low degree factors of polynomials with small circuits have been considered before. For instance, Forbes~\cite{Forbes15} gave a quasi-polynomial time deterministic algorithm to test if a given polynomial of constant degree divides a given sparse polynomial. Extending this result to even testing if a given sparse polynomial divides another given sparse polynomial remains an open problem.

\subsection{Factors of Polynomials in $\VNP$}
We start by formally defining the complexity class $\VNP$. 
\begin{definition}[VNP]\label{def:VNP}
A family of polynomials $\{f_n\}$ over a field $\F$ is said to be in the class $\VNP_{\F}$ if there exist polynomially bounded functions $k, w, v : \N\rightarrow \N$ and a family $\{g_n\}$ in $\VP_{\F}$ such that for every sufficiently large $n \in \N$, 
\[
f_n(x_1, x_2, \ldots, x_{k(n)}) = \sum_{\vy \in \{0,1\}^{w(n)}} g_{v(n)}\left(x_1, x_2, \ldots, x_{k(n)}, y_1, y_2, \ldots, y_{w(n)}\right) \, .
\]
\end{definition}
We refer to the $y$ variables in the definition above as auxiliary variables, and the polynomial family $g_n$ as the family of verifier polynomials. 
Essentially, $\VNP$ can be thought of as the algebraic analog of $\NP$, and understanding if $\VNP$ is different from $\VP$ is the algebraic analog of the famous $\P$ vs $\NP$ question. As discussed earlier in this section, Kaltofen's closure result for $\VP$ does not seem to immediately extend to $\VNP$, and whether or not the factors of polynomials in $\VNP$ are in $\VNP$ was an open question. In~\cite{bur00}, B{\"u}rgisser conjectured the following. 
\begin{conjecture}[Conj. 2.1 in~\cite{bur00}]
The class $\VNP$ is closed under taking factors.
\end{conjecture}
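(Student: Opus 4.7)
The plan is to adapt the Hensel lifting strategy that underlies \autoref{thm:factor ckt ub-intro} to the $\VNP$ setting, exploiting the closure properties of $\VNP$ under exponential sums and bounded substitution. Let $P \in \VNP$ with $P(\vx) = \sum_{\vy \in \{0,1\}^{w}} Q(\vx, \vy)$ where $Q \in \VP$ is of polynomial size, and let $f$ be an irreducible factor of $P$ with $P = fg$, $d := \deg(f)$, $D := \deg(P)$ both polynomially bounded in $n$.

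First, I would apply a generic invertible affine transformation $\vx \mapsto A\vx + \vb$ so that, after replacing $P, f, g$ by their images, the specialization $P(x_1, 0, \ldots, 0)$ factorizes as $f_0(x_1) g_0(x_1)$ with $\gcd(f_0, g_0) = 1$, $f_0 \mid f$, $g_0 \mid g$, $\deg(f_0) = d$, and $f_0(0) \neq 0$. Such a transformation exists generically and preserves both $P \in \VNP$ and the factorization structure. I would then carry out Hensel lifting in the ring $\F[[x_2, \ldots, x_n]][x_1]$: define sequences $(f_t, g_t)$ with $f_t g_t \equiv P \pmod{(x_2, \ldots, x_n)^{2^{t+1}}}$ using the standard doubling update driven by the fixed Bezout coefficients $a(x_1), b(x_1)$ of $f_0, g_0$. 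After $T = \lceil \log_2(d+1) \rceil$ iterations one has $f_T \equiv f \pmod{(x_2, \ldots, x_n)^{d+1}}$, and extracting the homogeneous components of total degree at most $d$ from $f_T$ (a linear projection that preserves $\VNP$) recovers $f$ exactly.

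The central technical step is to argue that each Newton iterate $f_t$ admits a $\VNP$ representation of size polynomial in $s = \size(Q)$, $D$, $n$, and $d$. The key lemma I would establish is that $\VNP$ is closed under the operation $R(\vx, x_1) \mapsto R(\vx, h(\vx))$ whenever $R \in \VNP$, $h \in \VNP$, and $\deg(h) \cdot \deg_{x_1}(R)$ is polynomially bounded; the proof combines the $\{0,1\}$-summations defining $R$ and $h$ into a single sum over freshly named variables, and performs the composition inside the $\VP$ verifier, which remains of polynomial size because $h$ has a polynomial $\VP$-skeleton once its auxiliary sum is separated. The Newton update $f_{t+1} - f_t$ is then a polynomial combination of $P$ composed with $f_t$, $\partial_{x_1} P$ composed with $f_t$, $a(x_1)$, and $b(x_1)$, and hence lies in $\VNP$ by this lemma.

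The main obstacle, and precisely the point where the proof must improve on the quasipolynomial verifier bound of Dutta--Saxena--Sinhababu, is controlling the multiplicative blowup across the $T = O(\log d)$ doublings; a naive recursive application gives $\poly(s)^{O(\log d)}$ size. To overcome this, I would not compose verifier circuits iteratively. Instead, I would express $f_T$ as a single exponential sum by unrolling the Newton recursion: introduce auxiliary $\{0,1\}$-variables $\vz = (\vz_1, \ldots, \vz_T)$ encoding the iteration trajectory together with fresh copies $\vy^{(1)}, \ldots, \vy^{(k)}$ of the auxiliary variables of $P$, and write
\[
f_T(\vx) \;=\; \sum_{\vz} \sum_{\vy^{(1)}, \ldots, \vy^{(k)}} \Phi\bigl(\vx, \vy^{(1)}, \ldots, \vy^{(k)}, \vz\bigr),
\]
where $\Phi \in \VP$ has polynomial size. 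The crucial observation is that although the doubling Newton iteration reaches precision $2^T \approx d$, the total number of occurrences of $Q$ (and of $\partial_{x_1} Q$) in the unrolled formula is only $\poly(d)$ rather than $2^{O(T)}$, once one amortizes across the fact that each new ``digit'' of the power series is computed from only polynomially many lower-order data. Establishing this amortized accounting cleanly, together with the substitution lemma above, is the technical heart of the argument and yields the polynomial bound claimed by B\"urgisser's conjecture.
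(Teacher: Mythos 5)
Your overall architecture (generic shift, bivariate-style Hensel lifting in $\F[[x_2,\ldots,x_n]][x_1]$ with doubling precision, then truncation to degree $d$) is the classical Kaltofen/B\"urgisser route, which is also what Dutta--Saxena--Sinhababu follow; and you correctly identify that the whole difficulty is the multiplicative blowup across the $T=O(\log d)$ doublings. But your proposal does not actually close that gap: the ``crucial observation'' that the unrolled recursion contains only $\poly(d)$ occurrences of $Q$ is not the relevant quantity. The $d^{O(\log d)}$ in DSS17 does not come from counting copies of $Q$; it comes from the fact that (i) each doubling step requires truncating the iterate (a Strassen-type homogeneous-component extraction costing a multiplicative $\poly(d)$ per step, and needed for correctness of the quadratic iteration, including the Bezout-coefficient updates), and (ii) to write a product of two $\VNP$ polynomials as a single exponential sum one must give each multiplicative occurrence its own fresh block of auxiliary variables, which forces the outer composition skeleton to be \emph{tree-like}. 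Flattening a size-$\poly(d)$ circuit skeleton into a formula costs $d^{O(\log d)}$ --- exactly the bound you are trying to beat. Your ``amortized accounting'' is asserted, not proved, and as stated it does not engage with either source of blowup; as written the argument lands back at the quasipolynomial bound.

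The paper avoids this by a genuinely different decomposition. First, a one-shot structure theorem (\autoref{lem:key lem} and \autoref{lem:key lem gen}) shows that the factor can be written as $f=\h_{\leq d}\left[B(g_1,\ldots,g_t)\right]$ where the ``outer'' polynomial $B$ has a circuit of size $\poly(d)$ \emph{independent of $s$}, and the generators $g_i$ are truncated $y$-derivatives of $P$ evaluated at constants, each trivially inheriting a polynomial-size $\VNP$ representation from $P$ (\autoref{clm:vnp generators}). All the $s$-dependence is pushed to the leaves. Second --- and this is the ingredient your proposal is missing --- Valiant's theorem (\autoref{thm:vnp structure}) converts the size-$\poly(d)$ circuit for $B$ not into a formula but into an exponential sum over a polynomial-size \emph{formula} verifier; the formula structure is what lets each leaf receive fresh auxiliary variables in a single composition step (\autoref{clm:composed poly}), yielding a polynomial bound overall. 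If you want to rescue your unrolling approach, you would need an analogue of this: some mechanism that makes the unrolled Newton skeleton tree-like at polynomial cost, which is precisely what you have left unproved.
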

As a direct application of our proof of~\autoref{thm:factor ckt ub-intro},  we confirm this conjecture over fields of characteristic zero or sufficiently large. We obtain a simple proof of the following statement. 
\begin{theorem}[Informal]
The class $\VNP$ is closed under taking factors.
\end{theorem}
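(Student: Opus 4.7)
The plan is to reuse the Hensel-lifting construction from the proof of \autoref{thm:factor ckt ub-intro} and observe that each of its steps can be carried out inside $\VNP$, thanks to Valiant's closure of $\VNP$ under coefficient extraction. Let $P \in \VNP$ and $f$ be an irreducible factor of $P$ of degree $d$. After the standard random linear substitutions (all of which preserve $\VNP$-membership), I reduce to the case where $P \in \VNP[\vx, z]$ is squarefree, $P = f \cdot g$ with $\gcd(f, g) = 1$, $f$ is monic of degree $d$ in $z$, and the univariate specialisation $P(\vec 0, z) = \tilde f(z) \tilde g(z)$ itself factors into coprime polynomials.

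Next, Hensel's lemma in the complete local ring $\F[[\vx]][z]$ lifts $\tilde f, \tilde g$ uniquely to $F, G \in \F[[\vx]][z]$ with $P = FG$, and since $\deg_{\vx} f$ is bounded, $f$ is recovered as the truncation of $F$ modulo a high enough power of the maximal ideal $\mathfrak{m} = (\vx)$. The Hensel iteration --- the same one underlying the proof of \autoref{thm:factor ckt ub-intro} --- has the schematic form $F_{k+1} = F_k + v(z) \cdot (P - F_k G_k) \bmod (\mathfrak{m}^{2^{k+1}}, F_k)$, with a parallel update for $G_k$, where $u, v \in \F[z]$ are fixed Bezout cofactors of $\tilde f, \tilde g$; it converges in $O(\log d)$ steps. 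The key point is that each step consists only of (i) polynomial addition and multiplication of $\poly(d)$ many $\VNP$-operands, (ii) truncation modulo $\mathfrak{m}^{2^{k+1}}$, i.e.\ coefficient extraction in $\vx$, and (iii) polynomial division by the monic $F_k$ in $z$, which reduces to polynomial operations and coefficient extractions on the $z$-coefficients. Crucially, the $F_k, G_k$ are maintained as polynomials in $z$ whose $z$-coefficients are elements of $\VNP[\vx]$, so no $\VNP$-polynomial is ever substituted into another --- only $\VNP$-preserving operations act on them, the nontrivial cases (ii) and (iii) being instances of Valiant's theorem.

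Finally, after $O(\log d)$ iterations, $F_k$ agrees with $f$ modulo $\mathfrak{m}^{d+1}$, and a last coefficient extraction recovers $f \in \VNP$. The main obstacle, and the place where the argument must go beyond the earlier quasipolynomial bound of \cite{DSS17}, is keeping the telescoped count of auxiliary variables polynomial: each invocation of Valiant's theorem across the $O(\log d)$ iterations adds a fresh block of $\poly(n, d)$ auxiliary variables, and the geometric growth of truncation precision in Hensel's lemma must be exploited carefully to bound the total by $\poly(n, d)$ rather than $d^{O(\log d)}$.
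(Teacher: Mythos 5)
There is a genuine gap, and it sits exactly where your last paragraph places it: you describe the iterative Hensel-lifting-plus-Valiant strategy and then assert that ``the geometric growth of truncation precision must be exploited carefully'' to bring the auxiliary-variable count down from $d^{O(\log d)}$ to $\poly(n,d)$, but you never do this, and it is not clear that it can be done within your framework. The scheme you outline --- maintain $F_k, G_k$ as polynomials in $z$ with $\VNP$ coefficients and apply Valiant's closure theorem at every one of the $O(\log d)$ quadratic lifting steps --- is precisely the approach of Dutta--Saxena--Sinhababu~\cite{DSS17}, and the blowup there is not caused by the truncation precision but by the representation size: the update $F_{k+1} = F_k + v(z)(P - F_kG_k) \bmod (\mathfrak{m}^{2^{k+1}}, F_k)$ involves the product $F_kG_k$ and, worse, division by the monic $F_k$, which when unrolled uses $\Omega(d)$ arithmetic occurrences of the coefficients of $F_k$. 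Since an exponential sum cannot be substituted into itself (squaring $\sum_{\vy}Q(\vx,\vy)$ is not $\sum_{\vy}Q(\vx,\vy)^2$ --- see the footnote in Section 3), each occurrence needs a fresh block of auxiliary variables carrying a fresh copy of $F_k$'s verifier. The per-step growth is therefore multiplicative in $\poly(d)$, and over $\log d$ steps this telescopes to $d^{O(\log d)}$. A quasipolynomial verifier only places the factor in quasi-$\VNP$, which is the prior result the theorem is meant to improve upon; so as written your argument does not establish closure of $\VNP$.

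The paper avoids this by decoupling the lifting from the $\VNP$ bookkeeping. The structural lemma (\autoref{lem:key lem}, extended to general factors in \autoref{lem:key lem gen}) shows that every approximation $h_k$ produced by the lifting has the form $A_k(g_0,\dots,g_d)$ where the generators $g_i$ are \emph{fixed throughout the iteration} --- they are just truncated derivatives $\h_{\leq d}[\partial^j P/\partial y^j(\vx,\alpha)]$, hence inherit a $\VNP$ representation directly from $P$ by coefficient extraction (\autoref{clm:vnp generators}) --- and all the iterative growth is confined to the outer polynomial $A_k$, which lives on only $d+1$ variables and has a circuit of size $O(d^3)$ that one never needs to express in $\VNP$ form during the iteration. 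Valiant's composition theorem (\autoref{clm:composed poly}) is then invoked \emph{once}, at the end, on $B(g_1,\dots,g_t)$: one first converts $B$ to a small formula-based exponential sum so that each leaf can be given its own disjoint block of auxiliary variables, yielding a polynomial bound in a single shot. If you want to salvage your route, you would need to prove a per-step bound in which the verifier size grows only additively, which the division step appears to rule out.
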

The main technical statement which immediately gives us  this closure result is the following theorem.
\begin{theorem}\label{thm:vnp closure-intro}
Let $\F$ be a field of characteristic zero. Let $P(\vx)$  be a polynomial of degree $r$ over $\F$, and let $Q(\vx, \vy)$ be a polynomial in $n + m$ variables such that
\[
P(\vx) = \sum_{\vy \in \{0,1\}^m} Q(\vx, \vy) \, , \text{ and }
\]
$Q$ can be computed by a circuit of size $s$. Let $f$ be any irreducible factor of $P$ of degree $d$. Then, there exists an $m' \leq \poly(s, r, d, n, m)$ and polynomial $h(x_1, x_2, \ldots, x_n, z_1, z_2, \ldots, z_{m'})$ where $h(\vx, \vz)$ can be computed by a circuit of size at most $s' \leq \poly(s, r, d, n, m)$ such that
\[
f(\vx) = \sum_{\vz \in \{0,1\}^{m'}} h(\vx, \vz) \, .
\]
\end{theorem}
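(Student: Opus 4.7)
The plan is to revisit the proof of~\autoref{thm:factor ckt ub-intro} and observe that it in fact expresses the factor $f$ as the output of a small arithmetic circuit applied to $P$ (and to polynomials derived from $P$ such as partial derivatives or shifts, each of which inherits a $\VNP$ representation from that of $P$). Combining this with the elementary closure of $\VNP$ under polynomial-size arithmetic circuits will immediately yield the desired $\VNP$ representation of $f$.

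Concretely, I would proceed in three steps. First, by applying a random invertible affine change of variables to $\vx$, reduce to the case where $P$ is in a form amenable to Hensel lifting, i.e., monic in some distinguished variable with a coprime factorization modulo the other variables. A linear substitution sends $Q(\vx, \vy)$ to a polynomial $Q'(\vx, \vy)$ of comparable size, so the hypothesis $P \in \VNP$ is preserved. Second, invoking (the proof of)~\autoref{thm:factor ckt ub-intro}, I obtain $f$ as a polynomial-size arithmetic expression $\Psi$ in $P$ and the partial derivatives $\partial P / \partial x_i$, built out of Hensel/Newton iteration steps whose gates are additions, multiplications, and inversions truncated to the needed degree; the truncated inverses can themselves be computed by Newton-style doubling, yielding an overall circuit of size $\poly(s, r, d, n)$ and polynomial fan-out in its $P$-inputs. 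Third, substitute $P = \sum_{\vy} Q(\vx, \vy)$, as well as the analogous $\VNP$ representations of $\partial P/\partial x_i$ obtained by differentiating $Q$, into $\Psi$, using a disjoint fresh copy $\vy^{(i)}$ of the auxiliary variables for each of the polynomially many occurrences. Pushing all the $\vy^{(i)}$-sums to the outside via $\bigl(\sum_{\vy^{(1)}} A\bigr)\bigl(\sum_{\vy^{(2)}} B\bigr) = \sum_{\vy^{(1)}, \vy^{(2)}} A B$ for products and a selector auxiliary bit for sums, we obtain $f(\vx) = \sum_{\vz \in \{0,1\}^{m'}} h(\vx, \vz)$ with $m'$ and the circuit size of $h$ both $\poly(s, r, d, n, m)$.

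The main obstacle will be to verify that the construction in~\autoref{thm:factor ckt ub-intro} really does produce $f$ as a small arithmetic circuit \emph{in $P$} with only polynomial fan-out, as opposed to merely asserting the existence of some small circuit for $f$ whose construction happens to depend on the circuit for $P$. For a Hensel-lifting / Newton iteration based proof this should be essentially transparent, since each lifting step is itself a small arithmetic expression in the current approximation and in $P$, iterated at most $O(\log d)$ times; but care must be taken when truncating inverses and when handling shared Bezout-type cofactors so that the total number of occurrences of $P$ in the final expression for $f$ remains polynomial in the parameters. A secondary, routine concern is ensuring polynomial bounds on the auxiliary variable count $m'$ during the substitution, which follows from the fact that $m'$ is controlled by the number of occurrences of $P$ in $\Psi$ times $m$.
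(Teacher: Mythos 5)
Your overall strategy coincides with the paper's: express $f$ as a small arithmetic expression in polynomials derived from $P$ (truncated $y$-derivatives of a monic shift of $P$, evaluated at suitable points), observe that each of these inherits a $\VNP$ representation from $Q$, then substitute and push the Boolean sums outward. But there is a genuine gap in your third step. The identities $\bigl(\sum_{\vy^{(1)}} A\bigr)\bigl(\sum_{\vy^{(2)}} B\bigr) = \sum_{\vy^{(1)},\vy^{(2)}} AB$ and its additive analogue can only be applied bottom-up on a \emph{formula}: every subexpression must occur exactly once so that it can carry its own private block of auxiliary variables. Your $\Psi$ is a circuit, and ``polynomial fan-out at the $P$-inputs'' does not control the relevant quantity: an internal gate of $\Psi$ whose output (now of the form $\sum_{\vy} A(\vx,\vy)$) feeds two product gates forces you to duplicate the entire subcircuit below it with fresh auxiliary variables, and iterating this is exactly circuit-to-formula conversion. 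For the combining polynomial that arises here (a $\poly(d)$-size circuit computing a degree-$d$ polynomial in the generators), the naive conversion costs $d^{O(\log d)}$, which reproduces only the quasi-polynomial bound of Dutta--Saxena--Sinhababu rather than the polynomial bound claimed. Relatedly, your picture of $O(\log d)$ quadratically-converging Newton steps with truncated power-series inversion is the Dutta et al.\ route; the paper instead lifts one homogeneous degree at a time ($d$ steps total, \autoref{lem:vanilla hensel}) and divides only by a nonzero field \emph{constant}, so no truncated inversion is needed.

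The missing idea is to apply Valiant's criterion to the outer combining circuit \emph{before} substituting: a degree-$d$ polynomial $B(z_1,\ldots,z_t)$ with a $\poly(d)$-size circuit can be rewritten as $B(\vz) = \sum_{\vy\in\{0,1\}^{a}} B'(\vz,\vy)$ with $a \leq \poly(d)$ where, crucially, $B'$ has a polynomial-size \emph{formula} (\autoref{thm:vnp structure}, via the depth reduction of \cite{vsbr83}). One then replaces each leaf of the formula for $B'$ labelled by a generator $z_j$ with the verifier circuit of that generator's $\VNP$ representation, using a fresh copy of auxiliary variables per leaf; since the formula has polynomially many leaves, the auxiliary-variable count stays polynomial and the sum-pushing identities apply legitimately. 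A final routine step, which you omit, is extracting $\h_{\leq d}$ of the resulting composition: homogenization with respect to the $\vx$-variables commutes with the Boolean sum over the auxiliary variables and costs only a $\poly(d)$ factor by \autoref{thm:homog strassen}.
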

We remark that in the proof of the above theorem, our techniques can be replaced by analogous statements from~\cite{DSY09, O16}. Although this is a simple observation, this does not appear to have been noticed prior to this work. The best upper bound on the complexity of factors of polynomials in $\VNP$ in prior  work is a recent result of Dutta, Saxena, Sinhababu~\cite{DSS17}, who showed a bound of $\poly(n, r, s, m, d^{O(\log d)})$ on the number of auxiliary variables and the circuit complexity of verifier polynomials $h$. 

As an easy consequence of our proofs, we also obtain another (slightly different) proof of the following result of Dutta et al~\cite{DSS17}. 
\begin{theorem}[\cite{DSS17}]
Let $P(\vx)$ be a polynomial of degree $r$ in $n$ variables which can be computed by an arithmetic formula (resp. algebraic branching program) of size at most $s$, and let $f(\vx)$ be a factor of $P$ of degree at most $d$. Then, $f(\vx)$ can be computed by an arithmetic formula (resp. algebraic branching program) of size at most $\poly(s, r, n, d^{O(\log d)})$. 
\end{theorem}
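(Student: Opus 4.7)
The plan is to adapt the Newton-iteration strategy that underlies~\autoref{thm:factor ckt ub-intro} and~\autoref{thm:root ckt ub} to the formula and ABP models, with the role of depth reduction replaced by the fact that formulas (resp.~ABPs) are closed under sum, product, substitution, and homogeneous-component extraction with only $\poly(d)$ blow-up. The target slack of $d^{O(\log d)}$ is exactly what one can afford when each of the $O(\log d)$ Newton rounds inflates the size by a factor of $\poly(d)$.

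My first step would be the standard reduction to root-finding: singling out one variable (call it $y$) and writing the remaining variables as $\vx$, apply a generic affine change of variables so that $P(\vx, y)$ and the irreducible factor $f(\vx, y)$ are both monic in $y$, $\deg_y f \le d$, and $f(\vec{0}, y)$ is separable and coprime to the cofactor $P(\vec{0}, y)/f(\vec{0}, y)$. Fixing a root $\phi_0$ of $f(\vec{0}, y)$, Hensel's lemma provides a unique power series $\phi(\vx) \in \overline{\F}[[\vx]]$ with $\phi(\vec{0}) = \phi_0$ and $P(\vx, \phi(\vx)) = 0$. Since $\deg f \le d$, it suffices to compute $\phi$ modulo $\langle \vx \rangle^{d+1}$; afterwards, $f(\vx, y)$ is reconstructed as $\prod_j (y - \phi^{(j)}(\vx))$ over the Galois conjugates of $\phi_0$.

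I would compute $\phi$ by the Newton iteration
\[
\phi_{i+1} \;\equiv\; \phi_i \;-\; \frac{P(\vx, \phi_i)}{P_y(\vx, \phi_i)} \pmod{\langle \vx \rangle^{2^{i+1}}},
\]
run for $O(\log d)$ rounds. The crux is a size analysis for each round in the formula/ABP model: substituting the current iterate into $P$ is multiplicative in size; truncating a size-$S$ formula (or ABP) of degree $D$ to its homogeneous parts up to degree $D'$ costs $\poly(S, D, D')$ via interpolation along a fresh scaling variable; and inverting the power series $P_y(\vx, \phi_i)$ to precision $2^{i+1}$ is itself an inner Newton iteration with $\poly(s, r, n, d)$ overhead. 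Consequently, each outer round multiplies the size by $\poly(s, r, n, d)$, and after $O(\log d)$ rounds the final iterate has size at most $\poly(s, r, n) \cdot d^{O(\log d)}$, matching the claimed bound.

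The main obstacle is the careful bookkeeping required in the size analysis: one has to verify that both homogeneous-component extraction and power-series inversion are only $\poly(d)$-expensive in the formula and ABP models. The former rests on a folklore but nontrivial closure lemma (interpolation along an auxiliary variable), while the latter requires nesting an inner Newton iteration inside the outer one without compounding the blow-up beyond $d^{O(\log d)}$.
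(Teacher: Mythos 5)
There is a genuine gap in your size analysis, and it sits exactly at the point where the claimed bound is delicate. You say that each of the $O(\log d)$ Newton rounds multiplies the formula (or ABP) size by a factor of $\poly(s, r, n, d)$, and conclude a final bound of $\poly(s,r,n)\cdot d^{O(\log d)}$. But compounding a multiplicative factor of $\poly(s,r,n,d)$ over $O(\log d)$ rounds gives $(srnd)^{O(\log d)}$, which is quasipolynomial in $s$, $r$ and $n$ --- not $\poly(s, r, n, d^{O(\log d)})$. In the regime this theorem is actually about (e.g.\ $d = O(1)$ or $d = \polylog(n)$ with $s = \poly(n)$), the claimed bound is $\poly(s,r,n)$ while your analysis only yields $s^{O(\log d)}$, which is superpolynomial. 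The substitution $y \mapsto \phi_i$ into $P$ genuinely costs a factor of $s$ in the formula model, and your inner Newton iteration for inverting the power series $P_y(\vx,\phi_i)$ costs more; so the per-round factor really is at least $s$, and the naive iteration does not close.

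The missing idea --- present in both~\cite{DSS17} and in the paper's proof --- is to decouple the expensive part (anything depending on $P$) from the part that is iterated. The paper fixes once and for all a set of $O(d^2)$ \emph{generators} $g_i$ (truncated $y$-derivatives of $P$ evaluated at the roots $\alpha_i$ of $f(\mathbf{0},y)$), each of which has a formula of size $\poly(s,r,n)$ by~\autoref{lem:interpolation formulas}; the lifting (\autoref{lem:vanilla hensel}, \autoref{lem:key lem}) then only updates a \emph{combining polynomial} in these $O(d^2)$ abstract variables, whose circuit grows by $O(d^2)$ gates per round, yielding via~\autoref{lem:key lem gen} a $\poly(d)$-size, degree-$d$ circuit $B$ with $f = \h_{\leq d}[B(g_1,\ldots,g_t)]$. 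The $d^{O(\log d)}$ is then paid exactly once, by converting the circuit for $B$ into a formula via~\autoref{thm:circuit to formulas} before composing with the generators. (This also sidesteps your power-series inversion: the paper's lifting divides only by the nonzero constant $\delta = \h_0[P'(f)]$.) Your proof would be repaired by establishing the analogous invariant --- that only a $\poly(d)$-size object is updated per round, with all $\poly(s,r,n)$-size objects computed once --- rather than re-substituting into $P$ at every iteration.
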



\section{Proof Overview}\label{sec:overview}
The key technical ingredients of our results in this paper is~\autoref{thm:root ckt ub}. We start by describing the main steps in its proof. 
\paragraph*{Proof sketch of~\autoref{thm:root ckt ub}. }
Our proof of~\autoref{thm:root ckt ub}  follows the outline of the proof of the analogous theorem about the structure of roots in~\cite{DSY09}. We now outline the main steps, and point out the differences between the proofs. The first step in the proof is to show that one can use the standard Hensel Lifting to iteratively obtain better approximations of the root $f$ given a circuit for $P(\vx, y)$. More formally, in the $k^{th}$ step, we start with a polynomial $h_k$  which agrees with $f$ on all monomials of degree at most $k$, and use it to obtain a polynomial $h_{k+1}$ which agrees with $f$ on all monomials of degree at most $k+1$. Moreover, the proof shows that if $h_k$ has a small circuit, then $h_{k+1}$ has a circuit which is only slightly larger than that of $h_k$. This iterative process starts with the constant term of $f$, which trivially has a small circuit. Thus, after $d$ iterations, we have a polynomial $h_d$ such that the root $f$ is the sum of the homogeneous components of $h_d$ of degree at most $d$. This lifting step is exactly the same as that in~\cite{DSY09} or in some of the earlier works on polynomial factorization~\cite{B04}, and is formally stated in~\autoref{lem:vanilla hensel}. 

The key insight of Dvir et al.~\cite{DSY09} was that if $\deg_y(P) = t$, and $C_0(\vx), C_1(\vx), \ldots, C_t(\vx)$ are polynomials such that $P(\vx, y) = \sum_{i = 1}^t C_i(\vx)y^t$, then for every $k \in \set{0, 1, \ldots, d}$, we have a polynomial $B_k$ of degree at most $k$ such that \[
h_k(\vx) = B_k(C_0(\vx), C_1(\vx), \ldots, C_t(\vx)) \, .
\]
Now, consider the case when $t << n$ (for instance $t = O(1)$). It follows from standard interpolation results for low depth circuits (see~\autoref{lem:interpolation}) that each of the polynomials $C_i(\vx)$ has a circuit of size $O(sr)$ and depth $\Delta$ since $P$ has a polynomial of size $s$ and depth $\Delta$. Thus, $h_d(\vx)$ can be written as a sum of at most $\binom{d+t}{t} = O(d^t)$ monomials if we treat each $C_i$ as a formal variable. Plugging in the small depth $\Delta$ circuits for each $C_i$, and standard interpolation (\autoref{lem:interpolation}), it follows that $f$ has a circuit of size $\poly(s, n, d^t)$ of depth $\Delta + O(1)$. 

Observe that this size bound of $\poly(s, n, d^t)$ is small only when $t$ is small. For instance, when $t > n$, this bound becomes trivial. Our key observation is that independently of $t$, there is a set of $d+1$ polynomials $g_0(\vx), g_1(\vx), \ldots, g_d(\vx)$ of degree at most $d$, and polynomials $A_0, A_1, \ldots, A_k$ on $d+1$ variables such that for every $k \in \set{0, 1, \ldots, d}$,
\[
h_k(\vx) = A_k(g_0(\vx), g_1(\vx), \ldots, g_d(\vx)) \, .
\]
Moreover, for every $k$, $A_k$ has degree at most $k$ and is computable by a circuit of size at most $O(d^3)$.  Also, each of these generators $g_i$ can be computed by a circuit of size $\poly(s, r)$ and depth $\Delta$. Thus, expressing $A_d(z_0, z_1, \ldots, z_d)$ as a sum of monomials, and then composing this representation with the circuits for $g_0, g_1, \ldots, g_d$ would give us a circuit of size $\poly(s, n, r, d, 4^d)$ of depth $\Delta + O(1)$. To get a sub-exponential dependence on $d$ in the size, we do not write $A_d(z_0, z_1, \ldots, z_d)$ as $\sum\prod$ circuit of size $O(4^d)$, but instead express it as a $\sum\prod\sum$ circuit of size at most $d^{O(\sqrt{d})}$, using the depth reduction result of~\cite{gkks13b}\footnote{See~\autoref{thm:depth-3 chasm} for a formal statement of this result.}. 

One point to note is that  just from Kaltofen's result~\cite{k89}, it follows that $f$ has an arithmetic circuit\footnote{Of potentially very large depth.} of size $\poly(n)$. Thus, from~\autoref{thm:depth-3 chasm}, it follows that $f$ has a circuit of depth-$3$ of size at most $n^{O(\sqrt{d})}$. The key advantage of~\autoref{thm:root ckt ub} over this bound is  that the exponential term is $d^{O(\sqrt{d})}$ and not of the form $n^{d^{\epsilon}}$. For $d \leq \log^2 n/\log^2\log n$, $d^{O(\sqrt{d})}$ is bounded by a polynomial in $n$ and so the final bound is at most $\poly(n)$. 
\paragraph*{Proof sketch of~\autoref{thm:factor ckt ub-intro}. }
To get~\autoref{thm:factor ckt ub-intro} from~\autoref{thm:root ckt ub}, we also have to upper bound the complexity of factors which are not of the form $y-f(\vx)$, i.e. are  non-linear in every variable. This involves the use of some standard techniques in this area.  We first preprocess $P$ such that it is monic in $y$,  and then we work over the algebraic closure of the field $\F[\vx]$, and view $P$ as a univariate in $y$ over this field. We then use~\autoref{lem:vanilla hensel} to approximate these roots by polynomials, and eventually combine them using~\autoref{lem:combining approx roots} from~\cite{O16} to obtain the factor $f$. 
We get bounds on the circuit size and depth of the factor $f$ by keeping tab on the growth of these parameters in each step of the outlined algorithm. 

\paragraph*{Proof sketch of~\autoref{thm:main thm}. }
\autoref{thm:root ckt ub}, when combined with the standard machinery of Nisan-Wigderson designs immediately yields~\autoref{thm:main thm}.

\paragraph*{Proof sketch of~\autoref{thm:vnp closure-intro}. }
For the proof of~\autoref{thm:vnp closure-intro}, we follow the same outline as above to conclude that every factor $f$ of a polynomial $P = \sum_{\vy \in \{0,1\}^{m}} {Q}(\vx, \vy)$ can be written as
\[
f(\vx) = \h_{\leq d}\left[B(g_0(\vx), g_1(\vx), \ldots, g_d(\vx)) \right]\, .
\] 
where $B$ has a circuit of size $\poly(d)$ degree at most $d$ and each polynomial $g_i$ can be written as $\sum_{\vy \in \{0,1\}^{m'}} \tilde{Q}_i(\vx, \vy)$, where the number of auxiliary variables $m'$ and the circuit size of $Q$ are each less than $\poly(s, n, m, d, r)$, where $s$ is the circuit size of $Q$, $r$ is the degree of $P$. The proof follows from a result of Valiant~\cite{v82}, where he showed that compositions such as $B(g_0(\vx), g_1(\vx), \ldots, g_d(\vx))$ can be written in the form $\sum_{\vy \in \{0,1\}^{m'}} Q'(\vx, \vy)$
with $m''$ and the circuit complexity of $Q'$ being at most $\poly(s, n, m, d, r)$. 

Note that composing $B$ and $g_i$ into the above form is not straightforward since direct replacement of $g_0$ with $\tilde{Q}_i$ might not work\footnote{Consider the following toy example: Let $B$ be a multiplication gate with two inputs from the same sub-circuit $g_0(\vx)$, \textit{i.e.,} $B(g_0(\vx))=g_0(\vx)^2$. However, if we directly replace $g_0(\vx)$ with $\tilde{Q}_0$, we would get $\sum_{y\in\{0,1\}^{m'}}\tilde{Q}_i(\vx,\vy)^2$, which might not be $g_0(\vx)$.}. For completeness, we include a proof of this using the depth reduction results in~\cite{vsbr83} (See~\autoref{thm:vnp structure} and~\autoref{clm:composed poly} and the appendix for the proof.).

We remark that the proof outlines above bounds the complexity of the factor $f$ once at the end of the lifting, whereas  in~\cite{DSS17},  the authors prove an upper bound on the number of auxiliary variables and the circuit complexity of the verifier circuit for the approximation of the factor of $P$ at the end of each step of the lifting process. They show that in every step of lifting, these parameters grow only by a multiplicative factor of at most $d^2$, and  there are $O(\log d)$ steps of lifting in total, hence the total blow up of $d^{O(\log d)}$ in the process. In contrast, we get a polynomial upper bound on the blow up in the number of auxiliary variables, and the circuit size of the  verifier circuit for the factor $f$, by a one step analysis. 

Another crucial point to note is that~\autoref{thm:vnp closure-intro} also follows if in the approach outlined above, we replace our structure theorem for the structure of low degree factors by an analogous statement in~\cite{DSY09} and~\cite{O16}. This is because, the degree of the factor we are seeking and the depth of the circuit obtained for the factor do not play a critical role in this proof as long as they are not too large. Thus, closure of $\VNP$ under taking factors follows from the results known prior to this work, although as far as we know, this does not seem to have been noticed before.




\section{Preliminaries}\label{sec:prelims}
We start by setting up some notation and stating some basic definitions and results from prior work which will be used in our proofs. 
\subsection{Notations}
\begin{itemize}
\item We use boldface letters $\vx, \vy, \vz $ to denote tuples of variables. 
\item For  a function $s(n):\N \rightarrow \N$, we say that $s(n) \leq \poly(n)$, if there are constants $n_0, a \in \N$ such that $\forall n > n_0$, $s(n) \leq n^{a}$.
\item For a polynomial $P$, $\deg(P)$ denotes the total degree of $P$ and $\deg_y(P)$ denotes the degree of $P$ with respect to the variable $y$. 
\item Let $P\in \F[\vx]$ be a polynomial of degree equal to $d$. For every $k \in \N$, $\h_k\left[P\right]$ denotes the homogeneous component of $P$ of degree $k$. Similarly, $\h_{\leq k}\left[P \right]$ is defined to be equal $\sum_{i = 0}^k \h_i[P]$. 

\item We say that a polynomial $f$ is a factor of a polynomial $P$ of multiplicity equal to $m$, if $f^m$ divides $P$, and $f^{m+1}$ does not divide $P$. 
\end{itemize}
\subsection{Arithmetic Circuits}
\begin{definition}[Arithmetic Circuits]
An arithmetic circuit $\Psi$ over a field $\F$ and variables $\vx = (x_1, x_2, \ldots, x_n)$ is a directed acyclic graph, with the gates of in-degree zero (called leaves) being labeled by elements in $\F$ and variables in $\vx$, and the internal nodes being labeled by $+$ (sum gates) or $\times$ (product gates). The vertices of out-degree zero in $\Psi$ are called output gates. The circuit $\Psi$ computes a polynomial in $\F[\vx]$ in a natural way : the leaves compute the polynomial equal to its label. A sum gate computes the polynomial equal to the sum of the polynomials computed at its children, while a product gate computes the polynomial equal to the product of the polynomials computed at its children. 
\end{definition}
For an arithmetic circuit $\Psi$, we use $\size(\Psi)$ to denote the number of wires in $\Psi$. The depth of $\Psi$ is the length of the longest path from any output gate to any input gate. Throughout this paper, we assume that all our circuits are layered with alternating layers of addition and multiplication gates. Moreover, we always assume that the top layer is of addition gates. For instance, a depth-$3$ circuit is of the form $\sps$ and a depth-$4$ circuit is of the form $\sps\prod$.
\subsection{Derivatives}
We start by defining derivatives of a polynomial. For the ease of presentation, we work with the notion of the slightly non-standard notion of \emph{Hasse} derivatives even though we work with fields of characteristic zero. 
\begin{definition}[Derivatives]\label{def:derivative}
Let $\F$ be any field and let $P(y) \in \F[y]$ be a polynomial. Then for every $k \in \N$, the partial derivative of $P$ of order $k$ with respect to $y$ denoted by $\frac{\partial^k P(y)}{\partial y^k}$ or $P^{(k)}(y)$ is defined as the coefficient of $z^k$ in the polynomial $P(y + z)$.  
\end{definition}
We also use $P'(y)$ and $P''(y)$ to denote the first and second order derivatives of $P$ respectively.
An immediate consequence of this definition is the following lemma. 
\begin{lemma}[Taylor's expansion]\label{lem:taylor}
Let $P(y) \in \F[y]$ be a polynomial of degree $d$. Then,
\[
P(y + z) = P(y) + z\cdot P'(y) + z^2 \cdot P^{(2)}(y) + \cdots + z^d\cdot P^{(d)}(y) \, .
\]
\end{lemma}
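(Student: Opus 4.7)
The plan is to observe that this lemma is essentially a repackaging of the definition of the Hasse derivatives $P^{(k)}$ given just above in Definition (derivative). All of the work is hidden in verifying that the natural expansion of $P(y+z)$ as a polynomial in $z$ terminates at degree $d$, and then reading off the coefficients.

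First, I would view $P(y+z)$ as a polynomial in the single variable $z$ over the coefficient ring $\F[y]$. Since $P$ has degree $d$ in $y$, and substituting $y+z$ for $y$ into a monomial $y^i$ with $i \le d$ yields $(y+z)^i$, which has total degree $i \le d$ in the two variables $y,z$ and in particular degree at most $d$ in $z$, the same bound holds after summing over the monomials of $P$. Hence there exist unique polynomials $Q_0(y), Q_1(y), \ldots, Q_d(y) \in \F[y]$ such that
\[
P(y+z) \;=\; \sum_{k=0}^{d} z^k \, Q_k(y).
\]

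Next, I would invoke the definition of Hasse derivative: by Definition (derivative), $P^{(k)}(y)$ is \emph{defined} to be the coefficient of $z^k$ in $P(y+z)$. Comparing with the display above, uniqueness of the expansion forces $Q_k(y) = P^{(k)}(y)$ for each $k \in \{0,1,\ldots,d\}$, and substitution yields the claimed identity
\[
P(y+z) \;=\; P(y) + z\,P'(y) + z^2\,P^{(2)}(y) + \cdots + z^d\,P^{(d)}(y).
\]

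The main (and only) obstacle is really the bookkeeping step of bounding the degree in $z$ by $d$, so that the sum terminates at the right place; everything else is definitional. Once that degree bound is in hand, the identity is immediate from the definition of $P^{(k)}$, with no further manipulation required.
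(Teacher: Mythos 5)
Your proof is correct and matches the paper's intent exactly: the paper states this lemma as ``an immediate consequence'' of the definition of Hasse derivatives and gives no further argument, and your write-up simply makes explicit the degree bound in $z$ and the identification of the coefficients $Q_k$ with $P^{(k)}$. Nothing further is needed.
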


\subsection{Depth Reductions}
We will use the following depth reduction theorems as a blackbox for our proofs. 
\begin{theorem}[Depth reduction to depth-$2k$~\cite{av08, koiran, Tav15}]\label{thm:depth-2k chasm}
Let $k$ be a positive integer and $\F$ be any field. If $P(\vx) \in \F[\vx]$ is an $n$-variate polynomial of degree $d$ that be computed by an arithmetic circuit $\Psi$ of size at most $s$, then $P$ can be computed by a depth $2k$ circuit of size at most $(snd)^{O(d^{1/k})}$.
\end{theorem}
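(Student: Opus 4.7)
The plan is to prove~\autoref{thm:depth-2k chasm} by induction on $k$, using iterative applications of the VSBR (Valiant--Skyum--Berkowitz--Rackoff) depth-reduction lemma with a tunable threshold. First I would homogenize $\Psi$ by the standard trick, which blows up the circuit size by a factor of $O(d^2)$ and leaves the polynomial computed unchanged; so I may assume $\Psi$ is homogeneous of size at most $\poly(s, d)$. The base case $k=1$ is immediate: a degree-$d$ polynomial in $n$ variables has at most $\binom{n+d}{d} \le (nd)^{O(d)}$ monomials, which gives a depth-$2$ ($\sum\prod$) circuit of size $(snd)^{O(d)}$ as required.

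The main engine for the inductive step is the following parameterized form of the VSBR lemma: for any threshold $t \in [1, d]$, the polynomial $P$ admits a decomposition
\[
P \;=\; \sum_{i=1}^m Q_i \cdot R_i,
\]
where $m \le \poly(s, d)$, each $R_i$ has degree in the window $[t, 2t]$, $Q_i$ has degree $d - \deg(R_i)$, and each of $Q_i, R_i$ is computed by a sub-circuit of size at most $\poly(s, d)$. Iterating this decomposition on the residual $Q_i$'s until their degrees fall below $t$ yields, after $O(d/t)$ rounds, a representation
\[
P \;=\; \sum_{\vec{i}} \prod_{j=1}^{O(d/t)} R_{\vec{i}, j}
\]
in which each $R_{\vec{i}, j}$ has degree at most $2t$ and the total number of summands is $(sd)^{O(d/t)}$. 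For the inductive step I would apply this with threshold $t = d^{(k-1)/k}$: the outermost $\sum\prod$ layer then has size $(sd)^{O(d^{1/k})}$, and each of its ``leaves'' is a polynomial of degree $d' \le 2 d^{(k-1)/k}$ computed by a sub-circuit of size $\poly(s, d)$. By the inductive hypothesis applied with parameter $k-1$ and degree $d'$, each such leaf admits a depth-$2(k-1)$ circuit of size $(\poly(s, d) \cdot n \cdot d')^{O((d')^{1/(k-1)})} = (snd)^{O(d^{1/k})}$, since $(d')^{1/(k-1)} = O(d^{1/k})$. Composing the top $\sum\prod$ with these sub-circuits yields a depth-$2k$ circuit of total size $(snd)^{O(d^{1/k})}$.

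The main obstacle is the parameterized VSBR decomposition itself, which goes back to Valiant--Skyum--Berkowitz--Rackoff and was sharpened by Koiran and Tavenas to yield the exact dependence on $n$, $d$, and $s$ needed here. Its proof proceeds by placing a topological order on the gates of $\Psi$ and selecting ``break points'' so that each $R_i$ has degree in the target window $[t, 2t]$; I would invoke this as a black box rather than reprove it. A secondary concern is the constant inside the $O(\cdot)$ in the final exponent: each of the $k-1$ recursive levels contributes an additive $O(d^{1/k})$ term, so the hidden constant scales with $k$, which is consistent with the stated bound for constant $k$. In the regime where $k$ itself grows with $d$, one would instead apply the decomposition once at each of the $k$ matched thresholds $t_\ell = d^{(k-\ell)/k}$ in a single non-recursive pass and then perform a monomial expansion at the innermost level of degree $d^{1/k}$; this gives the same asymptotic size bound.
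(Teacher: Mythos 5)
The paper does not prove this theorem at all: it is imported verbatim as a black box from the cited references (Agrawal--Vinay, Koiran, Tavenas), so there is no in-paper argument to compare against. Your sketch reconstructs the standard proof from those references and is essentially sound: homogenize, apply the frontier/gate-quotient decomposition at threshold $t$ to write $P=\sum_i Q_i R_i$ with $\deg(R_i)\in[t,2t]$, iterate $O(d/t)$ times to get a $\sum\prod$ of $(sd)^{O(d/t)}$ products of low-degree factors, and recurse on the factors with $t=d^{(k-1)/k}$; the depth telescopes to $2+2(k-1)=2k$ and the sizes multiply out to $(snd)^{O(d^{1/k})}$. Two small points of care. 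First, the $Q_i$ in the decomposition are gate quotients $[P:g]$, which admit $\poly(s,d)$-size circuits derived from $\Psi$ but are not literally sub-circuits of $\Psi$; this does not affect the argument, since all the induction needs is a $\poly(s,d)$-size circuit for each factor. Second, you correctly identify that the recursion makes the constant in the exponent grow with $k$, which is harmless here (the paper only ever invokes the theorem with $k=2$), and your proposed single-pass variant with matched thresholds is the standard remedy when $k$ is not constant. The genuinely hard content --- the existence of the degree-windowed frontier decomposition with small circuits for both factors --- is invoked rather than proved, which is the same level of rigor at which the paper itself operates.
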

Invoked with $k = 2$ the above theorem gives a circuit of depth $4$ for the polynomial $P$ of size $s^{O(\sqrt{d})}$. The next depth reduction result gives a further reduction to depth-$3$, as long as the field is of characteristic zero, and will be useful for our proof. 
\begin{theorem}[Depth reduction to depth-3~\cite{gkks13b}]\label{thm:depth-3 chasm}
Let $\F$ be a field of characteristic zero. Let $P(\vx) \in \F[\vx]$ be an $n$-variate polynomial of degree $d$ that can be computed by an arithmetic circuit $\Psi$ of size at most $s$. Then, $P$ can be computed by a $\sps$ circuit of size at most $(snd)^{O(\sqrt{d})}$.
\end{theorem}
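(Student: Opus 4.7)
The plan is to combine \autoref{thm:depth-2k chasm} invoked with $k = 2$ with two classical symmetric-function identities, both valid only in characteristic zero because they involve dividing by factorials, which successively flatten a depth-$4$ circuit with bounded bottom product fan-in into a depth-$3$ circuit. Writing $\wedge^{[e]}$ for a formal power gate $L\mapsto L^{e}$ (which is itself a product gate of fan-in $e$), a $\Sigma\wedge\Sigma$ circuit is already a $\sps$ circuit of the expected size, so it suffices to produce a $\Sigma\wedge\Sigma$ representation of $P$ of size $(snd)^{O(\sqrt{d})}$.

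First I would apply \autoref{thm:depth-2k chasm} with $k = 2$, and inspect the underlying Agrawal--Vinay / Koiran / Tavenas proof in order to arrange the resulting depth-$4$ circuit in the form $\Sigma\Pi^{[O(\sqrt{d})]}\Sigma\Pi^{[O(\sqrt{d})]}$, with both product fan-ins $O(\sqrt{d})$ and with the bottom linear forms depending only on $\vx$. Next I would apply Fischer's identity
\[
X_{1}X_{2}\cdots X_{m} \;=\; \frac{1}{m!\,2^{m}}\sum_{\vaa\in\{\pm 1\}^{m}}\Bigl(\prod_{i}\alpha_{i}\Bigr)\Bigl(\sum_{i}\alpha_{i}X_{i}\Bigr)^{m}
\]
to every bottom product gate, replacing a product of at most $O(\sqrt{d})$ linear forms by a sum of at most $2^{O(\sqrt{d})}$ powers of linear forms. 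This converts the circuit to the form $\Sigma\Pi^{[O(\sqrt{d})]}\Sigma\wedge\Sigma$, whose size is still $(snd)^{O(\sqrt{d})}$.

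The last and most delicate step is to collapse each top gate $\prod_{j=1}^{O(\sqrt{d})}\sum_{k}\alpha_{jk}L_{jk}^{e_{jk}}$ into a single $\Sigma\wedge\Sigma$ subcircuit. A naive expansion of the outer product blows the size up to $2^{\Omega(d)}$, so instead I would use a Waring-type identity that, over characteristic zero, expresses any product $L_{1}^{e_{1}}\cdots L_{m}^{e_{m}}$ of $m = O(\sqrt{d})$ powers of linear forms with $\sum_{i} e_{i} \leq d$ as a linear combination of at most $d^{O(\sqrt{d})}$ single $d$-th powers $M_{i}^{d}$ of new linear forms. Substituting this representation effectively pushes the outer product inside the power gates, collapses the middle two layers, and yields a $\Sigma\wedge\Sigma$ (hence $\sps$) representation of $P$ of size $(snd)^{O(\sqrt{d})}$.

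The main obstacle is exactly this last Waring-type collapse: the standard distribution of the outer product across the inner sums is much too wasteful, and the clever identity one actually uses requires dividing by $d!$, which is why the theorem is stated, and probably needs to be stated, over fields of characteristic zero.
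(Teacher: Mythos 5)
The paper never proves this statement; it is imported verbatim as a black box from~\cite{gkks13b}, so your proposal has to be measured against the known Gupta--Kamath--Kayal--Saptharishi argument. Your first two steps match it: reduce to a $\Sigma\Pi^{[O(\sqrt{d})]}\Sigma\Pi^{[O(\sqrt{d})]}$ circuit via~\autoref{thm:depth-2k chasm}, then use Fischer's identity on the bottom product gates to reach $\Sigma\Pi^{[O(\sqrt{d})]}\Sigma\wedge\Sigma$. The gap is in your last step. The Waring-type identity you invoke applies to a \emph{single product of powers of linear forms} $L_1^{e_1}\cdots L_m^{e_m}$, and for that object your rank bound $\prod_i(e_i+1)\leq d^{O(\sqrt{d})}$ is indeed correct over characteristic zero. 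But the gate you need to flatten is $\prod_{j=1}^{O(\sqrt{d})}\bigl(\sum_k \alpha_{jk}L_{jk}^{e_{jk}}\bigr)$, where each middle sum has fan-in as large as $(snd)^{O(\sqrt{d})}$ (the middle $\Sigma$ fan-in of the Tavenas circuit, further multiplied by the $2^{O(\sqrt{d})}$ terms Fischer introduces). To expose a product of powers of linear forms you must first distribute the outer product over these sums, which produces $\bigl((snd)^{O(\sqrt{d})}\bigr)^{O(\sqrt{d})}=(snd)^{O(d)}$ terms --- exactly the blow-up you say you are avoiding. So the identity is being applied to the wrong object, and the step as written does not yield the claimed $(snd)^{O(\sqrt{d})}$ bound.

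The missing ingredient is Saxena's duality trick. The actual route is: apply Fischer's identity \emph{also to the top} $\Pi^{[O(\sqrt{d})]}$ gates to reach a depth-five powering circuit $\Sigma\wedge\Sigma\wedge\Sigma$ of size $(snd)^{O(\sqrt{d})}$; then use the duality identity, which writes $(y_1+\cdots+y_N)^{A}$ as a sum of only $O(NA)$ terms of the form $\prod_{k=1}^{N}f_{ik}(y_k)$ with each $f_{ik}$ univariate of degree at most $A$ --- crucially, \emph{without} distributing the power over the $N$ summands. Substituting $y_k = c_kL_k^{e_k}$ makes each $f_{ik}(y_k)$ a univariate polynomial in the linear form $L_k$, which factors into a product of linear forms over the algebraic closure of $\F$ (this is also where the constants of the final $\sps$ circuit may land in an extension field, a caveat worth stating). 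That yields the $\sps$ circuit of size $(snd)^{O(\sqrt{d})}$. Without the duality step, neither naive expansion nor the monomial Waring decomposition gets you below $(snd)^{O(d)}$.
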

We will also need the following theorem which gives a formula upper bound for polynomials with small circuits. The theorem immediately follows from a classical depth reduction result of Valiant, Skyum, Berkowitz and Rackoff~\cite{vsbr83}.
\begin{theorem}[\cite{vsbr83}]\label{thm:vsbr original}
Let $P(\vx)$ be a polynomial of degree $d$ in $n$ variables which can be computed by a circuit $C$ of size $s$. Then, $P$ can also be computed by a homogeneous circuit $C'$ of size  $\poly(s,n,d)$, with the following properties. 
\begin{itemize}
\item Every product gate in $C'$ has fan-in at most $5$. 
\item For every product gate $g$ in $C'$, the degree of the polynomial computed by any child  of $g$ is at most half of the degree of the polynomial computed at $g$. 
\item $C'$ has alternating layers of sum and product gates, where the sum fan-ins can be unbounded. 
\end{itemize} 
\end{theorem}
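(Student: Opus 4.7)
The plan is to derive this statement from the classical depth-reduction of Valiant, Skyum, Berkowitz, and Rackoff \cite{vsbr83}, followed by a local rebalancing that enforces the specific product-gate structure claimed here. The construction proceeds in three phases: homogenization, VSBR balanced parsing, and a final rebalancing.

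First, I would homogenize $C$: for each gate $v$ and each $i \in \{0,1,\ldots,d\}$ introduce a gate $v^{[i]}$ computing the degree-$i$ homogeneous component at $v$, using the rules $(a+b)^{[i]} = a^{[i]} + b^{[i]}$ and $(a\cdot b)^{[i]} = \sum_{j+k=i} a^{[j]} b^{[k]}$. This produces a homogeneous fan-in-$2$ circuit of size $O(sd^2)$ in which every gate has a fixed degree.

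Second, I would apply the VSBR frontier/quotient construction. For each gate $v$ of degree $d_v \geq 2$, identify a frontier $F(v)$ of sub-gates whose degrees lie in a balanced window such as $(d_v/3, 2d_v/3]$, and define quotient polynomials $[v:w]$ for $w \in F(v)$ via the recurrences $[a+b:w] = [a:w] + [b:w]$ and $[a\cdot b:w] = [a:w]\cdot b + a \cdot [b:w]$. A standard induction gives
\[
v = \sum_{w \in F(v)} [v:w] \cdot w \, ,
\]
with both factors of degree at most $\tfrac{2}{3} d_v$. Iterating this over $O(\log d)$ levels of recursion produces a homogeneous circuit $C_1$ of size $\poly(s,d)$ in which every product gate has fan-in $2$ and both of its children have degree at most $\tfrac{2}{3} d_v$.

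Third, I would enforce the ``at most half the parent's degree'' property by at most two local expansions per product gate. If a fan-in-$2$ product $v = v_1 \cdot v_2$ in $C_1$ has $d_{v_1} > d_v/2$, substituting $v_1$'s top-level decomposition (already present in $C_1$) turns $v$ into a sum of fan-in-$3$ products of the form $v_{1,1} \cdot v_{1,2} \cdot v_2$; since $d_{v_{1,j}} \leq \tfrac{2}{3} d_{v_1} \leq \tfrac{4}{9} d_v < d_v/2$ and $d_{v_2} = d_v - d_{v_1} < d_v/2$, all three factors have degree at most $d_v/2$. At most one further such expansion suffices in pathological cases, yielding a fan-in of at most $5$ with slack. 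Alternating layers of sums and products with unbounded sum fan-in can then be imposed by inserting trivial sum gates where needed. The main obstacle is certifying that these local expansions do not blow up the size beyond $\poly(s,n,d)$: each product gate is expanded a bounded number of times using only already-present sub-gates, so the overall size remains polynomial. The VSBR frontier/quotient machinery in the second step is the technical heart of the argument, and I would invoke it essentially as a black box, since the bound on the number of frontier gates and the validity of the quotient recurrences are standard.
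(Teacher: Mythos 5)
The paper does not actually prove this statement: it is imported wholesale from Valiant--Skyum--Berkowitz--Rackoff, with the remark that it ``immediately follows'' from their classical depth reduction. So there is no in-paper argument to compare against; what you have written is a reconstruction of the classical proof itself, and it follows the standard route (homogenize, then the frontier/quotient decomposition, then bookkeeping to get fan-in $5$ and the degree-halving property). That is the right skeleton.

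Two points where your sketch is looser than the actual VSBR argument and would need repair in a full write-up. First, your frontier is defined as the gates whose degrees lie in a window $(d_v/3, 2d_v/3]$, but the identity $v = \sum_{w \in F(v)} [v:w]\cdot w$ is \emph{not} proved for that set; the standard frontier at threshold $m$ consists of the product gates of degree exceeding $m$ both of whose children have degree at most $m$, and the identity is established by induction on the circuit using exactly the quotient recurrences you list. With a window-based frontier the induction breaks, because the smaller child of a product gate can jump from degree above the window to degree below it without any gate landing inside. Choosing $m = d_v/2$ in the standard definition is what directly yields ``every child has degree at most half,'' so your third phase of ad hoc local expansions is not needed if the second phase is done correctly; the fan-in-$5$ bound then comes from chaining the identity for $[v]$ with the identity for $[v:w]$ (which expands into products of three quotient/gate terms), not from repeated substitution. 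Second, your claim that ``at most one further expansion suffices in pathological cases'' is asserted rather than argued; in the standard proof this is replaced by an explicit case analysis of the two recurrences. Since you state that you would invoke the frontier/quotient machinery as a black box, these are presentational rather than fatal gaps, but as written the second and third phases do not quite assemble into a proof.
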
 

\begin{theorem}[\cite{vsbr83}]\label{thm:circuit to formulas}
Let $P(\vx)$ be a polynomial of degree $d$ in $n$ variables which can be computed by a circuit of size $s$. Then, $P$ can also be computed by a formula of size $(sn)^{O(\log d)}$. 
\end{theorem}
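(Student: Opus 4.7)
The plan is to apply the structural form of the VSBR depth reduction stated in Theorem \ref{thm:vsbr original} and then unfold the resulting DAG into a formula, bounding the formula size by counting root-to-leaf paths in the DAG.

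First, apply Theorem \ref{thm:vsbr original} to the given size-$s$ circuit for $P$ to obtain a homogeneous, layered circuit $C'$ of size $s' \le \poly(s,n,d)$ with alternating sum and product layers, in which every product gate has fan-in at most $5$ and each of its children computes a polynomial of degree at most half of its own. Since the output computes a polynomial of degree $d$ and the leaves have degree $0$ or $1$, any root-to-leaf path in $C'$ passes through at most $L = O(\log d)$ product gates; because the layers alternate, there are only $O(\log d)$ sum layers as well.

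Next, convert $C'$ into a formula $\Phi$ by the standard DAG-to-tree unfolding, duplicating each subcircuit separately for each of its uses. The number of leaves of $\Phi$ equals the number of distinct root-to-leaf paths in $C'$, and each such path is specified by choosing, at every gate visited, one of its children. At each sum gate this is a choice among at most $s'$ children and at each product gate among at most $5$. Multiplying the branching over the $O(\log d)$ sum layers and $O(\log d)$ product layers yields a formula size of at most $(s')^{O(\log d)} \cdot 5^{O(\log d)} = \poly(s,n,d)^{O(\log d)} = (sn)^{O(\log d)}$, where in the last step we absorb $d \le \poly(sn)$ into the exponent as is standard in this setting. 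By construction $\Phi$ computes the same polynomial as $C'$, namely $P$.

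There is no real obstacle in this argument. The only place where we genuinely rely on the structural form of Theorem \ref{thm:vsbr original} rather than on a generic depth-$O(\log d)$ reduction is the halving of degrees at product gates: it is this property that caps the number of product (and hence sum) layers at $O(\log d)$, so that the per-layer branching bound multiplies to the claimed $(sn)^{O(\log d)}$ formula size.
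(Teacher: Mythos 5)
Your proof is correct and is exactly the standard derivation that the paper leaves implicit: the paper states this theorem with no proof, as an immediate consequence of the VSBR depth reduction (\autoref{thm:vsbr original}), and your unfolding of the degree-halving circuit plus the root-to-leaf path count is precisely that argument. The one caveat, which you rightly flag yourself, is that passing from $(s')^{O(\log d)} = (snd)^{O(\log d)}$ to $(sn)^{O(\log d)}$ genuinely requires $d \leq \poly(sn)$; this assumption is implicit in the paper's own statement of the theorem and holds where the paper applies it (to the polynomial $B$, of degree $d$, circuit size $\poly(d)$ and $O(d^2)$ variables), so it is not a gap relative to the claim as used.
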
 

\subsection{Explicit Polynomials}
\begin{definition}[\cite{DSY09}]\label{def:explicit}
Let $\{f_m\}$ be a family of multilinear polynomials such that $f_m \in \Q[x_1, x_2, \ldots, x_m]$ for every $m$. Then, the family $\{f_m\}$ is said to be explicit if the following two conditions hold.
\begin{itemize}
\item All the coefficients of $f_m$ have bit complexity polynomial in $m$. 
\item There is an algorithm which on input $m$ outputs the list of all $2^m$ coeffcients of $f_m$ in time $2^{O(m)}$.
\end{itemize}
\end{definition}
\subsection{Extracting Homogeneous Components}
For our proofs, we will also rely on the following classical result of Strassen, which shows that 
if a polynomial $P$ has a small circuit, then all its low degree homogeneous components also have small circuits. 
\begin{theorem}[Homogenization]\label{thm:homog strassen}
Let $\F$ be any field, and let $\Psi \in \F[\vx]$ be an arithmetic circuit of size at most $s$.  Then, for every $k \in \N$, there is a homogeneous circuit $\Psi_k$ of formal degree at most $k$ and size at most $O(k^2s)$, such that 
\[
\Psi_k = \h_{k}\left[\Psi \right] \, .
\]
\end{theorem}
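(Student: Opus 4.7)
The plan is to carry out the standard Strassen-style homogenization construction, where each gate in $\Psi$ is replaced by a small ``bank'' of gates, one computing each homogeneous component up to degree $k$. Concretely, for every gate $g$ of $\Psi$, I introduce $k+1$ new gates $g^{(0)}, g^{(1)}, \ldots, g^{(k)}$, and arrange wires so that $g^{(i)}$ computes $\h_{i}[g]$. The output gate $\Psi_k$ will then just be the copy $(\text{top})^{(k)}$.

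The replacements are defined inductively on the structure of the circuit. For a leaf labelled by a constant $c \in \F$, set $g^{(0)} = c$ and $g^{(i)} = 0$ for $i \geq 1$. For a leaf labelled by a variable $x_j$, set $g^{(1)} = x_j$ and $g^{(i)} = 0$ otherwise. For a sum gate $g = g_1 + g_2$, define $g^{(i)} = g_1^{(i)} + g_2^{(i)}$ for each $i \in \{0,1,\ldots,k\}$. For a product gate $g = g_1 \cdot g_2$, define
\[
g^{(i)} \;=\; \sum_{j=0}^{i} g_1^{(j)} \cdot g_2^{(i-j)} \, ,
\]
which is exactly the degree-$i$ homogeneous component of a product, and can be implemented by $i+1 \leq k+1$ new product gates feeding into a single sum gate. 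A simple induction on the depth of a gate shows that $g^{(i)}$ indeed computes $\h_i[g]$, and the resulting circuit is homogeneous of formal degree at most $k$ by construction (we may truncate any intermediate gate $g^{(i)}$ to $0$ when $i$ exceeds the formal degree guaranteed at $g$, though this is not needed for correctness).

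For the size bound I just count wires in the new circuit. Each leaf of $\Psi$ contributes $O(k)$ new wires. Each sum gate of $\Psi$ contributes $O(k)$ new wires (one sum per component). Each product gate of $\Psi$ is the only real contributor: its $k+1$ degree-$i$ copies together introduce $\sum_{i=0}^{k}(i+1) = O(k^2)$ new wires. Since there are at most $s$ gates and wires in $\Psi$, the total wire count is $O(k^2 s)$, which is the claimed bound.

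There is no real obstacle here — the construction is completely mechanical and correctness follows from the fact that homogeneous components are additive under sums and convolve under products. The only thing to be careful about is bookkeeping in the size count (wires, not gates, and noting that the product-gate case is the one that produces the $k^2$ factor rather than just $k$); everything else is a routine induction on circuit depth.
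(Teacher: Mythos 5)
Your construction is exactly the classical Strassen homogenization argument, and it is correct; the paper itself states this theorem without proof, citing it as a classical result, so yours is precisely the proof being implicitly invoked. The only point worth noting is that the paper's circuits may have unbounded fan-in, so one should first convert to fan-in two (which changes the wire count only by a constant factor) before applying your convolution formula at product gates.
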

\autoref{thm:homog strassen} gives us a way of extracting homogeneous components of the polynomial computed by a given circuit. We also need the following related well known lemma, whose proof we briefly sketch.
\begin{lemma}[Interpolation]\label{lem:interpolation}
Let $\F$ be any field with at least $d+1$ elements. Let $P(\vx, y) \in \F[\vx, y]$ be a polynomial of degree at most $d$. Let $C_0(\vx), C_1(\vx), \ldots, C_d(\vx) \in \F[\vx]$ be polynomials such that $P(\vx, y) = \sum_{j = 0}^d y^d\cdot C_j(\vx)$. Then, if $P(\vx, y)$ has a circuit of size at most $s$ and depth at most $\Delta$, then for every $j \in \set{0, 1, \ldots, d}$, $C_j(\vx)$ has a circuit of size at most $O(sd)$ and depth $\Delta$.
\end{lemma}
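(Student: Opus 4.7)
The plan is to prove the interpolation lemma via Lagrange interpolation in the variable $y$. Since $\abs{\F} \geq d+1$, I can choose distinct field elements $\alpha_0, \alpha_1, \ldots, \alpha_d \in \F$. The key observation is that knowing $P(\vx, y)$ at $d+1$ values of $y$ determines the coefficients $C_0(\vx), \ldots, C_d(\vx)$ linearly: setting $y = \alpha_i$ yields
\[
P(\vx, \alpha_i) \;=\; \sum_{j=0}^d \alpha_i^j \cdot C_j(\vx),
\]
and the matrix $V = (\alpha_i^j)_{0 \le i, j \le d}$ is a Vandermonde matrix, hence invertible. Writing $W = V^{-1}$, I get for every $j \in \set{0, 1, \ldots, d}$ the identity
\[
C_j(\vx) \;=\; \sum_{i=0}^d W_{j,i}\cdot P(\vx, \alpha_i),
\]
which expresses each $C_j$ as an explicit $\F$-linear combination of $d+1$ specializations of $P$.

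Next, I would construct, for each $i$, the circuit $\Psi_i$ obtained from the given depth-$\Delta$, size-$s$ circuit for $P(\vx, y)$ by replacing every leaf labeled $y$ with the constant $\alpha_i$. This substitution preserves both size and depth, so each $\Psi_i$ computes $P(\vx, \alpha_i)$ using at most $s$ wires and depth at most $\Delta$. To obtain a circuit for $C_j(\vx)$, I take the linear combination specified above. Invoking the paper's standing convention that circuits are layered with a sum gate at the top, each $\Psi_i$ has a single top-level sum gate, and I can realize the linear combination by introducing one new sum gate whose incoming wires are the children of the top gates of $\Psi_0, \ldots, \Psi_d$, each reweighted by the corresponding scalar $W_{j,i}$ (absorbed into the edge label). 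The resulting circuit has depth exactly $\Delta$ and total wire count bounded by $(d+1)\cdot s + O(d) = O(sd)$.

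There is no real obstacle here; the only small subtlety is keeping the depth at $\Delta$ rather than $\Delta + 1$, which is handled by the top-sum-gate convention (without it, the natural construction gives depth $\Delta+1$, which would also suffice for every application of this lemma in the paper). All other steps are routine.
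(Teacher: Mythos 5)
Your proposal is correct and follows essentially the same route as the paper: specialize $y$ to $d+1$ distinct field elements, recover each $C_j$ as a fixed $\F$-linear combination of the specializations (you make the Vandermonde inversion explicit where the paper just says ``appropriate linear combination''), and absorb that combination into the top sum gate to keep the depth at $\Delta$ while the size grows to $O(sd)$. No differences worth noting.
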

\begin{proof}[Proof Sketch]
For the proof, we view $P$ as a univariate polynomial of degree $d$ in $y$ with coefficients in the ring $\F[\vx]$. Thus, each $C_j$ can be written as an appropriate linear combination of $P(\vx, \alpha_0)$, $P(\vx, \alpha_1)$, $\ldots$, $P(\vx, \alpha_d)$, where $\alpha_0, \alpha_1, \ldots, \alpha_d$ are distinct elements of the field $\F$. Observe that for every $\alpha \in \F$, $P(\vx, \alpha)$ has a circuit of size at most $s$ and depth $\Delta$. To compute $C_j$, we have to take an appropriate linear combination of these circuits, but the linear combination can be absorbed in the top sum gate, and hence this process does not incur an increase in depth, while the size grows by a factor of at most $d$.
\end{proof}
The following corollary of~\autoref{lem:interpolation} would also be useful for us. The proof follows immediately from the proof of~\autoref{lem:interpolation}.  
\begin{lemma}[Interpolation for Formulas]\label{lem:interpolation formulas}
Let $\F$ be any field with at least $d+1$ elements. Let $P(\vx, y) \in \F[\vx, y]$ be a polynomial of degree at most $d$. Let $C_0(\vx), C_1(\vx), \ldots, C_d(\vx) \in \F[\vx]$ be polynomials such that $P(\vx, y) = \sum_{j = 0}^d y^d\cdot C_j(\vx)$. Then, if $P(\vx, y)$ has a \emph{formula} of size at most $s$, then for every $j \in \set{0, 1, \ldots, d}$, $C_j(\vx)$ has a \emph{formula} of size at most $O(sd)$.
\end{lemma}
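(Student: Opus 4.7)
The plan is to mirror the proof of Lemma~\ref{lem:interpolation} (the circuit interpolation lemma), checking at each step that the construction yields a formula rather than merely a circuit. The key observation is that Lagrange interpolation expresses each $C_j$ as an $\F$-linear combination of $d+1$ evaluations of $P$, and each evaluation can be realized by a formula of the same size as the one for $P$.

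First I would view $P(\vx, y)$ as a univariate polynomial of degree at most $d$ in $y$ over the ring $\F[\vx]$. Since $|\F| \geq d+1$, I pick distinct scalars $\alpha_0, \alpha_1, \ldots, \alpha_d \in \F$, and by standard Lagrange interpolation I obtain constants $\lambda_{i,j} \in \F$ (depending only on the $\alpha_i$'s, not on $P$) such that for each $j \in \set{0, 1, \ldots, d}$,
\[
C_j(\vx) \;=\; \sum_{i=0}^d \lambda_{i,j} \cdot P(\vx, \alpha_i) \, .
\]
Given a formula of size at most $s$ computing $P(\vx, y)$, I build a formula for $P(\vx, \alpha_i)$ by substituting $\alpha_i$ for each leaf labeled $y$. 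This specialization preserves both the size $s$ and the depth of the formula. I then instantiate $d+1$ independent copies of this construction, one per $\alpha_i$, scale their outputs by $\lambda_{i,j}$, and combine them with a single new topmost sum gate whose children are these $d+1$ scaled copies. The resulting object computes $C_j(\vx)$ and has size $(d+1)\cdot s + O(d) = O(sd)$, with no increase in depth.

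The only thing that really needs checking — since this is the one place where formulas and circuits differ — is that the resulting object is genuinely a formula (a tree), not a general DAG. This holds by construction: the $d+1$ specialized copies are taken as disjoint trees (no sharing of gates between them), and the topmost sum gate has a distinct child for each copy. Because the original circuit proof already paid a multiplicative factor of $d$ in size (to accommodate these very $d+1$ evaluations), there is no asymptotic loss in passing from circuits to formulas, and I do not anticipate any real obstacle here; the lemma is essentially a bookkeeping exercise on top of Lemma~\ref{lem:interpolation}.
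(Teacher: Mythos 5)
Your proof is correct and follows essentially the same route the paper takes: the paper derives this lemma directly from the proof of Lemma~\ref{lem:interpolation}, i.e., Lagrange interpolation over $d+1$ distinct evaluation points, with disjoint copies of the specialized formula combined by a top sum gate. (The only cosmetic quibble is that introducing a \emph{new} topmost sum gate adds one to the depth rather than leaving it unchanged, but since the lemma only bounds size this is immaterial.)
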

\subsection{Hitting Sets}
\begin{definition}
A set of points ${\cal P}$ is said to be a hitting set for a class $\cal C$ of circuits, if for every $C \in {\cal C}$ which is not identically zero, there is an $\va \in {\cal P}$ such that $C(\va) \neq 0$. 
\end{definition}

Clearly, deterministic and efficient construction of a hitting set of small size for a class ${\cal C}$ of circuits immediately implies a deterministic PIT algorithm for ${\cal C}$. PIT algorithms designed in this way are also \emph{blackbox}, in the sense that they do not have to look inside into the wiring of the circuit to decide if it computes a polynomial which is identically zero. The PIT algorithms in this paper are all blackbox in this sense. 

\subsection{Nisan-Wigderson Designs}
We state the following well known result of Nisan and Wigderson~\cite{nw94} on the explicit construction of combinatorial designs. 
\begin{theorem}[\cite{nw94}]\label{thm:nw design}
Let $n, m$ be positive integers such that $n < 2^m$. Then, there is a family of subsets $S_1, S_2, \ldots, S_n \subseteq [\ell]$ with the following properties. 
\begin{itemize}
\item For each $i \in [n]$, $\abs{S_i} = m$.
\item For each $i, j \in [n]$, such that $i \neq j$, $\abs{S_i \cap S_j} \leq \log n$.
\item $\ell = O(\frac{m^2}{\log n})$.
\end{itemize}
Moreover, such a family of sets can be constructed via a deterministic algorithm in time $\poly(n, 2^{\ell})$.
\end{theorem}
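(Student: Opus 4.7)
The plan is to give an explicit algebraic construction based on polynomial evaluation codes. First, fix a prime power $q$ comparable to $m$ (i.e.\ $m \leq q \leq 2m$, which exists by Bertrand's postulate) and set the degree parameter $k := \lfloor \log n \rfloor$. I will identify the universe with $\F_q \times \F_q$ and, for each univariate polynomial $f \in \F_q[x]$ of degree at most $k$, define
\[
S_f := \{(\alpha, f(\alpha)) : \alpha \in \F_q\} \subseteq \F_q \times \F_q \, .
\]
Since there are $q^{k+1} \geq m^{\log n + 1} \geq n$ such polynomials (for $m \geq 2$), I can pick $n$ of them, each of size exactly $q \geq m$ and truncate to size exactly $m$.

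The heart of the argument is the intersection property, which follows from a classical fact about polynomials: two distinct polynomials in $\F_q[x]$ of degree at most $k$ can agree on at most $k$ common points, since their difference is a nonzero polynomial of degree at most $k$. Hence $|S_f \cap S_g| \leq k \leq \log n$ whenever $f \neq g$, matching the required bound. For the explicit deterministic algorithm, I simply enumerate all polynomials of degree at most $k$ in $\F_q[x]$, which takes time $q^{k+1} = \poly(n)$, and write down each $S_f$ in time $O(q)$; the total running time is comfortably within the $\poly(n, 2^\ell)$ budget.

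The main technical subtlety I anticipate is driving the universe size down to the claimed $\ell = O(m^2/\log n)$ rather than the naive $\ell = q^2 = O(m^2)$ from the single-polynomial construction above. To shave off the $\log n$ factor, I would shrink $q$ to $\Theta(m/\log n)$ and, instead of a single graph, assign to each index a disjoint union
\[
S_{(f_1,\dots,f_s)} := \bigsqcup_{i \in [s]} \{(i, \alpha, f_i(\alpha)) : \alpha \in \F_q\}
\]
of $s = \Theta(\log n)$ graphs of smaller-degree component polynomials. The set size then becomes $sq = \Theta(m)$ and the universe size $sq^2 = O(m^2/\log n)$. The hard part will be bounding the pairwise intersection in this tupled setting, since two tuples that agree on even a single component $f_i = g_i$ contribute $q$ to their intersection; overcoming this requires restricting the tuples to a subfamily with large pairwise Hamming distance (an error-correcting code structure on the tuple space) and carefully re-balancing $q$, $s$, and the component degree so that the number of surviving tuples is still at least $n$ while the intersection contribution from coinciding columns stays at most $\log n$.
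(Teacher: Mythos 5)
Your first construction (graphs of degree-$\le k$ polynomials over $\F_q\times\F_q$ with $q=\Theta(m)$ and $k=\lfloor\log n\rfloor$) is correct, but it only yields a universe of size $\ell=q^2=O(m^2)$, not the claimed $\ell=O(m^2/\log n)$. You recognize this, but the fix you sketch does not close the gap, and I do not see how it can be made to work in the stated generality. In the tupled construction with $s=\Theta(\log n)$ blocks over $\F_q$ with $q=\Theta(m/\log n)$, any coordinate on which two tuples agree contributes $q=\Theta(m/\log n)$ common elements, which already exceeds the budget of $\log n$ whenever $m\gg\log^2 n$ (precisely the regime the paper needs, e.g.\ $m=n^{\epsilon}$). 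So the tuples must pairwise disagree in \emph{every} coordinate; but a family of tuples in $[Q]^s$ in which any two agree in at most $a$ coordinates has size at most $Q^{a+1}$ (project onto any $a+1$ coordinates), so with $a=0$ you get at most $Q=q^{k'+1}$ sets, where $k'$ is the component degree. Since you also need $s\cdot k'\le\log n$ with $s=\Theta(\log n)$, you are forced to take $k'=O(1)$, and then $Q=\poly(m/\log n)\ge n$ fails for small $m$. Re-balancing $q$, $s$, $k'$ cannot escape this: the constraints $sq=\Theta(m)$, $sq^2=O(m^2/\log n)$, $sk'\le\log n$, and $q^{k'+1}\ge n$ are jointly infeasible unless $m\ge n^{\Omega(1)}$ is quite large, and even then the intersection bound tends to degrade to $O(\log n)$ rather than $\le\log n$.

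The intended proof (Nisan--Wigderson, Lemma 2.5) is entirely different and non-algebraic: the sets are constructed greedily, one at a time, and the existence of each next set $S_{i+1}$ with $|S_{i+1}|=m$ and $|S_{i+1}\cap S_j|\le\log n$ for all $j\le i$ over a universe of size $\ell=O(m^2/\log n)$ is shown by a probabilistic counting argument over random $m$-subsets of $[\ell]$; the algorithm then finds such a set by exhaustive search over all $2^{\ell}$ subsets (or by conditional expectations). This is exactly why the theorem only promises construction time $\poly(n,2^{\ell})$ rather than $\poly(n,m)$ --- a bound your polynomial construction would beat if it sufficed. If one is content with $\ell=O(m^2)$, your first paragraph is a complete and genuinely more explicit proof, and that weaker bound would in fact be adequate for the application in this paper after adjusting constants ($\ell\le O(n^{2\epsilon})$ either way); but as a proof of the theorem as stated, the $O(m^2/\log n)$ universe size is not established.
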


\subsection{Schwartz-Zippel Lemma}
We now state the well known Schwartz-Zippel lemma.
\begin{lemma}[Schwartz-Zippel]\label{lem:sz lemma}
Let $\F$ be a field, and let $P\in \F[\vx]$ be a non-zero polynomial of degree (at most) $d$ in $n$ variables. Then, for any finite set $S\subset {\F}$ we have
$$|\{ \va \in S^n: P(\va) = 0\} | \le d {|S|}^{n-1}.$$
\end{lemma}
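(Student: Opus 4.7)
The plan is a standard induction on the number of variables $n$. For the base case $n = 1$, a nonzero univariate polynomial of degree at most $d$ has at most $d$ roots in $\F$ (and hence at most $d$ roots in $S$), so $|\{a \in S : P(a) = 0\}| \le d = d|S|^{0}$, matching the claim.

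For the inductive step, I will assume the statement for polynomials in $n-1$ variables and write
\[
P(x_1, x_2, \ldots, x_n) = \sum_{i=0}^{k} x_1^{i} \cdot Q_i(x_2, \ldots, x_n),
\]
where $k \le d$ is chosen to be the largest exponent of $x_1$ appearing in $P$, so that $Q_k$ is not identically zero. Observe that $\deg(Q_k) \le d - k$. Given $\va = (a_1, \ldots, a_n) \in S^n$ with $P(\va) = 0$, I will split into two cases: (i) $Q_k(a_2, \ldots, a_n) = 0$, and (ii) $Q_k(a_2, \ldots, a_n) \ne 0$. In case (i), by the inductive hypothesis applied to $Q_k$ (which is a nonzero polynomial in $n-1$ variables of degree at most $d-k$), the number of tuples $(a_2, \ldots, a_n) \in S^{n-1}$ is at most $(d-k)|S|^{n-2}$, and $a_1 \in S$ is unconstrained, contributing at most $(d-k)|S|^{n-1}$ tuples. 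In case (ii), the univariate polynomial $P(x_1, a_2, \ldots, a_n) \in \F[x_1]$ has degree exactly $k$ (its leading coefficient $Q_k(a_2, \ldots, a_n)$ is nonzero), so it has at most $k$ roots in $S$; summing over the at most $|S|^{n-1}$ choices of $(a_2, \ldots, a_n)$ gives at most $k|S|^{n-1}$ tuples in this case.

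Adding the two case counts yields a total of at most $(d-k)|S|^{n-1} + k|S|^{n-1} = d|S|^{n-1}$, completing the inductive step. There is no real obstacle here; the only care needed is in choosing $k$ to be the largest exponent of $x_1$ for which $Q_k \not\equiv 0$ so that $Q_k$ is genuinely nonzero (enabling the induction) and so that the univariate polynomial in case (ii) has a guaranteed nonvanishing leading coefficient. The argument works over any field and any finite subset $S \subseteq \F$, matching the hypotheses of the lemma.
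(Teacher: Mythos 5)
Your proof is correct; it is the standard induction on the number of variables, isolating the top power of $x_1$ and splitting on whether its (nonzero) leading coefficient $Q_k$ vanishes at $(a_2,\dots,a_n)$, and the edge cases ($k=0$, $k=d$) are handled by the stated bounds. The paper states this lemma without proof as a classical result, so there is nothing to compare against; your argument is the canonical one.
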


In particular, if $|S| \ge d+1$, then there exists some $\va \in S^n$ satisfying $P(\va) \ne 0$. 
This gives us a brute force deterministic algorithm, running in time $(d+1)^n$, to test if an arithmetic circuit computing a polynomial of degree at most $d$ in $n$ variables is identically zero.

\section{Low Degree Roots of Polynomials with Shallow Circuits}
In this section, we prove~\autoref{thm:root ckt ub}, which is also our main technical observation. We start with the following lemma, which gives us a way of \emph{approximating} the root of a polynomial to higher and higher accuracy, in an iterative manner. The lemma is a standard example of Hensel Lifting, which appears in many of prior works in this area including~\cite{DSY09}. The statement and the proof below, are from the work of Dvir et al~\cite{DSY09}.
\begin{lemma}[Hensel Lifting~\cite{DSY09}]\label{lem:vanilla hensel}
Let $P \in \F[\vx, y]$ and $f\in \F[\vx]$ be polynomials such that $P(\vx, f) = 0$ and $\h_{0}\left[\frac{\partial P}{\partial y}\left(\vx, f(\vx) \right)\right] = \delta \neq 0$. 
Let $i\in \set{1, 2, \ldots, \deg(f)}$ be any number. If $h\in \F[\vx]$ is a polynomial such that $\h_{\leq i-1}[f] = \h_{\leq i-1}[h]$, then 
\[ 
\h_{\leq i}\left[f\right] =\h_{\leq i}\left[h - \frac{ P(\vx, h)}{\delta} \right]\, .
\]
\end{lemma}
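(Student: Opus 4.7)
The plan is to run a one-step Hensel lift by expanding $P(\vx, h)$ around the true root $f$ via Taylor's expansion (\autoref{lem:taylor} in the excerpt), viewing $P$ as a univariate in $y$ with coefficients in $\F[\vx]$. Writing $h = f + (h-f)$ and using $P(\vx, f) = 0$, we get
\[
P(\vx, h) \;=\; (h-f)\cdot P'(\vx, f) \;+\; \sum_{k \geq 2} (h-f)^k \cdot P^{(k)}(\vx, f).
\]
The strategy is to show that, modulo terms of degree strictly greater than $i$, the right-hand side collapses to $\delta \cdot (h-f)$, from which the conclusion follows immediately after dividing by $\delta$ and subtracting from $h$.

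First, I would exploit the hypothesis $\h_{\leq i-1}[f] = \h_{\leq i-1}[h]$ to observe that the polynomial $h-f$ has no homogeneous component of degree $< i$; in other words, $(h-f)$ vanishes to order $i$ at the origin. Consequently, for every $k \geq 2$, the product $(h-f)^k$ vanishes to order at least $2i \geq i+1$ (using $i \geq 1$), and so all the higher-order Taylor terms contribute nothing to $\h_{\leq i}[P(\vx, h)]$. This reduces the analysis to the linear term $(h-f)\cdot P'(\vx, f)$.

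Next, decompose $P'(\vx, f)$ into its constant term and the rest: by hypothesis, $\h_0[P'(\vx, f)] = \delta$, so write $P'(\vx, f) = \delta + R(\vx)$ where $R$ has no constant term, i.e.\ $R$ has degree $\geq 1$ at every monomial. Then
\[
(h-f)\cdot P'(\vx, f) \;=\; \delta(h-f) \;+\; (h-f)\cdot R(\vx),
\]
and the second summand has degree $\geq i+1$, so it also contributes nothing to $\h_{\leq i}$. Combining, we obtain $\h_{\leq i}[P(\vx, h)] = \h_{\leq i}[\delta(h-f)]$, and hence $\h_{\leq i}\!\left[\tfrac{P(\vx,h)}{\delta}\right] = \h_{\leq i}[h-f]$.

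Finally, since truncation to degree $\leq i$ is $\F$-linear,
\[
\h_{\leq i}\!\left[h - \frac{P(\vx,h)}{\delta}\right] \;=\; \h_{\leq i}[h] - \h_{\leq i}[h-f] \;=\; \h_{\leq i}[f],
\]
which is exactly the claim. There is no real obstacle here beyond carefully tracking the order of vanishing of $(h-f)$ and ensuring that the quadratic-and-higher Taylor tail, together with the non-constant part of $P'(\vx,f)$, both live in degrees $> i$; the condition $i \geq 1$ is precisely what makes the bound $2i \geq i+1$ go through, which is why the lemma is stated only for $i \in \{1, \dots, \deg(f)\}$.
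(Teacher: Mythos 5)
Your proof is correct and follows essentially the same route as the paper's: a single Taylor expansion (with Hasse derivatives), a degree count showing the quadratic-and-higher terms and the non-constant part of $P'$ land in degrees $>i$, and the identification of the surviving linear coefficient with $\delta$. The only cosmetic difference is that you expand $P(\vx,h)$ around the true root $f$ with increment $h-f$, whereas the paper starts from $0=\h_{\leq i}[P(\vx,f)]$ and expands around $h$ with the homogeneous increment $\h_i[f]-\h_i[h]$; both reduce to the identity $\h_{\leq i}[P(\vx,h)]=\delta\cdot\h_{\leq i}[h-f]$.
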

\begin{proof}
For the rest of the proof, we think of $P(\vx, y)$ as an element of $\F[\vx][y]$. Henceforth, we drop the variables $\vx$ everywhere, and think of $P$ as a univariate in $y$. Thus, $P(y) = P(\vx, y)$. For brevity, we denote $\h_j[f]$ by $f_j$ for every $j \in \N$.

From the hypothesis, we know that $P(f) = 0$. Therefore,  $\h_{\leq i}(P(f)) = \h_{\leq i-1}\left[P(f) \right] =  0$. Moreover, since $\h_{\leq i-1}[h] = \h_{\leq i-1}[f]$, we get that $\h_{\leq i-1}\left[P(f) \right] =  \h_{\leq i-1}\left[P(h) \right] =  0$. So, we have 
\begin{align*}
0 &= \h_{\leq i}\left[P(f)\right] \\
&= \h_{\leq i}\left[P\left(h + (f_i-h_i)\right)\right]\\
\end{align*}
We first observe that if $f_i = h_i$, then $\h_{\leq i}[P(h)] = 0$, and the lemma is trivially true. So, for the rest of the argument, we assume that $f_i-h_i \neq 0$. 
Now, by using~\autoref{lem:taylor}, we get the following equality.
\begin{align*}
0 &= \h_{\leq i}\left[P(h) + P'(h)\cdot (f_i-h_i) +  P''(h) \cdot (f_i-h_i)^2 + \ldots +  P^{(r)}(h)\cdot (f_i-h_i)^r\right]\\
&= \h_{\leq i}\left[ P(h)\right] + \h_{\leq i}\left[ P'(h)\cdot (f_i-h_i)  \right] + \ldots +  \h_{\leq i}\left[P^{(r)}(h)\cdot (f_i-h_i)^r \right] \\
\end{align*}
Here, $r$ denotes the degree of $P$. Since $f_i-h_i$ is non-zero, and every monomial in $f_i-h_i$ has degree equal to $i$, any term in the above summand which is divisible by $(f_i-h_i)^2$ does not contribute any monomial of degree at most $i$. Thus, we have the following.
\begin{align*}
0 &= \h_{\leq i}\left[ P(h)\right] + \h_{\leq i}\left[ P'(h)\cdot (f_i-h_i)  \right]\\
 &= \h_{\leq i}\left[ P(h)\right] + \h_{0}\left[P'(h)\right]\cdot (f_i-h_i) \, .  
\end{align*}
Now, we know that  $\h_{0}\left[P'(h))\right] = \h_{0}\left[P'(f)\right] = \delta \neq 0$.
Thus, 
\[
f_i = h_i - \frac{\h_{i}\left[ P(h)\right]}{\delta} \, .
\]
Since $\h_{\leq i-1}[P(h)]$ is identically zero, we get, 
\[
\h_{\leq i}\left[f \right] = \h_{\leq i}\left[ h - \frac{ P(h)}{\delta} \right] \, .
\]
\end{proof}
For our proof, we shall look at the structure of the outcome of the lifting operation in~\autoref{lem:vanilla hensel} more closely. Before proceeding further, we need the following crucial lemma. 
\begin{lemma}\label{lem:G properties}
Let $P(\vx, y) \in \F[\vx, y]$ be a polynomial of degree at most $r$, let $\alpha \in \F$ be a field element and $d \in \N$ be a positive integer. Let ${\cal G}_y'(P, \alpha, d)$ be the set of polynomials defined as follows.
\[
{\cal G}_y'(P, \alpha, d) = \set{\h_{\leq d}\left[ \frac{\partial^{j} P }{\partial y^j}\left(\vx, \alpha \right)\right] - \h_{0}\left[ \frac{\partial^{j} P }{\partial y^j}\left(\vx, \alpha \right)\right]: j \in \set{0, 1, 2, \ldots, d}}\, .
\]
Let ${\cal G}_y(P, \alpha, d)$ be the subset of ${\cal G}_y'(P, \alpha, d)$ consisting of all non-zero polynomials. Then, the following statements are true. 
\begin{itemize}
\item For every $g \in {\cal G}_y(P, \alpha, d)$, the degree of every non-zero monomial in $g$ is at least $1$ and at most $d$. 
\item $\abs{{\cal G}_y} \leq d + 1$.
\item If $P$ has a circuit of size at most $s$ and depth $\Delta$, then every $g \in {\cal G}_y(P, \alpha, d)$ has a circuit of size at most $O(sr^4)$ and depth $\Delta$.
\end{itemize}
\end{lemma}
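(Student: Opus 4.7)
The first two bullets are immediate from the definition, while the third will follow from interpolation plus Strassen's homogenization. For each $j \in \set{0, 1, \ldots, d}$, write $R_j(\vx) := \frac{\partial^j P}{\partial y^j}(\vx, \alpha) \in \F[\vx]$, so that the $j$-th element of ${\cal G}_y'$ is $g_j = \h_{\leq d}[R_j] - \h_0[R_j] = \sum_{k=1}^{d} \h_k[R_j]$. For the first bullet: every monomial in $g_j$ appears in some $\h_k[R_j]$ with $k \in \set{1, \ldots, d}$, so has degree in that range. For the second bullet: ${\cal G}_y'$ is indexed by $j \in \set{0, 1, \ldots, d}$ and ${\cal G}_y \subseteq {\cal G}_y'$, so $\abs{{\cal G}_y} \leq d+1$.

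For the third bullet, I would construct a depth-$\Delta$ circuit for each $g_j$ in two stages. Stage one builds a small, depth-$\Delta$ circuit for $R_j$ via Lagrange interpolation. Fix $r+1$ distinct elements $\beta_0, \ldots, \beta_r \in \F$ and view $P(\vx, y)$ as a polynomial of degree at most $r$ in $y$ over $\F[\vx]$, so that $P(\vx, y) = \sum_{i=0}^{r} L_i(y) P(\vx, \beta_i)$ for the Lagrange polynomials $L_i$. Differentiating $j$ times in $y$ and evaluating at $\alpha$ yields $R_j(\vx) = \sum_{i=0}^{r} L_i^{(j)}(\alpha) \cdot P(\vx, \beta_i)$. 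Each $P(\vx, \beta_i)$ has a circuit of size at most $s$ and depth at most $\Delta$ obtained by substituting the constant $\beta_i$ for $y$ in the given circuit for $P$; by the paper's convention each such circuit has a sum gate on top, so all $r+1$ top sum gates can be merged into one (absorbing the scalars $L_i^{(j)}(\alpha)$ into the incoming wires), producing a circuit for $R_j$ of size $O(sr)$ and depth $\Delta$.

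Stage two extracts the homogeneous components. Since $R_j$ has total degree at most $r$, we have $g_j = \sum_{k=1}^{\min(d, r)} \h_k[R_j]$. Applying Strassen's homogenization (\autoref{thm:homog strassen}) to the size-$O(sr)$ circuit for $R_j$ yields, for each $k \leq \min(d, r)$, a circuit for $\h_k[R_j]$ of size $O(k^2 \cdot sr)$; this construction preserves the depth of the alternating $\Sigma$-$\Pi$ circuit for $R_j$ because replacing a fan-in-two product $uv$ by $\sum_{a+b=k} u_a v_b$ introduces a new $\Sigma$ layer that collapses into the adjacent $\Sigma$ layer above it. Summing these circuits over $k = 1, \ldots, \min(d, r)$ through one shared top sum gate yields a circuit for $g_j$ of size $O(sr \cdot \sum_{k=1}^r k^2) = O(sr^4)$ and depth $\Delta$. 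I anticipate no serious obstacle; the only point requiring mild care is checking that each transformation preserves the alternating $\Sigma$-$\Pi$ structure and hence the depth.
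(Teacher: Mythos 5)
Your first two bullets and your Stage one are fine; in fact Stage one is a slightly cleaner route than the paper's, which first extracts the coefficients $C_i(\vx)$ of $y^i$ by interpolation and then reassembles $\sum_{i\ge j}\binom{i}{j}C_i(\vx)y^{i-j}$ before substituting $y=\alpha$, whereas you write $R_j(\vx)=\sum_i L_i^{(j)}(\alpha)\,P(\vx,\beta_i)$ directly as a fixed linear combination of evaluations and absorb it into the top sum gate. Both give size $O(sr)$ and depth $\Delta$.

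Stage two, however, has a genuine gap. You extract $\h_k[R_j]$ via Strassen's homogenization (\autoref{thm:homog strassen}) and argue depth is preserved because ``a fan-in-two product $uv$'' becomes $\sum_{a+b=k}u_av_b$, which collapses into the sum layer above. But in the relevant setting the circuit for $R_j$ is a depth-$\Delta$ circuit with \emph{unbounded} product fan-in (a fan-in-two product circuit of constant depth can only compute polynomials of degree $O(1)$, so for degree up to $r$ the fan-in must be large). For a product gate $g=\prod_{i=1}^{m}u_i$, the gate $(g,k)$ in Strassen's construction is $\sum_{k_1+\cdots+k_m=k}\prod_i (u_i,k_i)$, a sum over $\binom{k+m-1}{m-1}$ compositions; this is not polynomial in $s$ for $k$ up to $r$, so the $O(k^2\cdot sr)$ size bound you quote does not apply, and the only way to recover it is to first rebalance products to fan-in two, which destroys the depth bound. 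The paper avoids this by extracting homogeneous components with a second interpolation rather than with homogenization: substitute $x_i\mapsto t\,x_i$ for a fresh variable $t$, note that $\h_k[R_j]$ is the coefficient of $t^k$, and apply \autoref{lem:interpolation} in the variable $t$; the quantity $\h_{\leq d}[R_j]-\h_0[R_j]$ is then a single fixed linear combination of the $r+1$ evaluations $R_j(t\vx)\big|_{t=\gamma}$, which is absorbed into the top sum gate, keeping depth $\Delta$ and costing only a $\poly(r)$ factor in size. Replacing your Stage two with this interpolation step repairs the argument.
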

\begin{proof}
The first two items follow immediately from the definition of ${\cal G}_y(P, \alpha, d)$. We focus on the proof of the third item. Let $C_0(\vx), C_1(\vx), \ldots, C_{r}(\vx)$ be polynomials such that \[
P(\vx, y) = \sum_{i = 0}^{r} C_i(\vx) \cdot y^i \, .
\]
Now, for any $j \in \{0, 1, 2 , \ldots, d\}$, by~\autoref{def:derivative}, $\frac{\partial^{j} P }{\partial y^j}\left(\vx, y \right)$ is the coefficient of $z^j$ in $P(\vx, y + z)$. Moreover, 
\begin{align*}
P(\vx, y + z) &= \sum_{i = 0}^r C_i(\vx)\cdot (y + z)^i\, , \\
&= \sum_{i = 0}^r C_i(\vx)\cdot \left(\sum_{j = 0}^i \binom{i}{j} z^{j}y^{i-j}\right) \, , \\ 
&= \sum_{j=0}^r\left(\sum_{i=j}^r\binom{i}{j}C_i(\vx)\cdot y^{i-j}\right)\cdot z^j\, .
\end{align*}
Thus, for every $j \in \set{0, 1, \ldots, d}$, the coefficient of $z^j$ in $P(\vx, y + z)$ is given by $\sum_{i = j}^r \binom{i}{j}C_i(\vx)\cdot y^{i-j}$. From~\autoref{lem:interpolation}, we know that each $C_i(\vx)$ has a circuit of depth $\Delta$ and size at most $O(sr)$. Thus, we can obtain a circuit for $\binom{i}{j}C_i(\vx)\cdot y^{i-j}$ by adding an additional layer of $\times$ gates on top of the circuit for $C_i(\vx)$. This increases the size by an additive factor of $r$, and the depth by $1$. However, observe that this increase in depth is not necessary. Since, an expression of the form $y^i \cdot \left(\sum_a\prod_{b} Q_{a,b}\right)$ can be simplified to $ \sum_ay^i \cdot \left(\prod_{b} Q_{a,b}\right)$. Thus, the multiplication by $y^i$ can be absorbed in the product layer below the topmost layer of the circuits for $C_i(\vx)$, and this does not incur any additional increase in size. Thus, the polynomials $\frac{\partial^{j} P }{\partial y^j}\left(\vx, y \right)$, and hence $\frac{\partial^{j} P }{\partial y^j}\left(\vx, \alpha \right)$ have a circuit of size at most $O(sr^3)$ and depth at most $\Delta$. To compute the homogeneous components of these polynomials of degree at most $d$, we use~\autoref{lem:interpolation}. This increases the size by a factor of at most $O(r^2)$ while keeping the depth the same. 
\end{proof}
We now state our key technical observation.
\begin{lemma}\label{lem:key lem}
Let $P \in \F[\vx, y]$ and $f\in \F[\vx]$ be polynomials of degree $r$ and $d$ respectively such that $P(\vx, f) = 0$ and $\h_{0}\left[\frac{\partial P}{\partial y}\left(\vx, f(\vx) \right)\right] = \delta \neq 0$. Let the polynomials in the set ${\cal G}_y(P, \h_0[f], d)$ be denoted by $g_0, g_1, \ldots, g_d$. Then, for every $i\in \set{1, 2, \ldots, d}$, there is a polynomial $A_i(\vz)$  in $d+1$ variables such that the following are true. 
\begin{itemize}
\item $\h_{\leq i}\left[f\right] = \h_{\leq i}\left[A_i\left(g_0, g_1, \ldots, g_d\right)\right]$, and
\item $A_i(\vz)$ is computable by a circuit of size at most $10d^2i$.
\end{itemize} 
\end{lemma}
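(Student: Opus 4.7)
The plan is to prove the lemma by induction on $i$. For the base case $i = 0$, I would set $A_0(\vz) \equiv \alpha$, where $\alpha := \h_0[f]$ is the constant term of $f$. Since every element of ${\cal G}_y(P, \alpha, d)$ has zero constant term, $A_0(\vec{g}) = \alpha = \h_0[f]$, so $\h_{\leq 0}[f] = \h_{\leq 0}[A_0(\vec{g})]$ holds trivially, and $A_0$ is realized by a single constant gate.

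For the inductive step, assuming $A_{i-1}$ has been constructed, I would set $h := A_{i-1}(\vec{g})$. The inductive hypothesis gives $\h_{\leq i-1}[h] = \h_{\leq i-1}[f]$, so \autoref{lem:vanilla hensel} yields
\[
\h_{\leq i}[f] \;=\; \h_{\leq i}\!\left[h - \tfrac{1}{\delta}\,P(\vx, h)\right].
\]
The key algebraic move is to rewrite the right-hand side as $A_i(\vec{g})$. Using \autoref{lem:taylor} to Taylor-expand $P$ around $\alpha$,
\[
P(\vx, h) \;=\; \sum_{j=0}^{r} \frac{\partial^{j} P}{\partial y^{j}}(\vx, \alpha)\cdot (h - \alpha)^{j},
\]
and noting that $h - \alpha = A_{i-1}(\vec{g}) - A_{i-1}(\vec{0})$ has zero constant term (because $\h_0[g_k] = 0$ for every $k$), the factor $(h - \alpha)^j$ is supported on degrees $\geq j$. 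Inside $\h_{\leq i}$ I may therefore truncate the sum at $j = i$ and replace each $\partial_y^j P(\vx, \alpha)$ by its degree-$\leq d$ truncation; the latter equals $c_j + g^{(j)}$, where $c_j := \h_0[\partial_y^j P(\vx, \alpha)]$ and $g^{(j)}$ is the element of ${\cal G}_y'(P, \alpha, d)$ attached to $j$ (either zero or one of $g_0, \ldots, g_d$). This lets me define
\[
A_i(\vz) \;:=\; A_{i-1}(\vz) - \frac{1}{\delta}\sum_{j=0}^{i}\bigl(c_j + z_{\sigma(j)}\bigr)\bigl(A_{i-1}(\vz) - A_{i-1}(\vec{0})\bigr)^{j},
\]
where $\sigma(j)$ is the index of $g^{(j)}$ in the enumeration $g_0, \ldots, g_d$ (with $z_{\sigma(j)}$ read as $0$ whenever $g^{(j)} = 0$). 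Evaluating at $\vz = \vec{g}$ then gives $\h_{\leq i}[A_i(\vec{g})] = \h_{\leq i}[f]$.

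For the size accounting, I would build the circuit for $A_i$ on top of the circuit for $A_{i-1}$ (of size at most $10 d^2(i-1)$ by induction) by appending: one subtraction to form $A_{i-1}(\vz) - A_{i-1}(\vec{0})$; a chain of at most $i-1$ multiplications to produce the successive powers $(A_{i-1}(\vz) - A_{i-1}(\vec{0}))^j$ for $j \leq i$; $O(i)$ additions and multiplications to assemble $\sum_j (c_j + z_{\sigma(j)})(\cdot)^j$; and one final scaling by $-1/\delta$. The total increment is $O(i) = O(d)$, comfortably inside the slack of $10 d^2$ permitted by the target bound $10 d^2 i$. The main step I expect to have to argue carefully is the commutation of $\h_{\leq i}$ with the two truncations---restricting the Taylor sum to $j \leq i$ and replacing each $\partial_y^j P(\vx, \alpha)$ by $c_j + g^{(j)}$---but both reduce to the observation that $h - \alpha$ has no constant term together with the degree bound $i \leq d$, so this should be routine bookkeeping rather than a serious obstacle.
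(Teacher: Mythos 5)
Your proposal is correct and follows essentially the same route as the paper's proof: induction on $i$, Hensel lifting with $h = A_{i-1}(\vec{g})$, Taylor expansion of $P$ around the constant term, truncation of the Taylor sum at $j=i$ and of each derivative at degree $d$ (valid since $h-\alpha$ has no constant term and $i \le d$), and the same recursive definition of $A_i$ with the affine forms $c_j + z_{\sigma(j)}$ playing the role of the paper's $\ell_j(\vz)$. Your size accounting is even slightly tighter ($O(i)$ added gates per step versus the paper's $O(d^2)$), and both comfortably fit the stated bound $10d^2 i$.
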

This is an analog of the main technical lemma  in~\cite{DSY09}, which we state below. 
\begin{lemma}[\cite{DSY09}]\label{lem:dsy key lemma}
Let $P \in \F[\vx, y]$ and $f\in \F[\vx]$ be polynomials of degree $r$ and $d$ respectively such that $P(\vx, f) = 0$ and $\h_{0}\left[\frac{\partial P}{\partial y}\left(\vx, f(\vx) \right)\right] = \delta \neq 0$. Let $P(\vx, y) = \sum_{i = 0}^k C_i(\vx)\cdot y^i$. Then, for every $i\in \set{1, 2, \ldots, \deg(f)}$, there is a polynomial $A_i(\vz)$  in $k+1$ variables such that, 
 \[
 \h_{\leq i}\left[f\right] = \h_{\leq i}\left[A_i\left(C_0, C_1, \ldots, C_k\right)\right] \, .
 \]
\end{lemma}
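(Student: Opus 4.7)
The plan is to prove Lemma~\ref{lem:dsy key lemma} by straightforward induction on $i$, applying the Hensel lifting lemma (Lemma~\ref{lem:vanilla hensel}) at each step, and observing that one step of Hensel lifting is itself a polynomial operation in the coefficients $(C_0, C_1, \ldots, C_k)$ of $P$ viewed as a polynomial in $y$.

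For the base case, I take $A_0(\vz) := \alpha$, where $\alpha := \h_0[f] \in \F$ is a constant; this is a (constant) polynomial in $\vz = (z_0, \ldots, z_k)$, and trivially $\h_{\leq 0}[f] = \alpha = \h_{\leq 0}[A_0(C_0, \ldots, C_k)]$. For the inductive step, assume we have constructed $A_{i-1}(\vz)$ with
\[
\h_{\leq i-1}[f] \;=\; \h_{\leq i-1}\!\left[A_{i-1}(C_0, \ldots, C_k)\right].
\]
Set $h := A_{i-1}(C_0, \ldots, C_k) \in \F[\vx]$. Since $\h_{\leq i-1}[h] = \h_{\leq i-1}[f]$, Lemma~\ref{lem:vanilla hensel} gives
\[
\h_{\leq i}[f] \;=\; \h_{\leq i}\!\left[h - \tfrac{1}{\delta}P(\vx, h)\right].
\]
Plugging the representation $P(\vx, y) = \sum_{j=0}^k C_j(\vx)\, y^j$ into $h$ yields
\[
P(\vx, h) \;=\; \sum_{j=0}^k C_j(\vx)\cdot h^j \;=\; \sum_{j=0}^k C_j(\vx)\cdot \bigl(A_{i-1}(C_0, \ldots, C_k)\bigr)^j,
\]
which is itself a polynomial in $(C_0, \ldots, C_k)$. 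Hence defining
\[
A_i(\vz) \;:=\; A_{i-1}(\vz) \;-\; \tfrac{1}{\delta}\sum_{j=0}^{k} z_j \cdot A_{i-1}(\vz)^{j}
\]
produces a polynomial in the $k+1$ variables $z_0, \ldots, z_k$ satisfying $A_i(C_0, \ldots, C_k) = h - P(\vx, h)/\delta$, and therefore $\h_{\leq i}[f] = \h_{\leq i}[A_i(C_0, \ldots, C_k)]$, as desired.

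There is essentially no obstacle in this argument: the central observation is simply that one Hensel update is polynomial in $(h, C_0, \ldots, C_k)$, so iterating starting from the constant $\alpha$ keeps the current approximation a polynomial expression in $(C_0, \ldots, C_k)$. Note that I make no attempt here to control the degree or size of $A_i$, since the lemma as stated only asserts existence; the subsequent Lemma~\ref{lem:key lem} strengthens exactly this point by replacing the full tuple of $k+1 = r+1$ coefficient polynomials (which can be prohibitively many) with a carefully chosen generating set of at most $d+1$ polynomials, together with an explicit circuit-size bound on $A_i$.
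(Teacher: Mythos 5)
Your proof is correct: the induction-plus-Hensel-lifting argument with the update $A_i(\vz)=A_{i-1}(\vz)-\tfrac{1}{\delta}\sum_{j=0}^{k} z_j\,A_{i-1}(\vz)^{j}$ is exactly the route of~\cite{DSY09} and mirrors the paper's own proof of the analogous \autoref{lem:key lem} (the paper only cites \autoref{lem:dsy key lemma} rather than reproving it); the sole difference is that you may substitute $P(\vx,h)=\sum_j C_j(\vx)h^j$ directly instead of Taylor-expanding around $f_0$, since no bound on the size of $A_i$ or on the number of generators is required here. Minor nit: $k=\deg_y(P)$ need not equal $r$, so the aside ``$k+1=r+1$'' in your closing remark is an overstatement, though it does not affect the argument.
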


The difference between these lemmas  is that in~\cite{DSY09}, it is shown that there is a set of polynomials of size at most $\deg_y(P) + 1$ which \emph{generate} every homogeneous component of the root $f$. Thus, in the regime of bounded individual degree, the size of this generating set is very small. However,  when $\deg_y(P) \geq n$,~\autoref{lem:dsy key lemma} does not say anything non-trivial since $f$ can be trivially written as a polynomial in the $n$ original variables. In contrast,~\autoref{lem:key lem} continues to say something non-trivial, as long as $d << n$, regardless of the value of $\deg_y(P)$. We now proceed with the proof.  
\begin{proof}[Proof of~\autoref{lem:key lem}]
For the rest of the proof, we think of $P(\vx, y)$ as an element of $\F[\vx][y]$. So, we drop the variables $\vx$ everywhere, and think of $P$ as a univariate in $y$. Thus, $P(y) = P(\vx, y)$. For brevity, we denote $\h_j[f]$ by $f_j$ for every $j \in \N$. We also use ${\cal G}_y$ for ${\cal G}_y(P, f_0, d)$.  The proof will be by induction on $i$ and crucially use~\autoref{lem:vanilla hensel}. 
\begin{itemize}
\item {\bf Base case. } We first prove the lemma for $i = 1$. We invoke~\autoref{lem:vanilla hensel} with $i = 1$ and $h = f_0$. We get that 
\[
\h_{\leq 1}[f] = \h_{\leq 1}\left[f_0- \frac{P(f_0)}{\delta} \right]\, .
\]
The proof follows by observing that $f_0,\delta$ are constants and $\h_{1}\left[ P(f_0)\right] = \h_1\left[g_0 \right]$ where $g_0=\h_{\leq d}\left[ P(f_0)\right] - \h_{0}\left[ P(f_0)\right] \in {\cal G}_y$.
\item {\bf Induction step.} We assume that the claim in the lemma holds up to homogeneous components of degree at most $i-1$, and argue that it holds for $\h_{\leq i}[f]$. We  invoke~\autoref{lem:vanilla hensel} with $h = A_{i-1}(g_0, g_1, \ldots, g_d)$, which exists by the induction hypothesis.
\[
\h_{\leq i}\left[f \right] = \h_{\leq i}\left[ h - \frac{ P(h)}{\delta} \right] \, .
\]
Recall that $\h_0(h) = \h_0(f)$. Thus, $h = f_0 + \tilde{h}$, where every monomial in $\tilde{h}$ has degree at least $1$. By~\autoref{lem:taylor}, 
\[
P(f_0 + \tilde{h}) = P(f_0) + P'(f_0)\cdot \tilde{h} + \cdots + P^{(r)}(f_0)\cdot \tilde{h}^r \, .
\]
Thus, as $\tilde{h}$ has degree at least $1$, we have
\begin{align*}
\h_{\leq i}\left[f \right] &=\h_{\leq i}\left[h - \frac{1}{\delta} \cdot \left(P(f_0) + P'(f_0)\cdot \tilde{h} + \cdots + P^{(r)}(f_0)\cdot \tilde{h}^r \right)\right]\, , \\
&= \h_{\leq i}\left[h - \frac{1}{\delta} \cdot \left(P(f_0) + P'(f_0)\cdot \tilde{h} + \cdots + P^{(i)}(f_0)\cdot \tilde{h}^i \right)\right]\, . \\
\end{align*}
Since we are only interested in $i \leq d$, the following equality is also true. 
\begin{align*}
\h_{\leq i}\left[f \right] &= \h_{\leq i}\left[h - \frac{1}{\delta} \cdot \left(\h_{\leq d}\left[P(f_0)\right] + \h_{\leq d}\left[P'(f_0)\right]\cdot \tilde{h} + \cdots + \h_{\leq d}\left[P^{(i)}(f_0)\right]\cdot \tilde{h}^i \right)\right]\, . \\
\end{align*}
Observe that for every $j \in \set{0, 1, \ldots, d}$, $\h_{\leq d}\left[P^{(j)}(f_0)\right]$ is an affine form in the elements of $\cal G$\footnote{In fact, they are an affine form in one variable.}. For every $j \in \set{0, 1, 2, \ldots, i}$, let $\ell_j(\vz)$ be an affine form such that $\ell_j(g_0, g_1, \ldots, g_d) = \h_{\leq d}\left[P^{(j)}(f_0)\right]$. Now, we define $A_i(\vz)$ as 
\[
A_i(\vz) \equiv A_{i-1}(\vz) - \frac{1}{\delta}\left(\ell_0(\vz) + \ell_1(\vz)\cdot (A_{i-1}(\vz) - f_0) + \cdots + \ell_i(\vz)\cdot (A_{i-1}(\vz) - f_0)^i \right)\, .
\]
The first item in the statement of the lemma is true, just by the definition of $A_i(\vz)$ above. We now argue about the circuit size of $A_i(\vz)$. Each affine form $\ell_i(\vz)$ can be computed by a circuit of size at most $O(d)$. Thus, given a circuit of $A_{i-1}(\vz)$, we can obtain a circuit for $A_i(\vz)$ by adding at most $10d^2$ additional gates. Thus,  $A_i(\vz)$ can be computed by a circuit of size at most $10d^2(i-1) + 10d^2 = 10d^2i$ gates.
\end{itemize}
\end{proof}
The following is an easy corollary of~\autoref{lem:key lem}. 
\begin{corollary}\label{cor:key lem unique}
Let $P(\vx, y) \in \F[\vx, y]$ be a polynomial of degree, and $\alpha \in \F$ be such that $P({\bf 0}, \alpha) = 0$, and $\frac{\partial P}{\partial y}({\bf 0}, \alpha) \neq 0$. Then, for every $k \in \N$, there is a \emph{unique} polynomial $h_k(\vx)$ such that $\deg(h) \leq k$, $h_k(\bf 0) = \alpha$, and $\h_{\leq k}\left[P(\vx, h_k(\vx)) \right] = 0$. 
Moreover, if the polynomials in the set ${\cal G}_y(P, \alpha, k)$ be denoted by $g_0, g_1, \ldots, g_k$. Then, there is a polynomial $A_k(\vz)$  in $k+1$ variables such that the following are true. 
\begin{itemize}
\item $h_k = \h_{\leq k}\left[A_k\left(g_0, g_1, \ldots, g_k\right)\right]$, and
\item $A_k(\vz)$ is computable by a circuit of size at most $10k^3$.
\end{itemize} 
\end{corollary}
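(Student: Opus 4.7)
The corollary has two claims: uniqueness of $h_k$, and the structural description of $h_k$ in terms of the generators in ${\cal G}_y(P, \alpha, k)$ together with the size bound on $A_k$. My plan is to obtain the second by reusing the iterative Hensel construction from~\autoref{lem:key lem} directly, observing that that construction does not actually require a pre-existing global root $f$; it only needs the inductive invariant on the running approximation. Uniqueness is the genuinely new content and will follow from a single-step inversion argument that uses the hypothesis $\tfrac{\partial P}{\partial y}({\bf 0}, \alpha) \neq 0$.

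For existence, I would set $h^{(0)} := \alpha$ and, for $j \geq 1$,
\[
h^{(j)} := \h_{\leq j}\!\left[h^{(j-1)} - \frac{P(\vx, h^{(j-1)})}{\delta}\right], \qquad \delta := \tfrac{\partial P}{\partial y}({\bf 0}, \alpha).
\]
The Taylor-expansion computation behind~\autoref{lem:vanilla hensel} is purely algebraic and propagates the invariant $\h_{\leq j}[P(\vx, h^{(j)})] = 0$ step by step without needing a global root; concretely, if $q := P(\vx, h^{(j-1)})/\delta$ then $q$ vanishes in degrees $< j$ by the inductive hypothesis, so in the Taylor expansion of $P(\vx, h^{(j-1)} - q)$ only the linear term $-P'(\vx, h^{(j-1)})\, q$ contributes in degrees $\le j$, and its constant-in-$\vx$ coefficient is exactly $\delta$, which cancels $\h_j[P(\vx, h^{(j-1)})]$ as desired. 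Setting $h_k := h^{(k)}$ yields a polynomial with the three required properties. The structural formula $h_k = \h_{\leq k}[A_k(g_0, \ldots, g_k)]$, along with the bound $\size(A_k) \leq 10k^3$, is then obtained by transcribing the recursion
\[
A_j(\vz) := A_{j-1}(\vz) - \tfrac{1}{\delta}\left(\ell_0(\vz) + \sum_{m=1}^{j} \ell_m(\vz)\cdot (A_{j-1}(\vz) - \alpha)^m\right)
\]
from the proof of~\autoref{lem:key lem}, where each $\ell_m$ is the affine form in $g_0, \ldots, g_k$ representing $\h_{\leq k}[P^{(m)}(\vx, \alpha)]$; the accounting is identical, $k$ iterations each contributing $O(k^2)$ gates.

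For uniqueness I induct on $k$. The case $k=0$ is forced by $\deg h_0 \leq 0$ and $h_0({\bf 0}) = \alpha$. For the inductive step, let $h_k$ and $\tilde h_k$ both satisfy the hypotheses. Setting $u := \h_{\leq k-1}[h_k]$ and $\tilde u := \h_{\leq k-1}[\tilde h_k]$, writing $h_k = u + v$ with $v := \h_k[h_k]$, and noting via Taylor that every contribution of $v$ to $P(\vx, h_k)$ lies in degrees $\geq k$, we see $\h_{\leq k-1}[P(\vx, u)] = \h_{\leq k-1}[P(\vx, h_k)] = 0$, and similarly for $\tilde u$; so by induction $u = \tilde u$. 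Writing now $h_k = u + v$, $\tilde h_k = u + \tilde v$ with $v, \tilde v$ homogeneous of degree $k$, Taylor's expansion (\autoref{lem:taylor}) together with the fact that $v^m$ has degree $\geq 2k > k$ for $m \geq 2$ gives
\[
\h_{\leq k}\!\left[P(\vx, u + v)\right] = \h_{\leq k}\!\left[P(\vx, u)\right] + \h_0\!\left[P'(\vx, u)\right]\cdot v,
\]
and the same identity for $\tilde v$. Since $u({\bf 0}) = \alpha$, the coefficient $\h_0[P'(\vx, u)] = \delta \neq 0$, and so $v$ and $\tilde v$ are pinned down by the same linear equation, forcing $v = \tilde v$. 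The only delicate point is this Hensel-step invertibility, which is precisely where the nondegeneracy hypothesis $\delta \neq 0$ is essential; every remaining step is bookkeeping already done in~\autoref{lem:key lem}.
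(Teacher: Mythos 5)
Your proof is correct and follows the route the paper intends: the structural claim and the size bound come from re-running the recursion of~\autoref{lem:key lem} with $f_0$ replaced by $\alpha$, and the accounting is unchanged. Since the paper states this corollary without proof, the two pieces you supply are exactly what is needed and both are sound --- namely, that the Hensel step propagates the invariant $\h_{\leq j}\left[P(\vx, h^{(j)})\right] = 0$ without presupposing a global polynomial root $f$ (which the corollary's hypotheses do not provide), and that uniqueness follows by induction because each top homogeneous component is pinned down by the linear equation $\delta\cdot \h_{k}\left[h_k\right] = -\h_{k}\left[P\left(\vx, \h_{\leq k-1}\left[h_k\right]\right)\right]$ with $\delta \neq 0$.
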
 

The lemma  initially starts with an $\alpha \in \F$ such that $\alpha$ is a root of multiplicity $1$ of $P({\bf 0}, y)$. And, starting from this $\alpha$, we can lift \emph{uniquely} to a polynomial $h_i$ which is an \emph{approximate} root of the polynomial $P$. This corollary will be useful later on in the paper, when we study the structure of factors of $P$ which are not linear in $y$. And, the uniqueness will be important for this. 

We are now ready to complete the proof of~\autoref{thm:root ckt ub}. 
\begin{proof}[Proof of~\autoref{thm:root ckt ub}]
The first step is to massage the circuit for $P$ so that the hypothesis of~\autoref{lem:key lem} holds. We will have to keep track of the size and depth blow ups incurred in the process. We begin by ensuring that $f$ is a root of multiplicity $1$ of some polynomial related to $P$. 
\paragraph*{Reducing multiplicity of the root $f$. }
Let $P(\vx, y) = \sum_{i = 0}^r y^iC_i(\vx)$. Let $m \geq 1$ be the multiplicity of $f$ as a root of $P(\vx, y)$. Thus, $\frac{\partial^{j} P}{\partial y^j}\left(\vx, f\right) = 0$ for $j \in \set{0, 1, 2, \ldots, m-1}$, but $\frac{{\partial^{m} P}}{{\partial y^m}}\left(\vx, f\right) \neq 0$. The idea is to just work with the polynomial $\tilde{P} = \frac{{\partial^{m-1} P}}{{\partial y^{m-1}}}\left(\vx, y\right)$ for the rest of the proof. Clearly, $f$ is a root of multiplicity exactly $1$ of $\tilde{P}$. We only need to ensure that $\tilde{P}$ can also be computed by a small low depth circuit. This follows from the proof of the third item in~\autoref{lem:G properties}, where we argued that $\frac{{\partial^{j} P}}{{\partial y^{j}}}\left(\vx, y\right)$ has  a depth $\Delta$ circuit of size $\poly(s,r)$.
\paragraph*{Translating the origin. }From the step above, we can assume without loss of generality that $\frac{\partial P}{\partial y}\left(\vx, f\right) \neq 0$. Thus, there is a point $\va \in \F^n$ such that $\frac{\partial P}{\partial y}\left(\va, f(\va)\right) \neq 0$. By translating the origin, we will assume that $\frac{\partial P}{\partial y}\left({\bf 0}, f(0)\right) \neq 0$. This increases the depth of the circuit by at most $1$, as it could involve replacing every variable $x_i$ by $x_i + a_i$, and the size by at most a factor $n$. 
\paragraph*{Degree of $A_d$. } From~\autoref{lem:key lem}, we know that the polynomial $A_d(\vz)$ has a circuit of size at most $O(d^3)$. To obtain a circuit for $f$, we first prune away all the homogeneous components of $A_d(\vz)$ of degree larger than $d$. Recall that by definition, for every polynomial $g_i \in {\cal G}_y$, every non-zero monomial in $g_i$ has degree at least $1$, and that $f = \h_{\leq d}\left[A_d(g_1, g_2, \ldots, g_d)\right]$. Thus, any monomial of degree strictly greater than $d$ in $A_d(\vz)$ contributes no monomial of degree at most $d$ in the variables $\vx$ in the composed polynomial $A_d(g_1, g_2, \ldots, g_d)$, and hence does not contribute anything to the computation of $f$. So, we can confine ourselves to working with the homogeneous components of $A_d(\vz)$ of degree at most $d$. 

By~\autoref{thm:homog strassen}, we know that given a circuit for $A_d(\vz)$, we can construct a circuit for $\h_{i}\left[A_d(\vz) \right]$ by increasing the size of the circuit by a multiplicative factor of at most $O(i^2)$. Thus, $\h_{\leq d}[A_d(\vz)]$ can be computed by a circuit of size at most $O(d^3)\times \size(A_d(\vz))$. Thus, for the rest of this argument, we will assume that $A_d(\vz)$ has a circuit of size at most $O(d^6)$ and degree at most $d$, and 
\[
f = \h_{\leq d}\left[A_d(g_1, g_2, \ldots, g_d)\right] \, .
\] 
\paragraph*{Circuit for $A_d(\vz)$ of small depth. }
Given that $A_d(\vz)$ has a circuit of size $O(d^6)$ and degree at most $d$, by~\autoref{thm:depth-3 chasm}, we know that $A_d(\vz)$ can be computed by a  $\sps$ circuit $\Psi$ of size at most $d^{O(\sqrt{d})}$\footnote{Instead of~\autoref{thm:depth-3 chasm}, one could use~\autoref{thm:depth-2k chasm} to get a better size bound than $d^{O(\sqrt{d})}$ at the cost of increasing its depth appropriately.}. Similar results follow from the application of~\autoref{thm:depth-2k chasm}. 

\paragraph*{Circuit for $f$ of small depth. }Composing the $\sps$ circuit $\Psi$ for $A_d(\vz)$ with the circuits of $g_1, g_2, \ldots, g_d \in {\cal G}_y$, we get a circuit $\Psi'$ with the following properties. 
\begin{itemize}
\item The size of $\Psi'$ is at most $(srn)^{10}\cdot d^{O(\sqrt{d})})$. 
\item The depth of $\Psi'$ is at most $\Delta + 3$. This follows by combining the bottom $\sum$ layer of the $\sps$ circuit for $A_d(\vz)$ with the top $\sum$ layer of the circuits for $g_i \in {\cal G}_y$.
\item The degree of $\Psi'$ is at most $d^2$. This is true because the degree of $A_d(\vz)$ is at most $d$ (as argued earlier in this proof), and the degree of every polynomial in ${\cal G}_y$ is at most $d$ (first item in~\autoref{lem:G properties}).
\item $f = \h_{\leq d}\left[\Psi'(\vx)\right]$.
\end{itemize}
To obtain a circuit for $f$, we apply~\autoref{lem:interpolation} to $\Psi'$. This increases the size of $\Psi'$ by a multiplicative factor of at most $O(d^2)$, while the depth remains the same. This completes the proof of the theorem.
\end{proof}


\section{From Roots to Arbitrary Factors}
In this section, we show that~\autoref{thm:root ckt ub} essentially generalizes to arbitrary factors, and not necessarily factors of the form $y-f(\vx)$, up to some loss in the size and depth parameters. The techniques for this generalization are quite standard and well known in this literature, and our presentation here follows the approach of Oliveira~\cite{O16}. We sketch the main steps towards obtaining circuits for arbitrary factors.  
\paragraph*{Making the Polynomial Monic in $y$. } Starting with an arbitrary polynomial $P(\vx, y)$, we first make sure that it is monic in $y$. We do this by taking an invertible linear transformation $x_i \rightarrow x_i + a_i\cdot y$, where the vector $\va$ is chosen randomly from some large enough grid. Indeed, assume that $\deg(P)= r$. Let us consider the homogeneous component of degree $r$ of $P(\vx, y)$.  Since $\h_{r}[P(\vx,y)]$ is homogeneous in $(\vx,y)$ of degree $r$, so $\h_r[P(\vx, y)] = P_r(\vx/y, 1)\cdot y^r$ for a polynomial $P_r$, implying that $P_r(\vx/y,1)$ is not the zero polynomial, so we can write
$$P(\vx + \va y, y) = P_r(\va, 1)y^r + \text{lower order terms (in }y)\, .$$

By~\autoref{lem:sz lemma}, there exists some $\va \in [r+1]^n$, with $P_r(\va,1) \ne 0$. Thus, in the inverted coordinate system, the leading coefficient of $P(\vx + \va y, y)$ (as a polynomial in $y$), is some non-zero element of the field $\F$, and, without loss of generality, we can take it to be $1$. 

If $P(\vx,y)$ is monic, then so are its factors. To see this, we first apply the following lemma due to Gauss, to deduce that its factors are elements in $\F[\vx,y]$. 
\begin{lemma}[Gauss' Lemma]\label{lem:Gauss}
	Let $R$ be a Unique Factorization Domain with a field of fractions $F$ and let $f(y) \in R[y]$. If $f(y)$ is reducible over $F[y]$, then $f$ is reducible over $R[y]$.
\end{lemma}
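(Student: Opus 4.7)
The plan is the standard content argument. First, I would introduce the \emph{content} $c(f)$ of a polynomial $f \in R[y]$ as a gcd of its coefficients (well-defined up to units since $R$ is a UFD), and call $f$ \emph{primitive} if $c(f)$ is a unit. Every nonzero $f \in R[y]$ factors as $f = c(f) \cdot f^{*}$ with $f^{*}$ primitive, uniquely up to units. The key preliminary fact I need is: the product of two primitive polynomials is primitive. I would prove this by contradiction --- if some prime $p \in R$ divides every coefficient of $g_1 h_1$, then passing to the quotient $(R/p)[y]$, which is an integral domain because $R/p$ is, the image $\overline{g_1}\cdot\overline{h_1}$ vanishes, forcing one of $\overline{g_1},\overline{h_1}$ to be zero and contradicting primitivity of $g_1$ or $h_1$.

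Given this, suppose $f = gh$ with $g, h \in F[y]$ both non-constant (which is what reducibility of $f$ over $F[y]$ means, since $f$ itself has positive degree and units in $F[y]$ are just elements of $F^{\times}$). Clear denominators: choose $\alpha, \beta \in R \setminus \{0\}$ so that $\alpha g, \beta h \in R[y]$. Then
\[
\alpha \beta f \;=\; (\alpha g)(\beta h) \qquad \text{in } R[y].
\]
Now extract primitive parts: write $\alpha g = c_1 g_1$, $\beta h = c_2 h_1$, and $f = c_f f^{*}$ with $g_1, h_1, f^{*}$ primitive and $c_1, c_2, c_f \in R$. Substituting,
\[
\alpha \beta c_f \, f^{*} \;=\; c_1 c_2 \, g_1 h_1.
\]
By the preliminary lemma, $g_1 h_1$ is primitive, so the contents of the two sides are equal up to units: $\alpha \beta c_f = u \cdot c_1 c_2$ for some $u \in R^{\times}$, and hence $f^{*} = u^{-1} g_1 h_1$. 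Multiplying back by $c_f$ yields
\[
f \;=\; c_f f^{*} \;=\; \bigl(c_f\, u^{-1}\, g_1\bigr) \cdot h_1,
\]
a factorization of $f$ inside $R[y]$ whose factors have degrees $\deg g \geq 1$ and $\deg h \geq 1$, so neither is a unit. This exhibits $f$ as reducible over $R[y]$.

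The only nontrivial step is the product-of-primitives lemma; everything else (clearing denominators, splitting off contents, comparing them up to units) is routine bookkeeping in the UFD $R$. I expect no real obstacle since the result is classical; the only thing to keep careful track of is that ``content'' in a general UFD is only defined up to units, so all content identities must be read modulo $R^{\times}$, which is handled cleanly by carrying the unit $u$ through the computation.
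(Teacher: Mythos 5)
Your proof is correct: it is the classical content argument (primitive parts, Gauss's product-of-primitives lemma via reduction modulo a prime element, then clearing denominators and comparing contents up to units), and all the steps check out, including the degree bookkeeping that makes both resulting factors non-units of $R[y]$. The paper does not actually prove this lemma --- it only cites Lemma 12.2 of an external reference --- so there is nothing to compare against; your argument is precisely the standard proof that such a reference contains.
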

See Lemma 12.2 in~\cite{algcomp17} for a proof of this lemma. Now, we have the following simple observation.
\begin {observation}
Put $R=\F[x]$ be a Unique Factorization Domain. 
If $P \in R[y]$ is a monic polynomial in $y$, and $P=g \cdot h$, where $g,h \in R[y]$, then the leading coefficients of $g$ and $h$ in $y$ belong to $\F \setminus \{0\}$.
\end {observation}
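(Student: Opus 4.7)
The plan is to give a very short, essentially one-line argument using only the fact that $R = \F[\vx]$ is an integral domain together with the identification of its units.

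First, I would write $g$ and $h$ explicitly in terms of their $y$-expansions: say $g = g_a(\vx)\, y^a + g_{a-1}(\vx)\, y^{a-1} + \cdots$ and $h = h_b(\vx)\, y^b + h_{b-1}(\vx)\, y^{b-1} + \cdots$, where $g_a, h_b \in \F[\vx]$ are by assumption non-zero (they are the leading coefficients in $y$). Since $R = \F[\vx]$ is an integral domain, the product $g_a(\vx)\cdot h_b(\vx)$ is also non-zero in $R$, and it equals the coefficient of $y^{a+b}$ in the product $g\cdot h$. Hence the leading coefficient of $P = g\cdot h$ in $y$ is exactly $g_a(\vx)\cdot h_b(\vx)$.

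Next I would use the monicity hypothesis: since $P$ is monic in $y$, its leading coefficient is $1$, so
\[
g_a(\vx)\cdot h_b(\vx) \;=\; 1 \quad \text{in } \F[\vx].
\]
This equation says that $g_a$ and $h_b$ are units in the ring $\F[\vx]$. Finally, the only units in the polynomial ring $\F[\vx]$ over a field $\F$ are the non-zero scalars, so both $g_a$ and $h_b$ lie in $\F\setminus\{0\}$, which is exactly the claim.

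There is no real obstacle here; the argument rests on two standard ring-theoretic facts (the leading-coefficient-of-a-product identity in a domain, and the characterization of units in $\F[\vx]$), both of which are elementary. The purpose of the observation in context is simply to let us henceforth rescale $g$ and $h$ by the non-zero field constants $g_a$ and $h_b$ so that, if desired, every factor of a monic polynomial can be taken to be monic as well without leaving $\F[\vx][y]$.
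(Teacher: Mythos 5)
Your proof is correct and follows essentially the same route as the paper's: expand $g$ and $h$ in $y$, note that the leading coefficient of the product is the product of the leading coefficients (using that $\F[\vx]$ is a domain), equate it to $1$ by monicity, and conclude that both are units, i.e.\ elements of $\F\setminus\{0\}$. If anything, your version is slightly cleaner, since the paper's appeal to Gauss' Lemma at the invertibility step is unnecessary --- the relation $a_d\cdot b_{d'}=1$ already makes both factors units.
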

\begin {proof} 
Write $g = \sum_{i=0}^{d} a_i(x) y^i$. and $f = \sum_{i=0}^{d'} b_i(x) y^i$, where $d = \deg_y(g), d' = \deg_y(h)$.
As $P$ is monic in $y$, its leading coefficient, as a polynomial in $y$ with coefficients in $R$, is $1$. Since $P=g\cdot h$, it follows that $1 = a_{d} \cdot b_{d'}$, implying that $a_d$ and $b_{d'}$ are invertible elements in $R$ by~\autoref{lem:Gauss}. Thus, $a_d$ and $b_{d'}$ are precisely the elements in $\F\setminus \{0\}$.
\end {proof}
Thus, for the rest of this section, we will assume that all the factors of $P(\vx, y)$ are also monic in $y$. 
\paragraph*{Working over the algebraic closure of $\F(\vx)$. }
As above, $P$ is monic in $y$ with $deg_y(P) = r$, that is,
$$P(\vx, y) = y^r + \sum_{i=0}^{r-1} P_r(\vx) y^i.$$
Assume that $P$ does not factor into linear factors in $y$, and that $f(\vx,y)$ is one of its factors, of degree $k$ in $y$. Since $P$ is monic in $y$, we know that $f$ must also be monic in $y$ by~\autoref{lem:Gauss}. Working over the algebraic closure of $\F(\vx)$ (that is, the field $\overline{\F(\vx)}$), we can factor $P$ (and $f$) into linear factors in $y$. The algebraic closure of $\F(\vx)$ is a complicated object, but we only need to think of elements of the closure as ``functions'' over the variables in $\vx$. Since $f$ divides $P$ , if
$$P(\vx,y) = \prod_{i=1}^r (y-\varphi_i(\vx))\, ,$$
Without loss of generality, assume the first $d$ of these $\varphi_i$ corresponds to roots of $f$, we have
$$f(\vx,y) = \prod_{i=1}^d (y-\varphi_i(\vx))\, .$$
We note that $\varphi_i(\vx)$ may not be polynomials in $\vx$\footnote{As shown in~\cite{DSS17}, $\varphi_i(\vx)$ could be power series in $\vx$.}. Still, the fact that they share some roots in the closure of $\F(\vx)$ gives us a way to approximate them, using Hensel's lifting, similar to~\autoref{lem:key lem}. For the rest of our argument, we first need to ensure some non-degeneracy conditions. 
\paragraph*{Reducing the multiplicity of $f$ in $P$. } We first make sure that $f$ is a factor of multiplicity $1$ of $P$; if $f$ is a factor of multiplicity $m > 1$, we can replace $P$ by $\tilde{P} = \frac{{\partial^{m-1} P}}{{\partial y^{m-1}}}\left(\vx, y\right)$. Clearly, $f$ is a factor of multiplicity exactly $1$ of $\tilde{P}$. Ensuring that $\tilde{P}$ can also be computed by a small low depth circuit, follows from the proof of the third item in~\autoref{lem:G properties}, where we argued that $\frac{{\partial^{j} P}}{{\partial y^{j}}}\left(\vx, y\right)$ has  a depth $\Delta$ circuit of size $O(sr^3)$.
So, for the rest of the proof, we will assume that $f$ is a factor of $P$ of multiplcity equal to $1$.
\paragraph*{Properly Separating Shifts. }
To proceed further, we want a shift in $\vx$ such that each factor has no repeating roots in $y$ and distinct factors share no root in $y$. This follows from the below lemma from~\cite{O16}, which we state without a proof. 
\begin{lemma}[Lem. 3.6 in~\cite{O16}]\label{lem:properly separating shifts}
Let $f(\vx, y), g(\vx, y) \in \F[\vx, y]$ be polynomials such that $\deg_y(f) \geq 1, \deg_y(g) \geq 1$, $f$ is irreducible and $f$ does not divide $g$. Then, there is a $\vc \in \overline{\F}^n$ such that 
\begin{itemize}
\item $f(\vc, y)$ is a polynomial with exactly $\deg_y(f)$ distinct roots in $\F$, and
\item $f(\vc, y)$ and $g(\vc, y)$ have no common roots. 
\end{itemize}
\end{lemma}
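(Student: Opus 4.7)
The plan is to reduce the existence of the desired $\vc$ to the non-vanishing of a single polynomial in $\F[\vx]$, constructed from resultants in $y$. First I would let $\ell_f(\vx) \in \F[\vx]$ denote the leading coefficient of $f$ when viewed as an element of $\F[\vx][y]$; this is non-zero since $\deg_y(f) \geq 1$. A sufficient condition for $f(\vc, y)$ to have exactly $\deg_y(f)$ distinct roots (in $\overline{\F}$) is that (i) $\ell_f(\vc) \neq 0$, so the $y$-degree is preserved, and (ii) the discriminant of $f(\vc, y)$ is non-zero, so there are no repeated roots. Condition (ii) corresponds to the non-vanishing of $R_1(\vx) := \operatorname{Res}_y\!\left(f,\ \partial f/\partial y\right) \in \F[\vx]$ once $\ell_f(\vc) \neq 0$ is guaranteed.

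To enforce the second conclusion of the lemma, I would introduce the resultant $R_2(\vx) := \operatorname{Res}_y(f, g) \in \F[\vx]$. A standard property of resultants is that, provided the leading coefficient in $y$ of at least one of $f(\vc,\cdot)$ and $g(\vc,\cdot)$ survives, $R_2(\vc) \neq 0$ forces $f(\vc, y)$ and $g(\vc, y)$ to share no common root in $\overline{\F}$. Since we will already arrange $\ell_f(\vc) \neq 0$, this is exactly what we need.

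The heart of the argument is showing that $R_1$ and $R_2$ are not the zero polynomial. For $R_1$: in characteristic zero, $\partial f/\partial y \in \F[\vx, y]$ is non-zero and has strictly smaller $y$-degree than $f$. Irreducibility of $f$ in $\F[\vx, y]$ (hence in $\F(\vx)[y]$ by \autoref{lem:Gauss}) implies $\gcd_{\F(\vx)[y]}(f, \partial f/\partial y) = 1$, since any common factor of positive $y$-degree would have to equal $f$ up to a unit in $\F(\vx)$, contradicting the degree gap. Therefore $R_1 \not\equiv 0$. For $R_2$: irreducibility of $f$ in $\F[\vx, y]$ together with $f \nmid g$ in $\F[\vx, y]$ gives, again via \autoref{lem:Gauss}, irreducibility of $f$ in $\F(\vx)[y]$ and $f \nmid g$ in $\F(\vx)[y]$; hence $\gcd_{\F(\vx)[y]}(f, g) = 1$ and $R_2 \not\equiv 0$.

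Finally, I would form $H(\vx) := \ell_f(\vx) \cdot R_1(\vx) \cdot R_2(\vx) \in \F[\vx]$, which is non-zero by the above. Since $\overline{\F}$ is infinite, a standard application of \autoref{lem:sz lemma} (over any sufficiently large finite subset of $\overline{\F}$) produces $\vc \in \overline{\F}^{\,n}$ with $H(\vc) \neq 0$, and this $\vc$ satisfies both required conclusions. The only mild subtleties are bookkeeping ones: verifying that the geometric conclusions about roots really do follow from non-vanishing of the corresponding resultants, which is routine once $\ell_f(\vc) \neq 0$ is secured; and the reliance on characteristic zero to ensure $\partial f/\partial y \neq 0$. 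In positive characteristic, the same proof goes through provided one assumes $\operatorname{char}(\F) \nmid \deg_y(f)$, or more generally that $f$ is separable in $y$, which matches the ``large characteristic'' caveat used throughout the paper.
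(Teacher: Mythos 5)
Your argument is correct. Note that the paper does not prove this lemma at all --- it is imported verbatim from \cite{O16} (``which we state without a proof''), so there is nothing to compare against in the text itself; your resultant/discriminant argument is the standard one and, as far as I can tell, matches the proof in \cite{O16}. The key points are all in place: $\ell_f(\vc)\neq 0$ guarantees that specialization preserves the $y$-degree and that the specialized resultants agree with the specializations of $\operatorname{Res}_y(f,\partial f/\partial y)$ and $\operatorname{Res}_y(f,g)$ up to powers of $\ell_f(\vc)$; irreducibility of $f$ plus \autoref{lem:Gauss} (and, for $R_1$, the degree drop of $\partial f/\partial y$, which needs separability of $f$ in $y$ --- automatic in characteristic zero, as you note) gives coprimality in $\F(\vx)[y]$ and hence non-vanishing of both resultants; and a Schwartz--Zippel point for $\ell_f\cdot R_1\cdot R_2$ over the infinite field $\overline{\F}$ finishes it. One cosmetic remark: the lemma as stated asks for roots ``in $\F$,'' which should be read as $\overline{\F}$ (consistent with $\vc\in\overline{\F}^n$); your proof correctly delivers the latter.
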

 Now, let us consider the polynomial $g = P/f$. Since $f$ is factor of multiplicity $1$ of $P$,  $f$ does not divide $g$. From~\autoref{lem:properly separating shifts}, we know that there is a $\vc\in \F^n$ such that $f(\vc, y)$ and $g(\vc, y)$ do not share a root, and all the roots of $f(\vc, y)$ are distinct. At the cost of increasing the depth of the circuit of $P$ by $1$, we can assume without loss of generality that $\vc$ is the origin. So, for the rest of the proof, we assume that $f({\bf 0}, y)$ has no repeating roots, and $f({\bf 0}, y)$ and $g({\bf 0}, y)$ share no common roots. Let $\alpha_1, \alpha_2, \ldots, \alpha_r$ be the roots of $P({\bf 0}, y)$ and let $\alpha_1, \alpha_2, \ldots, \alpha_d$ be the roots of $f({\bf 0}, y)$. 
\paragraph*{Approximating the roots of $P$. }
The goal of this step is to approximate the roots of $P$ by small degree polynomials with small circuits.
From the previous paragraph, we know that for $i \in [d]$, $P({\bf 0}, \alpha_i) = 0$ and $\frac{\partial P}{\partial y}({\bf 0}, \alpha_i) \neq 0$. 
Thus, from~\autoref{cor:key lem unique}, there are polynomials $q_1, q_2, \ldots, q_d$ of degree at most $d$ such that for every $i \in [d]$, there is a polynomial $A_{i,d}(\vz)$  in $d+1$ variables such that the following are true. 
\begin{itemize}
\item $q_i({\bf 0}) = \alpha$, and
\item $q_i = \h_{\leq d}\left[A_{i, d}\left(g_{i,0}, g_{i,1}, \ldots, g_{i, d}\right)\right]$, and
\item $A_{i, d}(\vz)$ is computable by a circuit of size at most $10d^3$.
\end{itemize} 
Here, for every $i \in [d]$, $g_{i,0}, g_{i,1}, \ldots, g_{i, d}$ are the polynomials in the set ${\cal G}_y(P, \alpha_i, d)$. Thus, we have degree $d$ polynomials, which are approximations of the roots of $P$, the constant terms of these polynomials agree with the roots of $f(\vx, 0)$ and these approximate roots have \emph{small} low depth circuits. Moreover, We will now combine these approximations to obtain a circuit for $f$. 
\paragraph*{Obtaining a Circuit for $f$. }
In this final step, we are going to obtain circuit for $f$ from the polynomials $q_1, q_2, \ldots, q_d$ in the previous step. The first observation is that the $q_1, q_2, \ldots, q_d$ are also approximate roots of the polynomial $f$. To see this, observe that by our choice, $\alpha_1, \alpha_2, \ldots, \alpha_d$ are distinct roots of $f({\bf 0}, y)$. Thus, for each $i \in [d]$, $f({\bf 0}, \alpha_i) = 0$ and $\frac{\partial f}{\partial y}({\bf 0}, \alpha_i) \neq 0$. Thus, by~\autoref{cor:key lem unique}, there are degree $d$ polynomials $\tilde{q}_1, \tilde{q}_2, \ldots, \tilde{q}_d$ of degree at most $d$ such that $\h_{\leq d}[f(\vx, \tilde{q}_i(\vx))] = 0$. Thus, we also have $\h_{\leq d}[P(\vx, \tilde{q}_i(\vx))] = 0$. So, by the uniqueness condition in~\autoref{cor:key lem unique}, we get that the set of polynomials $\{\tilde{q}_i : i \in [d]\}$ must be the same as $\{q_i : i \in [d]\}$. 

Next, to obtain a circuit for $f$, we now claim that 
\[
f(\vx, y) = \h_{\leq d}\left[\prod_{i = 1}^d \left(y-q_i(\vx) \right)\right] \, .
\]
The proof of this fact follows immediately from Lemma 5.4 in~\cite{O16}. We state a special case, which suffices for our application.
\begin{lemma}[Lem 5.4 in~\cite{O16}]\label{lem:combining approx roots}
Let $P(\vx, y)$ and $f(\vx, y)$ be polynomials of degree $r$ and $d$ respectively,  such that $P$ and $f$ are monic in $y$, $f$ is a factor of $P$ and all the roots of $f({\bf 0}, y)$ are distinct and roots of multiplicity exactly one of $P({\bf 0}, y)$. Let $\alpha_1, \alpha_2, \ldots, \alpha_d$ be the roots of $f({\bf 0}, y)$ and let $q_1, q_2, \ldots, q_d \in \F[\vx]$ be polynomials of degree at most $d$ such that for every $i\in [d]$, 
\begin{itemize}
\item $q_i(\bf 0) = \alpha_i$, 
\item $\h_{\leq d}[P(\vx, q_i(\vx))] = \h_{\leq d}[f(\vx, q_i(\vx))] = 0$. 
\end{itemize}
Then, 
\[
f = \h_{\leq d}\left[\prod_{i = 1}^d \left( y - q_i(\vx)\right)\right] \, .
\]
\end{lemma}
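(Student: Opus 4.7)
The plan is to work inside the formal power series ring $\F[[\vx]]$, where $f$ splits into linear factors in $y$, and to use the uniqueness part of \autoref{cor:key lem unique} to match the given polynomial approximations $q_i$ with the degree-$d$ truncations of the true power-series roots $\varphi_i$ of $f$. A short bookkeeping of total degrees will then compare $\prod_i(y - q_i)$ with $\prod_i(y - \varphi_i)$ up through degree $d$.

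First, since $f$ is monic in $y$ of $y$-degree $d$ and $f({\bf 0}, y)$ has $d$ distinct simple roots $\alpha_1, \ldots, \alpha_d$, Hensel's lemma over the complete local ring $\F[[\vx]]$ lifts each $\alpha_i$ to a unique power series $\varphi_i \in \F[[\vx]]$ with $\varphi_i({\bf 0}) = \alpha_i$ and $f(\vx, \varphi_i) = 0$, yielding the factorization $f(\vx, y) = \prod_{i=1}^d (y - \varphi_i(\vx))$ in $\F[[\vx]][y]$. Next I want to identify $q_i$ with the truncation $\tilde q_i$ of $\varphi_i$ to $\vx$-degree at most $d$. Writing $\varphi_i = \tilde q_i + \delta_i$, where $\delta_i$ is supported only in $\vx$-degrees $\geq d+1$, and Taylor expanding
\[
0 \;=\; f(\vx, \varphi_i) \;=\; f(\vx, \tilde q_i) \;+\; f_y(\vx, \tilde q_i)\cdot \delta_i \;+\; \cdots
\]
around $\tilde q_i$, every term beyond $f(\vx, \tilde q_i)$ is divisible by some positive power of $\delta_i$ and so lives in $\mathfrak{m}^{d+1}$; hence $\h_{\leq d}[f(\vx, \tilde q_i)] = 0$. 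Since $f({\bf 0}, \alpha_i) = 0$ and $f_y({\bf 0}, \alpha_i) \neq 0$ by the simple-root hypothesis, \autoref{cor:key lem unique} applied to $f$ says there is a \emph{unique} polynomial $h$ of degree $\leq d$ with $h({\bf 0}) = \alpha_i$ and $\h_{\leq d}[f(\vx, h)] = 0$; both $q_i$ and $\tilde q_i$ satisfy this, forcing $q_i = \tilde q_i$, so $\epsilon_i := \varphi_i - q_i$ is supported only in $\vx$-degrees $\geq d+1$.

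Finally, expanding in $\F[[\vx]][y]$,
\[
\prod_{i=1}^d (y - q_i) \;=\; \prod_{i=1}^d \bigl((y - \varphi_i) + \epsilon_i\bigr) \;=\; f(\vx, y) \;+\; \sum_{\emptyset \neq S \subseteq [d]} \Bigl(\prod_{i \in S}\epsilon_i\Bigr)\prod_{j \notin S}(y - \varphi_j).
\]
For any $S \neq \emptyset$ the factor $\prod_{i \in S}\epsilon_i$ is a pure power series in $\vx$ whose every monomial has $\vx$-degree $\geq |S|(d+1) \geq d+1$, while the remaining factor only contributes further $\vx$-monomials and powers of $y$; hence every monomial in each cross term has total $(\vx,y)$-degree $\geq d+1$, and $\h_{\leq d}$ annihilates the entire sum. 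Combined with $\h_{\leq d}[f] = f$ (as $\deg f \leq d$), this yields $f = \h_{\leq d}\bigl[\prod_{i=1}^d (y - q_i)\bigr]$. The most delicate step is the matching $q_i = \tilde q_i$ via \autoref{cor:key lem unique}; everything else is routine power-series algebra and total-degree bookkeeping.
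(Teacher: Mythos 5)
Your proof is correct. The paper itself does not prove this lemma (it is quoted from Oliveira's work \cite{O16}), but your argument follows exactly the framework the surrounding section sets up: factor $f$ into $\prod_i(y-\varphi_i)$ over the power-series ring via Hensel lifting of the distinct simple roots $\alpha_i$, use the uniqueness statement of \autoref{cor:key lem unique} (applied to $f$, which is legitimate since each $\alpha_i$ is a simple root of $f(\mathbf{0},y)$) to conclude $q_i = \h_{\leq d}[\varphi_i]$, and then observe that every cross term in the expansion of $\prod_i(y-q_i)$ carries a factor $\epsilon_i$ supported in degrees $\geq d+1$ and is therefore annihilated by $\h_{\leq d}$. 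The degree bookkeeping and the verification of the hypotheses of \autoref{cor:key lem unique} for $\tilde q_i$ are all in order.
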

Thus, given the circuits for $q_i(\vx)$, we can obtain a circuit for $f(\vx, y)$ by increasing the depth by at most two (a product layer, and then a sum layer for interpolation), and size by a $\poly(d)$ factor. In summary, we have the following two statements. 
\begin{lemma}\label{lem:key lem gen}
Let $P \in \F[\vx, y]$ and $f\in \F[\vx, y]$ be polynomials of degree $r$ and $d$ respectively such that $P$ is monic in $y$ and $f$ is an irreducible factor of $P$. Then, there exist $\vc \in \F^n$, $\alpha_1, \alpha_2, \ldots, \alpha_d \in \F$  and a polynomial $B(\vz)$ of degree at most $d$ in $t = O(d^2)$ variables, such that the following are true. 
\begin{itemize}
\item $f(\vx + \vc, y) = \h_{\leq d}\left[B\left(g_0, g_1, \ldots, g_t\right)\right]$, where $g_1, g_2, \ldots, g_t$ are polynomials in the set \\
$\bigcup_{i = 1}^d {\cal G}_y\left(P(\vx + \vc , y), \alpha_i, d\right)$. 
\item $B(\vz)$ is computable by a circuit of size at most $\poly(d)$.
\end{itemize} 
\end{lemma}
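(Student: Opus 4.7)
The plan is to assemble the ingredients developed in this section into the compact form of the lemma, essentially packaging the construction walked through in the preceding discussion. First, I will preprocess $P$ by a linear change of variables $x_i \mapsto x_i + a_i y$ (justified via \autoref{lem:Gauss} and \autoref{lem:sz lemma}) so that $P$ is monic in $y$. If $f$ appears in $P$ with multiplicity $m \geq 2$, I replace $P$ by $\partial^{m-1} P/\partial y^{m-1}$, which has $f$ as a factor of multiplicity exactly one and (by the argument in the proof of \autoref{lem:G properties}) still has small low-depth circuit complexity. Finally, setting $g = P/f$ and invoking \autoref{lem:properly separating shifts}, I choose a shift vector $\vc$ so that after replacing $P$ by $P(\vx + \vc, y)$, the roots $\alpha_1, \ldots, \alpha_d$ of $f({\bf 0}, y)$ are distinct and are simple roots of $P({\bf 0}, y)$.

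Next, for each $i \in [d]$ I apply \autoref{cor:key lem unique} to the simple root $\alpha_i$ of $P({\bf 0}, y)$, obtaining a unique polynomial $q_i(\vx)$ of degree at most $d$ with $q_i({\bf 0}) = \alpha_i$ and $\h_{\leq d}[P(\vx, q_i)] = 0$, together with a polynomial $A_{i,d}$ in $d+1$ variables of circuit size $O(d^3)$ satisfying $q_i = \h_{\leq d}[A_{i,d}(g_{i,0}, \ldots, g_{i,d})]$, where $g_{i,0}, \ldots, g_{i,d}$ are the elements of ${\cal G}_y(P, \alpha_i, d)$. Using \autoref{thm:homog strassen} (together with the fact that each $g_{i,j}$ has no constant term by \autoref{lem:G properties}) I may further assume that $A_{i,d}$ itself has degree at most $d$ and circuit size $\poly(d)$. \autoref{lem:combining approx roots} then yields
\[
f(\vx + \vc, y) \;=\; \h_{\leq d}\!\left[\prod_{i=1}^d \bigl(y - q_i(\vx)\bigr)\right].
\]

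To finish, I will define the formal polynomial
\[
\tilde B\bigl(y, \vz^{(1)}, \ldots, \vz^{(d)}\bigr) \;=\; \prod_{i=1}^d \Bigl(y - A_{i,d}\bigl(\vz^{(i)}\bigr)\Bigr),
\]
which has $O(d^2)$ input variables (over and above $y$) and a circuit of size $\poly(d)$, and take $B$ to be the projection of $\tilde B$ onto its components of total $\vz$-degree at most $d$. By \autoref{thm:homog strassen}, $B$ still has circuit size $\poly(d)$, and by construction its degree is at most $d$. Concatenating all the generators into a single tuple $(g_1, \ldots, g_t)$ with $t = O(d^2)$, the desired identity $f(\vx + \vc, y) = \h_{\leq d}[B(y, g_1, \ldots, g_t)]$ will follow.

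The one subtle point, and the step I expect to require the most care, is justifying that pruning $\tilde B$ to $B$ does not disturb the value of $\h_{\leq d}$ after substitution. This is precisely where the first item of \autoref{lem:G properties} is essential: every $g_{i,j}$ has degree at least $1$ in $\vx$, so any monomial in $\tilde B$ of $\vz$-degree strictly greater than $d$ contributes, upon substitution, only monomials of $\vx$-degree strictly greater than $d$, which are killed by $\h_{\leq d}$. This is the same pruning argument used in the proof of \autoref{thm:root ckt ub}, and once it is in place the statement of the lemma follows from the construction above.
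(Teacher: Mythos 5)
Your proposal follows essentially the same route as the paper: make $P$ monic, reduce the multiplicity of $f$, shift by $\vc$ via \autoref{lem:properly separating shifts}, lift each simple root $\alpha_i$ via \autoref{cor:key lem unique} to get $q_i = \h_{\leq d}[A_{i,d}(g_{i,0},\ldots,g_{i,d})]$, combine via \autoref{lem:combining approx roots}, and prune degrees using that the generators have no constant term. The construction and the pruning argument are correct. The only step you gloss over is that \autoref{lem:combining approx roots} also requires $\h_{\leq d}[f(\vx, q_i(\vx))] = 0$, not just $\h_{\leq d}[P(\vx, q_i(\vx))] = 0$; the paper obtains this by applying \autoref{cor:key lem unique} to $f$ itself (whose roots $\alpha_i$ are simple by the choice of $\vc$) and invoking the uniqueness clause to conclude that the resulting approximate roots of $f$ coincide with the $q_i$.
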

\begin{theorem}\label{thm:factor ckt ub}
Let $P \in \F[\vx, y]$ be a polynomial of degree at most $r$ in $n+1$ variables that can be computed by an arithmetic circuit of size $s$ of depth at most $\Delta$. Let $f\in \F[\vx, y]$ be an irreducible  polynomial of degree at most $d$ such that $f$ divides $P$. 
Then, $f$ can be computed by a circuit of depth at most $\Delta + O(1)$ and size at most $O(\poly(s, r, n)\cdot d^{O(\sqrt{d})})$. 
\end{theorem}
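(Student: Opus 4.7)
The plan is to derive this theorem as a direct consequence of the structural result \autoref{lem:key lem gen} combined with the depth reduction of \autoref{thm:depth-3 chasm} and an interpolation/homogenization clean-up. All of the delicate algebraic work (ensuring monicity in $y$ by a random shift via \autoref{lem:sz lemma}, eliminating multiplicities by differentiating in $y$, applying the separating-shift lemma \autoref{lem:properly separating shifts}, lifting the roots individually via \autoref{cor:key lem unique}, and reassembling them via \autoref{lem:combining approx roots}) has already been done while establishing \autoref{lem:key lem gen}. So my first move is to invoke that lemma to write
\[
f(\vx + \vc, y) \;=\; \h_{\leq d}\bigl[B(g_1,\ldots,g_t)\bigr],
\]
where $t = O(d^2)$, $B$ has degree at most $d$ and a circuit of size $\poly(d)$, and each $g_i$ lies in one of the sets $\cG_y(P(\vx+\vc,y),\alpha_i,d)$.

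Next I would shrink the depth of the outer polynomial $B$. Because $B$ is of size $\poly(d)$ and degree at most $d$, \autoref{thm:depth-3 chasm} converts it into a $\sps$ circuit $\Psi_B$ of size at most $d^{O(\sqrt{d})}$ and depth $3$. This is the only place where the superpolynomial-in-$d$ factor enters; applying depth reduction to $B$ rather than to the entire composition is what keeps $\poly(s,r,n)$ and $d^{O(\sqrt{d})}$ separated in the final bound.

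Then I would compose $\Psi_B$ with circuits for the generators $g_1,\ldots,g_t$. By the third item of \autoref{lem:G properties}, each $g_i$ has a circuit of size $O(s r^4)$ and depth $\Delta$ (the preliminary coordinate shift $\vx \mapsto \vx + \vc$ costs an extra additive $1$ in depth and a multiplicative $n$ in size). When I substitute these circuits at the leaves of $\Psi_B$, the bottom $\sum$ layer of $\Psi_B$ merges with the top $\sum$ layer of each $g_i$, so the composed circuit $\Psi'$ has depth at most $\Delta + 2$ and size at most $\poly(s,r,n) \cdot d^{O(\sqrt{d})}$, with $f(\vx+\vc,y) = \h_{\leq d}[\Psi'(\vx,y)]$. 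Extracting the low-degree homogeneous part via \autoref{thm:homog strassen} (or equivalently via univariate interpolation using \autoref{lem:interpolation}) costs a $\poly(d)$ multiplicative factor and one additional layer of depth, after which undoing the affine shift $\vx \mapsto \vx - \vc$ costs only a constant-factor overhead and at most one extra layer. Combining everything yields a circuit for $f(\vx,y)$ of size $\poly(s,r,n) \cdot d^{O(\sqrt{d})}$ and depth at most $\Delta + O(1)$, matching the claimed bounds.

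The main obstacle is conceptual rather than computational, and it has already been surmounted in the proofs of \autoref{lem:key lem} and \autoref{lem:key lem gen}: one must show that the successive Hensel approximations of the roots of $P$ can be represented as a composition of a polynomial of small circuit complexity with a small number of generator polynomials, each of which inherits a low-depth circuit from $P$. Once that structural fact is in hand, the remaining work for the current theorem is careful bookkeeping: checking that depth reduction is applied to the cheap outer polynomial $B$ (rather than to the entire composition, which would inflate the $d^{O(\sqrt{d})}$ factor into something depending on $s, r, n$), and exploiting the layer-merging identity $\sum \circ \sum = \sum$ between $\Psi_B$ and the top gates of the $g_i$ so that the depth blowup during composition remains an additive constant.
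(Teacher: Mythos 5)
Your proposal is correct and follows essentially the same route as the paper: the paper's proof of this theorem is precisely the chain of reductions (monicity, multiplicity reduction, separating shift, per-root Hensel lifting, and Oliveira's recombination lemma) that culminates in \autoref{lem:key lem gen}, followed by the same final assembly used for \autoref{thm:root ckt ub} — depth-reducing only the cheap outer polynomial $B$ via \autoref{thm:depth-3 chasm}, composing with the depth-$\Delta$ circuits for the generators from \autoref{lem:G properties}, merging adjacent sum layers, and extracting $\h_{\leq d}$ by interpolation. Your bookkeeping of the size and depth overheads matches the paper's.
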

 \section{Deterministic PIT for Low Depth Circuits from Hardness}
In this section, we use~\autoref{thm:root ckt ub} to show that given a family of polynomials which are hard for depth $\Delta$ circuits, we can do deterministic identity testing for $\Delta-5$ circuits in subexponential time. In short, the high-level strategy is to generate hitting set for low depth circuits from the hard polynomial combined with Nisan-Wigderson designs. Since the content of this part are very similar to the proofs of similar statements in~\cite{ki03} and~\cite{DSY09}, we only outline the differences in the proofs (if any), and refer the reader to~\cite{DSY09} for details. We start with the following lemma, which is the analog of Lemma 4.1 in~\cite{DSY09}.
\begin{lemma}[Analog of Lemma 4.1 in~\cite{DSY09}]\label{lem:hybrid}
Let $q(\vx) \in \F[\vx]$ be a (non-zero) polynomial of degree $D$ in $n$ variables, which can be computed by a circuit of size $s$ and depth $\Delta$. Let $m > \log n$ be an integer and let $S_1, S_2, \ldots, S_n \subseteq [\ell]$ be given by~\autoref{thm:nw design}, so that $\ell = O(m^2/\log n)$, $\abs{S_i} = m$, and $\abs{S_i \cap S_j} \leq \log n$. For  a multilinear polynomial $f\in \F[z_1, z_2, \ldots, z_m]$ of degree $d$, put 
\[
Q(\vy) = Q(y_1, y_2, \ldots, y_{\ell}) := q\left(f(\vy|_{S_1}), f(\vy|_{S_2}), \ldots,f(\vy|_{S_n})\right) \, . 
\]
If $Q(\vy) \equiv 0$, then $f(\vz)$ can be computed by an arithmetic circuit of size $O((snD)^{12} d^{O(\sqrt{d})})$ and depth at most $\Delta + 5$.
\end{lemma}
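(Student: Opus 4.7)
The plan is to follow the standard Kabanets-Impagliazzo / DSY hybrid argument, with the key difference that the final step invokes~\autoref{thm:root ckt ub} in place of the DSY root-circuit bound that depends on the individual degree of $f$. First, I would define the sequence of hybrids
\[
Q_k(\vy, x_{k+1}, \ldots, x_n) := q\!\left(f(\vy|_{S_1}), \ldots, f(\vy|_{S_k}), x_{k+1}, \ldots, x_n\right),
\]
so that $Q_0 = q \not\equiv 0$ while $Q_n = Q \equiv 0$ by hypothesis. Let $k$ be the smallest index with $Q_k \equiv 0$; then $Q_{k-1}$ is a non-zero polynomial of total degree at most $Dd$. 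Applying~\autoref{lem:sz lemma}, I would pick constants $\vb$ for the variables $\vy|_{[\ell]\setminus S_k}$ and $\vc$ for $x_{k+1},\ldots,x_n$ such that, after setting $\vz := \vy|_{S_k}$ and $y := x_k$, the restriction $P(\vz,y) := Q_{k-1}|_{\vb,\vc}$ is a non-zero polynomial. By construction $P(\vz, f(\vz)) = Q_k|_{\vb,\vc} \equiv 0$, so $f$ is a root of $P$ in the sense of~\autoref{thm:root ckt ub}.

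The main technical step is to bound the size and depth of a circuit for $P$. The key observation is that for each $i \neq k$, $|S_i \cap S_k| \leq \log n$ by the Nisan-Wigderson design property, so after the restriction, the input $f(\vy|_{S_i})$ becomes a polynomial in the variables $\vz|_{S_i \cap S_k}$, i.e.\ in at most $\log n$ variables. Crucially, because $f$ is multilinear, this restricted polynomial is also multilinear, hence has at most $2^{\log n} = n$ monomials, and can therefore be computed by a depth-$2$ $\Sigma\Pi$ circuit of size $O(nd)$. Substituting these small $\Sigma\Pi$ subcircuits at the first $k-1$ inputs of the depth-$\Delta$ size-$s$ circuit for $q$, replacing $x_k$ by $y$, and hard-wiring $x_{k+1},\ldots,x_n$ to the constants $\vc$, produces a circuit for $P$ in $m+1$ variables of size $O(s + n^2 d)$ and depth at most $\Delta + 2$.

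Finally, I would apply~\autoref{thm:root ckt ub} to $P$ (size $O(s+n^2d)$, depth $\Delta+2$, degree at most $Dd$, in $m+1$ variables) and to its root $f$ of degree $d$. This yields a circuit for $f$ of depth at most $(\Delta+2)+3 = \Delta+5$ and size $O\!\left((s+n^2d)\cdot Dd \cdot m\right)^{10} d^{O(\sqrt d)}$, which after absorbing the polynomial-in-$d$ factors into $d^{O(\sqrt d)}$ and using $m \leq \poly(n)$ collapses to the claimed $O((snD)^{12} d^{O(\sqrt d)})$. I expect the only place that needs genuine care is the sparsity estimate for $f(\vy|_{S_i})|_{\vb}$, since it crucially uses both multilinearity of $f$ and the NW intersection bound; tracking the depth blowup across the substitution (at most $+2$ before invoking the root-circuit theorem) is the other bookkeeping point, but everything else follows from the tools already stated in the preliminaries.
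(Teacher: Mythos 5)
Your proposal is correct and follows essentially the same route as the paper's proof: the hybrid sequence, the Schwartz--Zippel restriction of the variables outside $S_k$ and of $x_{k+1},\ldots,x_n$, the observation that multilinearity of $f$ together with the design bound $\abs{S_i\cap S_k}\le \log n$ makes each restricted input a sparse $\Sigma\Pi$ circuit of size $O(n)$ (hence depth blowup $+2$), and the final invocation of~\autoref{thm:root ckt ub} to get depth $\Delta+5$. The minor differences (your degree bound $Dd$ versus the paper's $D\log n$, and the exact polynomial size factors) are immaterial to the stated conclusion.
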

Note that the bound on the size of $f$ remains non-trivial as long as $d << m$, while the individual degree of $q$ is allowed to be unbounded, whereas the bound in~\cite{DSY09} becomes trivial once $\deg_y(q)$ is larger than $m$.  
\begin{proof}[Proof Sketch]
The proof is along the lines of the proof of Lemma 4.1 in~\cite{DSY09}. We now give a sketch of the details. We first define the hybrid polynomials $Q_0(\vx, \vy), Q_1(\vx, \vy), \ldots, Q_n(\vx, \vy)$ as follows. 
\[
Q_j(\vx, \vy) =  q\left(f(\vy|_{S_1}), f(\vy|_{S_2}), \ldots, f(\vy|_{S_j}), x_{j+1}, x_{j+2}, \ldots, x_n\right) \, .
\]
We know that $Q_0(\vx, \vy)$ is non-zero, whereas $Q_n(\vx, \vy)$ is identically zero. Thus, there is an $i \in \set{0, 1, \ldots, n}$ such that $Q_i(\vx, \vy) \not\equiv 0$ and $Q_{i+1}(\vx, \vy) \equiv 0$. We now fix the variables $x_{i+2}, x_{i+3}, \ldots, x_{n}$ and the variables $\set{y_j : j \notin S_{i+1}}$ to field constants while maintaining the non-zeroness of $Q_i$. This can be done via~\autoref{lem:sz lemma}. Thus, we have  a polynomial $\tilde{q}$ by fixing the aforementioned variables such that the following two conditions hold.
\[
 \tilde{q}\left(f(\vy|_{S_1\cap S_{i+1}}), f(\vy|_{S_2\cap S_{i+1}}), \ldots, f(\vy|_{S_i\cap S_{i+1}}), x_{i+1}\right)\not\equiv 0 \, .
\]  
\[
\tilde{q}\left(f(\vy|_{S_1\cap S_{i+1}}), f(\vy|_{S_2\cap S_{i+1}}), \ldots, f(\vy|_{S_i\cap S_{i+1}}), f(\vy|_{S_{i+1}})\right)\equiv 0 \, .
\]  
Let $A_0(\vy|_{S_{i+1}}, x_{i+1})$ denote the polynomial $\tilde{q}\left(f(\vy|_{S_1\cap S_{i+1}}), f(\vy|_{S_2\cap S_{i+1}}), \ldots, f(\vy|_{S_i\cap S_{i+1}}), x_{i+1}\right)$. 
The above two conditions imply that $f(\vy|_{S_{i+1}})$ is a root of the polynomial $A_0(\vy|_{S_{i+1}}, x_{i+1}) \in \F[\vy|_{S_{i+1}}][x_{i+1}]$, viewed as a polynomial in $x_{i+1}$. Moreover, $A_0(\vy|_{S_{i+1}}, x_{i+1})$ has a circuit of size at most $O(sn)$ and depth at most $\Delta + 2$. This follows from the fact that $f(\vy|_{S_1\cap S_{i+1}})$ is a \emph{multilinear} polynomial in $\log n$ variables, and can thus be computed by a $\sum\prod$ circuit of size at most $n$. We simply replace the variables $x_1, x_2, \ldots, x_i$ in the circuit for $q$ by these $\sum\prod$ circuits to obtain a circuit for $A_0$. The degree of $A_0$ is at most $D\log n$. Finally,~\autoref{thm:root ckt ub} implies that $f(\vy|_{S_{i+1}})$ can be computed by a circuit of size at most $O(\poly(s,n,D) d^{O(\sqrt{d})})$ and depth at most $\Delta + 5$, thus completing the proof. 
\end{proof}

We now sketch the proof of~\autoref{thm:main thm}. 
\begin{proof}[Proof Sketch]
	Once again, the proof follows the proof of Theorems 1 and 2 in~\cite{DSY09}. Let $\{f_m\}$ be a family of explicit multilinear polynomials such that $f_m$ has $m$ variables, degree $d \leq O\left(\left(\frac{\log m}{\log\log m}\right)^2\right)$, such that $f_m$ cannot be computed by a circuit of depth $\Delta$ and size $\poly(m)$.  Let $\epsilon \in (0,  0.49)$ be an arbitrary constant, and set $m:=n^\epsilon$, and $f=f_m$. 
	
	Given as input a circuit $C\in\F[\vx]$ of size $s$, depth $\Delta-5$ and degree $D$ on $n$ variables, let $q\in\F[\vx]$ be the polynomial computed by $C$. The goal here is to determine whether $q$ is nonzero. From the equivalence of black-box PIT and hitting set, it suffices to construct hitting set for circuit class of the above properties.
	\begin{itemize}
		\item We construct a design $S_1, S_2, \ldots, S_n \subseteq [\ell]$ using~\autoref{thm:nw design} where each set $S_i$ has size $m$, $\ell = O(m^2/\log n) \leq n^{2\epsilon} < n^{0.98}$ and $\abs{S_i \cap S_j} \leq \log n$. This can be done in deterministic time $2^{O(n^{2\epsilon})}$.
		\item We pick a subset $T$ of the field $\F$ of size $Dd + 1$ and evaluate the polynomial \linebreak
		$q\left(f(\vy|_{S_1}), f(\vy|_{S_2}), \ldots,f(\vy|_{S_n})\right)$ on all points of $T^{\ell}$. 
		$H=\{(f(\vy|_{S_1}),f(\vy|_{S_2}),\ldots,f(\vy|_{S_n}))\ |\ \vy\in T^{\ell} \}$ is then our candidate hitting set of size $(Dd + 1)^{\ell} = n^{O(n^{2\epsilon})} < n^{O(n^{0.98})}$. Note that the set can be constructed deterministically in time $m^d\cdot n^{O(n^{2\epsilon})} = n^{O(n^{2\epsilon})}$.
	\end{itemize}
	We now argue about the correctness, \textit{i.e.,} $q$ does not vanish on the hitting set if and only if $q$ is not identically zero. Observe that if the polynomial $q\left(f(\vy|_{S_1}), f(\vy|_{S_2}), \ldots,f(\vy|_{S_n})\right)$ is not identically zero, then it has degree at most $Dd$ and hence by~\autoref{lem:sz lemma}, $q$ does not vanish on the set $H$. Else, $q\left(f(\vy|_{S_1}), f(\vy|_{S_2}), \ldots,f(\vy|_{S_n})\right) \equiv 0$. But then, by~\autoref{lem:hybrid}, we get that $f$ can be computed by a circuit of depth $\Delta$ and size at most $O\left(\poly(s,n,D)d^{O(\sqrt{d})}\right)$. If $s, D$ are $\poly(n)$, then this bound is $\poly(m)$ which contradicts the assumed hardness of $f=f_m$ for circuits of depth $\Delta$. This shows that $H$ is a hitting set for the desired circuit  class and completes the proof.
\end{proof}

\section{Factors of Polynomials in VNP}

We now prove~\autoref{thm:vnp closure-intro}, which is restated below.
\begin{theorem}[\autoref{thm:vnp closure-intro} restated]\label{thm:vnp closure}
Let $P(\vx)$ be a polynomial of degree $r$ over $\F$, and let $Q(\vx, \vy)$ be a polynomial in $n + m$ variables such that
\[
P(\vx) = \sum_{\vy \in \{0,1\}^m} Q(\vx, \vy) \, , \text{ and }
\]
$Q$ can be computed by a circuit of size $s$. Let $f$ be an irreducible factor of $P$ of degree $d$. Then, there exists an $m' \leq \poly(s, r, d, n, m)$ and polynomial $h(\vx, z_1, z_2, \ldots, z_{m'})$, such that $h(\vx, \vz)$ can be computed by a circuit of size at most $s' \leq \poly(s, r, d, n, m)$ and 
\[
f(\vx) = \sum_{\vz \in \{0,1\}^{m'}} h(\vx, \vz) \, .
\]
\end{theorem}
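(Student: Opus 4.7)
The plan is to follow the outline in \autoref{sec:overview}: combine the structural decomposition of low-degree factors from \autoref{lem:key lem gen} with Valiant's principle that compositions of VNP polynomials through a small-size, low-degree circuit remain in VNP.

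First, I would reduce to the bivariate setting by designating one variable (say $y = x_n$), and performing the preprocessing used to establish \autoref{lem:key lem gen} (a linear shift making $P$ monic in $y$, a further shift so that the roots of $f({\bf 0}, y)$ are distinct from those of $(P/f)({\bf 0}, y)$, and differentiation to reduce the multiplicity of $f$ in $P$ to one). After this preprocessing, \autoref{lem:key lem gen} yields an expression of the form
\[
f(\vx) = \h_{\leq d}\bigl[B(g_0, g_1, \ldots, g_t)\bigr] \, ,
\]
where $B$ is a polynomial of degree at most $d$ in $O(d^2)$ variables computable by a circuit of size $\poly(d)$, and each generator $g_i$ lies in some $\cG_y(P(\vx+\vc, y), \alpha_i, d)$, that is, the truncation to degree $\leq d$ of $\frac{\partial^j P}{\partial y^j}(\vx+\vc, \alpha_i)$ minus its constant term. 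In parallel, I would verify that each $g_i$ already admits a VNP representation with parameters $\poly(s, r, d, n, m)$: partial differentiation with respect to $y$ commutes with the summation $\sum_{\vy \in \{0,1\}^m}$, so $\frac{\partial^j P}{\partial y^j} = \sum_{\vy} \frac{\partial^j Q}{\partial y^j}$; by \autoref{lem:interpolation}, each such partial derivative of $Q$ admits a circuit of size $\poly(s, r)$, and the subsequent substitution $y \to \alpha_i$, subtraction of the constant term, and truncation to degree $\leq d$ can all be performed inside the sum without disturbing the VNP form.

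The main obstacle is the composition step: a naive substitution of $g_i = \sum_{\vy^{(i)}} \tilde{Q}_i(\vx, \vy^{(i)})$ into $B$ fails, since the summation indices would collide across different occurrences of the same $g_i$, and the product of two VNP sums is not the VNP sum of products unless fresh copies of auxiliary variables are used at each occurrence. To resolve this, I would first apply the depth reduction of \autoref{thm:vsbr original} to $B$, obtaining a balanced circuit of size $\poly(d)$ whose product gates have fan-in at most $5$ and whose children have degree at most half that of the parent. Then I would prove by induction over the gates of this balanced circuit the following claim: at every gate computing a polynomial $B_g$, the composition $B_g(g_0, g_1, \ldots, g_t)$ admits a representation $\sum_{\vz^{(g)} \in \{0,1\}^{m_g}} R_g(\vx, \vz^{(g)})$ with $m_g$ and $\size(R_g)$ bounded by $\poly(s, r, d, n, m)$. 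The step for a product gate of fan-in $k \leq 5$ uses $k$ disjoint copies of auxiliary variables via the identity $\bigl(\sum_{\vy^{(1)}} A_1\bigr) \cdots \bigl(\sum_{\vy^{(k)}} A_k\bigr) = \sum_{\vy^{(1)}, \ldots, \vy^{(k)}} A_1 \cdots A_k$; the step for a sum gate of large fan-in uses a fresh block of $O(\log k)$ Boolean selector variables to index each summand, together with padding the auxiliary blocks of all children to a common length.

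Finally, the outer operator $\h_{\leq d}[\cdot]$ is handled by interpolation in a scaling variable: $\h_{\leq d}[A(\vx)]$ is a fixed $\F$-linear combination of the evaluations $A(\beta_j \vx)$ over $\deg(A)+1$ distinct field elements $\beta_j$, and the index $j$ is absorbed into one more block of selector variables inside the VNP sum. Putting everything together yields the desired representation $f(\vx) = \sum_{\vz \in \{0,1\}^{m'}} h(\vx, \vz)$ with $m'$ and $\size(h)$ both bounded by $\poly(s, r, d, n, m)$ as required.
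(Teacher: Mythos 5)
Your overall architecture matches the paper's: preprocess so that $P$ is monic in a designated variable and $f$ has multiplicity one, invoke \autoref{lem:key lem gen} to get $f = \h_{\leq d}[B(g_1,\dots,g_t)]$, observe that each generator $g_i$ inherits a VNP representation because differentiation in $y$ and the degree truncation commute with the Boolean sum (this is exactly \autoref{clm:vnp generators}), and then compose. Your handling of the final $\h_{\leq d}$ via interpolation in a scaling variable is a harmless variant of the paper's use of \autoref{thm:homog strassen} applied to the verifier with respect to the $\vx$-variables only. The one place where you diverge is the composition step, and that is exactly where the claimed improvement over \cite{DSS17} lives, so it deserves scrutiny.

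The gap is in the size accounting of your gate-by-gate induction over the VSBR-balanced \emph{circuit} for $B$. Your inductive claim is that each gate $g$ of that circuit has a verifier $R_g$ of size $\poly(s,r,d,n,m)$, but the inductive step does not preserve this: at a product gate you take disjoint copies of the children's auxiliary blocks, so $\size(R_g)=\sum_i \size(R_{c_i})+O(1)$, and at a sum gate of fan-in $k$ you again get $\size(R_g)\geq \sum_{i=1}^{k}\size(R_{c_i})$. Since the balanced circuit is a dag in which gates are shared, unrolling these recurrences charges each gate once per root-to-gate path, i.e.\ the verifier size tracks the \emph{formula} expansion of the balanced circuit, which for a $\poly(d)$-size, degree-$d$ circuit is $d^{O(\log d)}$ — this only recovers the Dutta--Saxena--Sinhababu bound, not the polynomial bound in the theorem. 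To rescue your route you must argue that the number of \emph{distinct contexts} (auxiliary sub-blocks) on which a gate's verifier must be instantiated is only $5^{O(\log d)}=\poly(d)$ — using uniform layer-wise padding so that all product gates in a layer split their block into the same five sub-blocks, and sharing identical verifier copies that land on the same sub-block — and none of this is in your write-up. The paper sidesteps the issue entirely by first applying Valiant's structure theorem (\autoref{thm:vnp structure}) to write $B=\sum_{\vy}B'(\vz,\vy)$ where $B'$ has a $\poly(d)$-size \emph{formula}, and then substituting the generators' VNP representations at the leaves of that formula, each of which occurs exactly once; the paper explicitly remarks that it is ``extremely important'' that $B'$ is a formula and not merely a circuit, and that remark is precisely the point your induction elides.
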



For our proof, we use the following structure theorem of Valiant~\cite{v82}, and its consequences (\autoref{clm:composed poly}). Below, we state the theorem, and then use it to prove~\autoref{thm:vnp closure}. For completeness, we include a proof using the depth reduction results in~\cite{vsbr83} in the appendix. 

\begin{theorem}[Valiant~\cite{v82}]\label{thm:vnp structure}
Let $P(\vx)$ be a homogeneous polynomial of degree $r$ in $n$ variables that can be computed by an arithmetic circuit $C$ of size $s$. Then, there is an $m \leq \poly(s, r)$ and a polynomial $Q(\vx, y_1, y_2, \ldots, y_m)$ such that 
\[
P(\vx) = \sum_{\vy \in \{0,1\}^m} Q(\vx, \vy) \, , \text{ and }
\]
$Q(\vx, \vy)$ can be computed by an arithmetic formula of size $\poly(s, r)$.
\end{theorem}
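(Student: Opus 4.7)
The plan is to combine the depth reduction of Theorem~\ref{thm:vsbr original} with a compositional formula construction that uses auxiliary Boolean variables. First, I would apply Theorem~\ref{thm:vsbr original} to convert the given circuit for $P$ into a homogeneous circuit $C'$ of size $\poly(s,n,r)$, depth $O(\log r)$, with alternating sum/product layers, product fan-in at most $5$, and the property that each child of a product gate has degree at most half the parent's.

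Next, I would build formulas $F_g(\vx, \vy_g)$ by induction on the depth of gate $g$ in $C'$, maintaining the invariant that $\sum_{\vy_g \in \{0,1\}^{|\vy_g|}} F_g(\vx, \vy_g)$ equals the polynomial $[g](\vx)$ computed at $g$. At a leaf, set $F_g := [g]$ with no auxiliary variables. At a product gate $g = \prod_{i=1}^{5} g_i$, use fresh disjoint copies $\vy_g^{(i)}$ of each child's auxiliary variables and set $F_g := \prod_i F_{g_i}(\vx, \vy_g^{(i)})$; disjointness makes the sum factor as $\prod_i [g_i]$. At a sum gate $g = \sum_{i=1}^k g_i$, introduce $\lceil \log_2 k \rceil$ fresh selector bits $\vz_g$ and define $F_g := \sum_{i=1}^k \chi_i(\vz_g)\,F_{g_i}(\vx, \vy_g^{(i)})$, where $\chi_i(\vz_g) = \prod_{j} z_{g,j}^{b_{i,j}}(1-z_{g,j})^{1-b_{i,j}}$ is the indicator polynomial for the binary encoding of $i$; since $\sum_{\vz_g \in \{0,1\}^{\lceil \log k \rceil}} \chi_i(\vz_g) = 1$ for each $i$, the sum over all auxiliary variables collapses to $\sum_i [g_i]$. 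Setting $Q := F_{g_{\mathrm{out}}}$ and $\vy$ to be all auxiliary variables introduced along the way yields the desired representation, with correctness of $\sum_\vy Q = P$ following by induction.

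The main obstacle is bounding both the total number of auxiliary variables and the size of $Q$ by $\poly(s, r)$. The auxiliary variable count grows by at most $\lceil \log s' \rceil$ per sum layer and by a factor of $5$ per product layer, which over the $O(\log r)$ depth of $C'$ stays polynomial. The formula size is more delicate: a naive bottom-up accounting multiplies by the sum fan-in at each sum layer and threatens a quasi-polynomial $s^{O(\log r)}$ blow-up from unrolling the DAG. Obtaining the claimed polynomial bound requires exploiting the degree-halving property at VSBR product gates and carefully sharing auxiliary variables between children that correspond to the same subcircuit in the DAG, so that the contribution of each gate of $C'$ to $|Q|$ is additive rather than multiplicative over the depth of $C'$. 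This careful accounting---in the spirit of the analysis in~\cite{vsbr83}---is what yields a formula of size $\poly(s, r)$ as claimed.
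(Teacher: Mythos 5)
Your construction and the correctness invariant are broadly in the right spirit, but the proposal has a genuine gap precisely at the point you flag and defer: the size of the resulting \emph{formula}. A gate-by-gate induction that builds $F_g$ from the $F_{g_i}$ of its children produces a tree containing one copy of $F_{g_i}$ for every path from the output of $C'$ down to $g_i$; since $C'$ is a DAG of depth $O(\log r)$ whose sum gates have fan-in up to $s'=\poly(s,n,r)$, this is $s'^{O(\log r)}$ copies --- exactly the quasipolynomial blow-up of \autoref{thm:circuit to formulas}, not $\poly(s,r)$. Your proposed remedy, ``sharing auxiliary variables between children that correspond to the same subcircuit,'' cannot repair this: formula size is governed by the tree structure, not by how many distinct variables appear, and at product gates sharing variables between the factors would in any case destroy the factorization $\sum_{\vy}\prod_i F_{g_i}=\prod_i\sum[g_i]$ that your correctness argument needs. (A smaller, fixable issue: at a sum gate, each term $\chi_i(\vz_g)F_{g_i}(\vx,\vy_g^{(i)})$ is independent of the siblings' variables $\vy_g^{(i')}$, so summing over them multiplies that term by $2^{|\vy_g^{(i')}|}$; you must either reuse one common block of auxiliary variables across the children of a sum gate or force the unused blocks to a single value.)

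The missing idea is to make the branching factor of the recursion a constant rather than the sum fan-in. The paper's proof uses \autoref{thm:vsbr original} only to write $P=\sum_{i=1}^{s'}A_{i,1}\cdots A_{i,5}$ with each $A_{i,j}$ of degree at most $r/2+1$ and computed by a subcircuit of $C'$, and then absorbs the $s'$-fold outer sum into the Boolean hypercube: $P=\sum_{\vy}R(\vy)\prod_{j=1}^{5}\bigl(\sum_{i}y_{i,j}A_{i,j}\bigr)$, where $R(\vy)$ is a $\poly(s')$-size formula that is nonzero only when the selectors single out one index $i$. The crucial point is that each factor $\sum_i y_{i,j}A_{i,j}$ is a \emph{single} polynomial of roughly halved degree with a circuit of size $O(s')$, so the recursion proceeds on five circuits per level (not $s'$ subformulas per sum gate), terminating after $O(\log r)$ levels at constant degree, where brute-force formulas suffice. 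The resulting recurrences, $m(d,s')\le 5s'+5m(d/2+1,3s')$ and $F(d,s')\le c\,s'^2+5F(d/2+1,3s')$, are polynomially bounded precisely because the branching factor is $5$; without this restructuring of the wide sums, no accounting of the degree-halving property alone will bring your bound below $s^{O(\log r)}$.
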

We now proceed with the proof of~\autoref{thm:vnp closure}.
\begin{proof}[Proof of~\autoref{thm:vnp closure}]
Without loss of generality, we will assume that $P$ is monic in a variable $z$. This can be guaranteed by doing a linear transformation by replacing every variable $x_i$ by $x_i + a_iz$, where $a_i$ are chosen from a large enough grid, based on the degree of $P$. Note that this preserves the form of $P$ in the hypothesis of the theorem. Moreover, using~\autoref{thm:homog strassen}, we will assume that the degree of $Q(\vx, \vy)$ in the variables $x$ and $z$ is at most $r$, up to a polynomial blow up in the circuit size of $Q$. 

From~\autoref{lem:key lem gen}, we know that there is a $\vc \in \F^n$ and a polynomial $B$ in at most $t = O(d^2)$ variables, and polynomials $g_1, g_2, \ldots, g_t$ such that 
\[
f(\vx+ \vc, z) = \h_{\leq d}\left[B(g_1, g_2, \ldots, g_t) \right] \, .
\]
For the rest of this proof, we assume that we have shifted the origin, so that $\vc = {\bf 0}$. Again, this just requires replacing every variable $x_i$ by $x_i + c_i$, and this shift of coordinates does not affect the structure of $P$ in the hypothesis of the theorem. Thus, 
\[
f(\vx, z) = \h_{\leq d}\left[B(g_1, g_2, \ldots, g_t) \right] \, .
\]
Moreover, $B$ has a circuit of size at most $\poly(d)$ and each $g_i$ belongs to some set ${\cal G}_z(P, \alpha, d)$ for some $\alpha \in \F$. We now need the following two structural claims which follow from direct applications of properties of polynomials in $\VNP$ as shown by Valiant~\cite{v82}. 
\begin{claim}[Valiant~\cite{v82}]\label{clm:vnp generators}
For every choice of $\alpha \in \F$ and $g_j \in {\cal G}_z(P, \alpha, k)$, there is a polynomial $Q_j'(\vx, y_1, y_2, \ldots, y_m)$ such that 
\[
g_j(\vx) = \sum_{\vy \in \{0,1\}^m} Q_j'(\vx, \vy) \, .
\]
Moreover, $Q'$ can be computed by a circuit of size at most $\poly(s, r, d)$. 
\end{claim}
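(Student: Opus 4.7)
The plan is to show that each operation used to assemble $g_j$ from $P$ commutes with the Boolean sum $\sum_{\vy \in \{0,1\}^m}$ and can be implemented on the verifier circuit for $Q$ with polynomial size blowup, so that starting from the given verifier $Q(\vx,\vy)$ of $P$ we can build a verifier $Q_j'(\vx,\vy)$ of $g_j$. Unpacking the definition of $\mathcal{G}_z$, the polynomial $g_j$ is obtained from $P$ by four $\F$-linear operations: (i) the $j$-th Hasse derivative in $z$, (ii) substitution $z = \alpha$, (iii) truncation to the $\vx$-homogeneous part of degree at most $d$, and (iv) subtraction of the $\vx$-constant term $\h_0[\,\cdot\,]$. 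Each of (i)--(iii) distributes through the sum over $\vy$, while (iv) will be absorbed into the verifier by a small trick described below.

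For operation (i) I would view $Q(\vx, z+w, \vy)$ as a polynomial of degree at most $r$ in a fresh variable $w$; the Hasse derivative $\frac{\partial^j Q}{\partial z^j}(\vx,z,\vy)$ is, by definition, the coefficient of $w^j$, so \autoref{lem:interpolation} produces it with a circuit of size $O(sr)$. Operation (ii) is a free substitution. For (iii) I would apply the standard $\vx \to t\vx$ interpolation trick: the resulting polynomial has degree at most $r$ in the new variable $t$, the coefficient of $t^k$ is precisely the $\vx$-homogeneous component of degree $k$, and extracting coefficients $k=0,\ldots,d$ via \autoref{lem:interpolation} and summing yields a circuit $R(\vx,\vy)$ of size $\poly(s, r, d)$ satisfying
\begin{equation*}
\sum_{\vy \in \{0,1\}^m} R(\vx, \vy) \;=\; \h_{\leq d}\!\left[\frac{\partial^j P}{\partial z^j}(\vx, \alpha)\right].
\end{equation*}

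To implement (iv) while staying inside the $\VNP$ format, I would exploit the identity $\h_0\!\left[\frac{\partial^j P}{\partial z^j}(\vx,\alpha)\right] = \sum_{\vy} R(\mathbf{0}, \vy)$ and define
\begin{equation*}
Q_j'(\vx, \vy) \;:=\; R(\vx, \vy) \;-\; R(\mathbf{0}, \vy).
\end{equation*}
By construction $\sum_{\vy} Q_j'(\vx,\vy) = g_j(\vx)$, and the circuit for $Q_j'$ is two copies of the circuit for $R$ (in one of them the $x_i$-leaves are replaced by $0$) wired into a subtraction gate; its size remains $\poly(s,r,d)$.

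No step is a real obstacle: the commutation identities follow from linearity of the sum, each circuit operation has standard polynomial blowup, and the mild subtlety of packaging a constant subtraction inside a $\VNP$ sum is taken care of by the $R(\vx,\vy) - R(\mathbf{0},\vy)$ trick, which avoids computing the constant explicitly. I would write up the proof in this linear order, stating the three commutations at the start and then tracking circuit sizes through the chain.
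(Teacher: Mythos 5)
Your proposal is correct and follows essentially the same route as the paper: both proofs observe that every operation defining $g_j$ (Hasse derivative in $z$, substitution $z=\alpha$, extraction of homogeneous components in $\vx$) is $\F$-linear and hence commutes with the Boolean sum over $\vy$, and that each can be implemented on the verifier circuit $Q$ via interpolation with only $\poly(s,r,d)$ blowup. If anything, your write-up is more complete than the paper's, which only treats the derivative step explicitly and leaves the degree-$\leq d$ truncation and the $\h_0$ subtraction implicit; your $R(\vx,\vy)-R(\mathbf{0},\vy)$ device for the latter is a clean way to keep the constant-term subtraction inside the $\VNP$ format.
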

The second claim is about the structure of the composed polynomial $B(g_1, g_2, \ldots, g_t)$. This is a special case of a more general result of Valiant~\cite{v82}, which showed that $\VNP$ is closed under \emph{composition}.
\begin{claim}[Valiant~\cite{v82}]\label{clm:composed poly}
There is an $\tilde{m} \leq \poly(m, d)$ and a polynomial $\tilde{Q}(\vx, y_1, y_2, \ldots, y_{\tilde{m}})$ such that 
\[
B(g_1, g_2, \ldots, g_t) = \sum_{\vy \in \{0,1\}^{\tilde{m}}} \tilde{Q}(\vx, \vy) \, .
\]
Moreover, $\tilde{Q}$ can be computed by a circuit of size $\poly(s, r, d, n, m)$. 
\end{claim}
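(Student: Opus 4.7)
}
My plan is to first re-express $B$ itself as a $\VNP$-style sum over Boolean auxiliary variables of a very small \emph{formula} verifier, and only then compose that formula with the $\VNP$ representations of the $g_j$'s supplied by Claim~\ref{clm:vnp generators}, handling one binary gate at a time. Concretely, since $B$ has an arithmetic circuit of size at most $\poly(d)$ and total degree at most $d$, I would first use Theorem~\ref{thm:homog strassen} to extract its homogeneous components $\h_k[B]$ for $k=0,\ldots,d$ (each still of size $\poly(d)$) and apply Theorem~\ref{thm:vnp structure} to each piece. This produces $m_B \le \poly(d)$ auxiliary variables $\vec{u}$ and an arithmetic \emph{formula} $W(\vz,\vec{u})$ of size $\poly(d)$ with $B(\vz) = \sum_{\vec{u} \in \{0,1\}^{m_B}} W(\vz,\vec{u})$; the $d+1$ components are stitched together with $O(\log d)$ extra selector bits via the standard sum-gate combining rule described below.

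Once $W$ is a binary tree rather than a general DAG, the composition with the $g_j$ becomes well behaved. For each leaf $\ell$ of $W$ labeled by some variable $z_j$, I would allocate a fresh block of aux vars $\vy^{(\ell)}$ of size $m$ and plug in a copy of the verifier $Q'_{j}(\vx,\vy^{(\ell)})$ supplied by Claim~\ref{clm:vnp generators}. Processing the resulting tree $W'$ bottom-up, at a product gate $u=v_1 v_2$ I multiply the two subverifiers on their disjoint aux-var blocks, and at a binary sum gate $u=v_1+v_2$ I use the identity
\[
v_1+v_2 \;=\; \sum_{\vy^{(1)},\vy^{(2)}} \Bigl( R_1(\vx,\vy^{(1)}) \cdot \textstyle\prod_{i}(1-y^{(2)}_{i}) \;+\; R_2(\vx,\vy^{(2)}) \cdot \prod_{i}(1-y^{(1)}_{i}) \Bigr),
\]
which is valid because $\prod_i(1-y_i)$ is the polynomial indicator of the all-zeros point on $\{0,1\}^{|\vy|}$ and its cube-sum equals $1$.

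The payoff from using a \emph{formula} (rather than a circuit) for $W$ is that the bookkeeping is purely additive across gates. The aux-var count satisfies $m(u) = m(v_1)+m(v_2)$ at every internal gate, so after processing $W'$ it totals at most $|W|\cdot m + m_B \le \poly(m,d)$, giving $\tilde m \le \poly(m,d)$. Verifier size also grows additively: a product gate contributes $O(1)$ and a sum gate contributes $O(m(v_1)+m(v_2))$ for the two indicator products, so summing over the $\le \poly(d)$ gates and leaves yields $\sum_{\ell}\poly(s,r,d) + \sum_{\text{gates}} O(\tilde m) \le \poly(s,r,d,m)$, which is within the claimed $\poly(s,r,d,n,m)$. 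Correctness of the final identity $B(g_1,\ldots,g_t)=\sum_{\vec{y}\in\{0,1\}^{\tilde m}}\tilde Q(\vx,\vec y)$ is then immediate from the correctness of the binary sum and product combining rules above.

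The main obstacle---and the reason the ``straightforward'' substitution indicated in the footnote of the paper fails---lies in the very first step: if one tries to process the given circuit of $B$ directly, either the reuse of subcircuits forces duplicated aux-var blocks and blows up the number of auxiliary variables exponentially in the depth, or sum gates of large fan-in force the verifier size to grow multiplicatively by the fan-in, ultimately producing a bound of the form $d^{\Omega(\log d)}$ rather than $\poly(d)$. Theorem~\ref{thm:vnp structure} sidesteps both pitfalls simultaneously by handing us a polynomial-size \emph{formula} verifier for $B$, at which point the binary-gate $\VNP$ composition rules above deliver genuinely polynomial blow-up in both $\tilde m$ and the size of $\tilde Q$.
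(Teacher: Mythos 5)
Your proposal is correct and follows essentially the same route as the paper: invoke Valiant's structure theorem (Theorem~\ref{thm:vnp structure}) to obtain a polynomial-size \emph{formula} verifier for $B$, substitute the verifiers $Q'_j$ from Claim~\ref{clm:vnp generators} at the leaves with fresh disjoint blocks of auxiliary variables, and combine gate by gate, with the formula structure guaranteeing additive (hence polynomial) growth of both the auxiliary-variable count and the verifier size. Your sum-gate identity with the $\prod_i(1-y_i)$ indicator factors is in fact the careful version of the combining rule the paper states only loosely (as written, the paper's addition rule picks up spurious factors of $2^{|\vy|}$ and $2^{|\vz|}$), so your treatment is, if anything, slightly more precise.
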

For completeness, we provide a sketch of the proofs of the claims and that of~\autoref{thm:vnp structure} to the appendix. We now use the claims above to complete the proof of~\autoref{thm:vnp closure}.

Observe that if we view $\tilde{Q}$ as a polynomial in $\vx$ variables with coefficients coming from $\F[\vy]$, then, for every $k \in \N$, it follows that 
\[
\h_{k}\left[B(g_1, g_2, \ldots, g_t)\right] = \sum_{\vy \in \{0,1\}^{\tilde{m}}} \h_{k, \vx}\left[\tilde{Q}(\vx, \vy) \right]\, .
\]
Here, $\h_{k, \vx}[\tilde{Q}(\vx, \vy)]$ denotes the homogeneous component of degree $k$ of $\tilde{Q}(\vx, \vy)$ when viewing $\tilde{Q}(\vx, \vy)$ as a polynomial in $x$ variables. It follows from~\autoref{thm:homog strassen}, that by blowing up the size of the circuit for $\tilde{Q}$ by a factor of at most $O(k^2)$, we can obtain a circuit which computes $\h_{k, \vx}[\tilde{Q}(\vx, \vy)]$, and this does not affect the $y$ variables in any way. This gives us a representation of $f(\vx, z)$ as 
\[
f(\vx) = \sum_{\vz \in \{0,1\}^{m'}} h(\vx, \vz) \, .
\]
where $m' = \tilde{m} \leq \poly(m, d)$, and $h$ can be computed by a circuit of size $\poly(s, r, d, n, m)$. This completes the proof of the theorem. 
\end{proof}

\section{Factors of Polynomials with Small Formulas}
In this section, we prove the following theorem, which gives an upper bound on the formula complexity of factors of polynomials which have small formulas.  We note that this result is not new and was also proved by Dutta et al. in~\cite{DSS17}. 
Since the proof essentially follows from our techniques developed so far and our proof is different from the proof in~\cite{DSS17}, we include the statement and a proof sketch. 
\begin{theorem}[\cite{DSS17}]
Let $P(\vx)$ be a polynomial of degree $r$ in $n$ variables which can be computed by an arithmetic formula of size at most $s$, and let $f(\vx)$ be a factor of $P$ of degree at most $d$. Then, $f(\vx)$ can be computed by an arithmetic formula of size at most $\poly(s, r, n, d^{O(\log d)})$. 
\end{theorem}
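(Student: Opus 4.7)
The plan is to replay the proof of~\autoref{thm:factor ckt ub} step-for-step, accounting for formula size (respectively ABP size) at every stage, and replacing the single invocation of the depth-3 reduction~\autoref{thm:depth-3 chasm} by the VSBR formula upper bound~\autoref{thm:circuit to formulas}. The latter theorem produces an $(sn)^{O(\log d)}$-size formula for any polynomial computed by a size-$s$ circuit of degree $d$, and this is exactly where the $d^{O(\log d)}$ in the claimed bound will come from (in place of the $d^{O(\sqrt d)}$ produced by depth-3 chasm).

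Concretely, after the preprocessing carried out in the passage from roots to arbitrary factors (making $P$ monic in a fresh variable $y$ by a substitution $x_i \mapsto x_i + a_i y$, reducing the multiplicity of $f$ to one by replacing $P$ with an appropriate $y$-partial derivative, and applying a separating shift so that the roots of $f({\bf 0}, y)$ are simple and disjoint from those of $(P/f)({\bf 0}, y)$),~\autoref{lem:key lem gen} supplies a polynomial $B(\vz)$ in $t = O(d^2)$ variables with $\deg B \leq d$ and circuit size $\poly(d)$, together with generators $g_1, \ldots, g_t \in \bigcup_i {\cal G}_y(P(\vx+\vc,y), \alpha_i, d)$, such that
\[
f(\vx + \vc, y) = \h_{\leq d}\bigl[B(g_1, \ldots, g_t)\bigr].
\]
Since $B$ is a polynomial of degree at most $d$ computed by a circuit of size $\poly(d)$ on $t = O(d^2)$ variables,~\autoref{thm:circuit to formulas} yields a formula for $B$ of size $d^{O(\log d)}$. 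For the generators, the proof of~\autoref{lem:G properties} constructs each $g_i$ from $P$ using only (a) extraction of a coefficient of $y^j$ in $P$ via univariate interpolation in $y$ and (b) extraction of a low-degree homogeneous component in $\vx$ by the scale-and-interpolate trick; both operations are instances of~\autoref{lem:interpolation formulas} and hence incur only a $\poly(s, r)$ multiplicative blow-up. Each $g_i$ therefore has a formula of size $\poly(s, r)$.

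Composing the formula for $B$ with those for the $g_i$ — substituting each of the at most $d^{O(\log d)}$ leaves of $B$'s formula by the appropriate $g_i$-formula — produces a formula of size $\poly(s, r) \cdot d^{O(\log d)}$ computing $B(g_1, \ldots, g_t)$. Extracting $\h_{\leq d}$ in the $\vx$ variables via one more interpolation costs a $\poly(d)$ factor, and the inverse coordinate shift is a leaf-relabelling that does not affect size. The total formula size for $f$ is therefore $\poly(s, r, n, d^{O(\log d)})$. The ABP claim follows by the same argument, replacing~\autoref{thm:circuit to formulas} by its ABP analogue (polynomials of low degree computed by small circuits admit ABPs of comparable size) and~\autoref{lem:interpolation formulas} by the evident ABP version of univariate interpolation.

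The main obstacle I anticipate is careful bookkeeping: because formulas do not permit sharing, one must verify that none of the preprocessing steps — in particular the high-order $y$-derivative used to kill the multiplicity of $f$, and the final substitution of $t = O(d^2)$ formulas into $B$ — secretly relies on sharing in a way that causes super-polynomial blow-up. This should go through because each derivative is obtained by univariate interpolation in $y$ alone (a $\poly(s, r)$ blow-up by~\autoref{lem:interpolation formulas}) and each substitution only duplicates the relatively small $g_i$-formulas of size $\poly(s, r)$ across the leaves of the single $B$-formula of size $d^{O(\log d)}$, so the sizes multiply rather than compound.
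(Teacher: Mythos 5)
Your proposal is correct and follows essentially the same route as the paper: invoke \autoref{lem:key lem gen} to write $f = \h_{\leq d}\left[B(g_1,\ldots,g_t)\right]$, convert $B$ (a $\poly(d)$-size circuit on $O(d^2)$ variables of degree $\leq d$) to a $d^{O(\log d)}$-size formula via \autoref{thm:circuit to formulas}, give each $g_i$ a $\poly(s,r)$-size formula via \autoref{lem:G properties} and \autoref{lem:interpolation formulas}, compose, and extract the homogeneous components by one more interpolation. Your extra bookkeeping about the absence of sharing in formulas is sound and only makes explicit what the paper leaves implicit.
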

\begin{proof}
The proof is again along the lines of the proof of~\autoref{thm:vnp closure}. We first observe that the polynomials in ${\cal G}_y(P, \alpha, k)$ have small formulas. This just follows from the proof of Item 3 in~\autoref{lem:G properties} and~\autoref{lem:interpolation formulas}. 

Now, recall that from~\autoref{lem:key lem gen}, we know that the  $B$ is a polynomial in at most $O(d^2)$ variables, and can be computed by a circuit of size at most $\poly(d)$. Thus, by~\autoref{thm:circuit to formulas}, we get that $B$ can be computed by a formula $\Phi$ of size at most $d^{O(\log d)}$. Composing $\Phi$ with the formulas for the polynomials in ${\cal G}_y(P, \alpha, k)$, we get a formula for $B(g_1, g_2, \ldots, g_t)$ of size at most $\poly(r, s, m, n, d^{O(\log d)})$, and also, 
\[
f = \h_{\leq d}\left[B(g_1, g_2, \ldots, g_t) \right] \, .
\]

All we need now to complete the proof, is a formula for $\h_{\leq d}\left[B(g_1, g_2, \ldots, g_t) \right]$, and this follows from~\autoref{lem:interpolation formulas}.
\end{proof}
We remark that the proof also extends to the case of Algebraic Branching Programs.

\section*{Acknowledgment}
We thank Rafael Oliveira for making us aware of the question about the complexity of factors of polynomials in $\VNP$, and Guy Moshkovitz for helpful discussions.

\bibliographystyle{customurlbst/alphaurlpp}
\bibliography{references}

\appendix
\section{Proofs of claims}
We now include the proofs of~\autoref{thm:vnp structure}, \autoref{clm:vnp generators} and~\autoref{clm:composed poly}. We follow the notation set up in the proof of~\autoref{thm:vnp closure}.

\begin{proof}[Proof of~\autoref{clm:vnp generators}] We relabel one of the variables in $\vx$ as $z$.
Let $C_0(\vx), C_1(\vx), \ldots, C_{r}(\vx)$ be polynomials such that \[
P(\vx, z) = \sum_{i = 0}^{r} C_i(\vx) \cdot z^i \, .
\]
From~\autoref{def:derivative}, we know that $\frac{\partial^{j} P }{\partial z^j}\left(\vx, z \right)$ equals $\sum_{i = j}^r \binom{i}{j}C_i(\vx)\cdot z^{i-j}$. Now, we know that
\[
P(\vx, z) = \sum_{y \in \{0,1\}^m} Q(\vx, \vy, z) \, .
\]
Expressing $Q(\vx, \vy, z)$ as a univariate in $z$, we get 
\[
Q(\vx, \vy, z) = \sum_{i = 1}^{r} C'_i(\vx , \vy) \cdot z^{i} \, .
\]
Recall that $Q(\vx, \vy, z)$ has a circuit of size at most $\poly(s)$ and degree at most $r$. By viewing $Q$ as a univariate in $z$ and applying~\autoref{thm:homog strassen}, we get that each $C'_i(\vx, \vy)$ has a circuit of size $\poly(s, r)$. In particular, for every $j \in \N$, we can write $C_j(\vx)$ as 
\[
C_j(\vx) = \sum_{\vy \in \{0,1\}^m} C'_j(\vx, \vy) \, .
\] 
Therefore, for every $j \in \{0, 1, 2, \ldots, d\}$, we get 
\[
\sum_{i = j}^r \binom{i}{j}C_i(\vx) \cdot z^{i-j} =  \sum_{\vy \in \{0,1\}^m}\left(\sum_{i = j}^r \binom{i}{j}C'_i(\vx, \vy) \cdot z^{i-j} \right)
\]
Moreover, the polynomial $\left(\sum_{i = j}^r \binom{i}{j}C'_i(\vx, \vy) \cdot z^{i-j} \right)$ has a circuit of size at most $\poly(n, r)$. This completes the proof of the claim. 
\end{proof}

\begin{proof}[Proof of~\autoref{clm:composed poly}]
The proof is in two parts. We first define the construction of the circuit for $\tilde{Q}$, and then argue the correctness of this construction.
\paragraph*{Constructing $\tilde{Q}$. } 
We know that $B(z_1, z_2, \ldots, z_t)$ is of degree at most $d$ and can be computed by a circuit of size $\poly(d)$. It follows from~\autoref{thm:vnp structure}, that there is an $a \leq \poly(t,d)$ and a  polynomial $B'$ in at most $t + a$ variables such that 
\[
B(z_1, z_2, \ldots, z_t) = \sum_{\vy \in \{0,1\}^a} B'(\vz, \vy) \, .
\]
Crucially, it is also the case that $B'$ has a \emph{formula} of size at most $\poly(d, t)$. We remark that it is extremely important for the proof that $B'$ has a small formula, and not just a small circuit. To construct $\tilde{Q}$, we consider the formula $\Phi$ for $B'(\vz, \vy)$ and let $\ell_1, \ell_2, \ldots,\ell_u$ be the leaves of $\Phi$. Each of the leaves is labeled by a $z$ variables, a $y$ variable or a field constant. From this, we construct a circuit $\Phi'$ by going through over the leaves, and replacing the leaf $\ell_i$ by the circuit for polynomial $Q'_j(\vx, \vy_i)$ from~\autoref{clm:vnp generators} if it is labeled by $z_j$ and leaving it unchanged otherwise. Thus, $\Phi'$ computes polynomial in variables $\vx \cup \vy \cup_{j = 1}^u \vy_{j}$, of size at most $\poly(s,d,r, n, m)$. We denote this polynomial by $\tilde{Q}$. Let $\tilde{m} = \abs{\vx \cup \vy \cup_{j = 1}^u \vy_{j}} \leq \poly(m, d)$. We now argue that the construction in correct. 

\paragraph*{Correctness. } We now argue that 
\[
B(g_1, g_2, \ldots, g_t) = \sum_{(\vy, \vy_1, \ldots, \vy_u) \in \{0,1\}^{\tilde{m}}} \Phi'(\vx,\vy, \vy_1, \ldots, \vy_u) \, .
\]
The proof is by an induction on the size of formula $\Phi$ and the fact that in going from $\Phi$ to $\Phi'$, each of the leaves of $\Phi$ which was labeled by a $z_j$ variable was replaced by a copy of $Q'_j$ with a unique copy of the  auxiliary $y$ variables. Note that the uniqueness of the auxiliary variables is due to the fact that $B'$ has a formula. Finally, the proof follows from the following observation showing that $\tilde{m}=\poly(s,t,d)$. We skip the details.
\begin{observation}
Let $R_1(\vx), R_2(\vx)$ and  $S_1(\vx, \vy), S_2(\vx, \vz)$ be polynomials such that 
\[
R_1(\vx) = \sum_{\vy \in \{0,1\}^{\abs{\vy}}} S_1(\vx, \vy) \, , \text{and }
\]
\[
R_2(\vx) = \sum_{\vz \in \{0,1\}^{\abs{\vz}}} S_2(\vx, \vz) \, . 
\]
Then, 
\[
R_1(\vx) + R_2(\vx) = \sum_{\vy \in \{0,1\}^{\abs{\vy}}, \vz \in \{0,1\}^{\abs{\vz}}} \left(S_1(\vx, \vy) + S_2(\vx, \vz)\right)\, , \text{and }
\]
\[
R_1(\vx) \times R_2(\vx) = \sum_{\vy \in \{0,1\}^{\abs{\vy}}, \vz \in \{0,1\}^{\abs{\vz}}} \left(S_1(\vx, \vy) \times S_2(\vx, \vz) \right)\, .
\]
\end{observation}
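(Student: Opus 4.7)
The plan is to verify both equalities by direct algebraic manipulation, exploiting the fact that $\vy$ and $\vz$ are disjoint tuples of Boolean summation variables so the two summations commute freely and can be merged into a single sum over $\{0,1\}^{\abs{\vy}+\abs{\vz}}$.

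For the product identity, one step of distributivity suffices:
\[
R_1(\vx) \cdot R_2(\vx) = \Bigl(\sum_{\vy \in \{0,1\}^{\abs{\vy}}} S_1(\vx, \vy)\Bigr)\Bigl(\sum_{\vz \in \{0,1\}^{\abs{\vz}}} S_2(\vx, \vz)\Bigr) = \sum_{\vy,\vz} S_1(\vx, \vy)\cdot S_2(\vx, \vz),
\]
since each fixed value of $\vy$ pairs cleanly with each value of $\vz$ (they share no variables). For the verifier circuit, one simply puts a single product gate on top of the given circuits for $S_1$ and $S_2$, so the auxiliary-variable count is $\abs{\vy}+\abs{\vz}$ and the circuit size grows by only an additive constant.

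For the sum identity, I would split the right-hand side using linearity of the summation and then collapse each inner sum using the disjointness of $\vy$ and $\vz$:
\[
\sum_{\vy,\vz}\bigl(S_1(\vx, \vy) + S_2(\vx, \vz)\bigr) = \sum_{\vz}\sum_{\vy} S_1(\vx, \vy) + \sum_{\vy}\sum_{\vz} S_2(\vx, \vz).
\]
The main (and essentially only) point requiring care is bookkeeping of counting factors: since $S_1$ does not involve $\vz$ and $S_2$ does not involve $\vy$, the trailing inner sums produce multiplicative factors $2^{\abs{\vz}}$ and $2^{\abs{\vy}}$ respectively. To land on $R_1 + R_2$ on the nose, one either rescales the verifier, or---more useful for the enclosing proof of \autoref{clm:composed poly}---absorbs the indicators $\prod_i (1-z_i)$ and $\prod_i (1-y_i)$ into the summands $S_1$ and $S_2$ respectively, so that only the all-zeros assignment of each dummy block contributes. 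Either adjustment inflates the verifier by only $O(\abs{\vy}+\abs{\vz})$ extra gates and keeps the auxiliary-variable count at $\abs{\vy}+\abs{\vz}$.

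With this convention fixed, both identities reduce to one invocation of distributivity, and the combined verifier circuit is obtained by gluing the circuits for $S_1$ and $S_2$ with a single top-level sum (respectively product) gate together with the indicator gadgets in the additive case. The only obstacle is the counting bookkeeping just described; everything else is automatic.
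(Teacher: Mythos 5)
The paper gives no proof of this observation at all (it says ``We skip the details''), so there is nothing to compare against except the literal statement; your argument is correct and, more importantly, it correctly identifies that the statement itself is not quite true as written. Your product case is the intended one-line distributivity argument and needs no adjustment: since $S_1$ involves no $\vz$ variables and $S_2$ no $\vy$ variables, the double sum factorizes and equals $R_1\cdot R_2$ exactly. For the sum case you are right that the displayed identity fails: the right-hand side evaluates to $2^{\abs{\vz}}R_1(\vx)+2^{\abs{\vy}}R_2(\vx)$, which equals $R_1+R_2$ only when both blocks of dummy variables are empty (and over characteristic zero, where the paper works, the factors $2^{\abs{\vy}},2^{\abs{\vz}}$ certainly do not vanish into $1$). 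Your proposed repair --- multiplying $S_1$ by $\prod_i(1-z_i)$ and $S_2$ by $\prod_j(1-y_j)$ so that each trailing inner sum contributes exactly the all-zeros term, or alternatively rescaling by $2^{-\abs{\vz}}$ and $2^{-\abs{\vy}}$ --- is the standard fix, costs only $O(\abs{\vy}+\abs{\vz})$ additional gates, and is precisely what the induction over the formula $\Phi$ in the proof of \autoref{clm:composed poly} needs at its sum gates. So your proof is sound; just be explicit that you are proving the corrected version of the sum identity, not the one displayed in the observation.
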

\end{proof}
\begin{proof}[Proof of~\autoref{thm:vnp structure}]Let $C'$ be the circuit obtained by applying~\autoref{thm:vsbr original} to the circuit $C$. The idea is to inductively turn $C'$ into a formula while reducing the depth by half in every step. of the From the properties of $C'$, we get that 
\[
P = \sum_{i = 1}^{s'} A_{i,1}\cdot A_{i,2} \cdots A_{i,5} \, .  
\] 
Here $s'\leq \poly(s, n, d)$ is the size of $C'$, and every $A_{i,j}$ is a polynomial computed by a sub-circuit in $C'$ and the degree of $A_{i, j}$ is at most $d/2+1$. We introduce variables $\set{y_{i, j}, i \in [s'], j \in [5]}$. Let $R(\vy)$ be the following polynomial. 
\[
R(\vy) = \sum_{i = 1}^{s'}\left(y_{i,1}\cdot y_{i,2}\cdots y_{i,5}\right)\cdot \prod_{i' \neq i}\left( (1-y_{i', 1})(1-y_{i', 2})\cdots (1-y_{i', 5})\right)
\]
Observe that for $\vb \in \{0,1\}^{\abs{\vy}}$, $R(\vb)$ is $1$ if and only if there is an $i \in [s']$ such that $(b_{i,1}, b_{i,2}, b_{i,3}, b_{i,4}, b_{i,5})$ equals $(1, 1, 1, 1, 1)$ and for all $i' \in [s']$ with $i \neq i'$, $(b_{i',1}, b_{i',2}, b_{i',3}, b_{i',4}, b_{i',5}) = (0, 0, 0 , 0, 0)$, and zero otherwise. Moreover, $R(\vy)$ can be computed by an arithmetic formula of size at most $s'^2 = \poly(s)$. Now, observe that we can write the polynomial $P$ as follows. 
\[
P(\vx) = \sum_{\vy \in \{0,1\}^{5s'}} R(\vy) \cdot \prod_{j = 1}^5 \left(\sum_{i = 1}^{s'} A_{i, j}y_{i, j}\right) \, .
\]
Also, for every $j$, the polynomial $\sum_{i = 1}^{s'} A_{i, j}y_{i, j}$ is of degree at most $d/2 + 1$ and can be computed by a circuit of size at most $3s'$. This is true since each $A_{i,j}$ is computed by a sub-circuit of $C'$. Thus, we have expressed a degree $d$ polynomial, computable by a circuit of size $s'$ in terms of polynomials of degree at most $d/2+1$, and circuit complexity $3s'$. We have also had to incur an additional additive cost of $O(s'^2)$ for the formula computing $R$. The idea of the proof is to keep applying this reduction for $\log d$ iterations, such that the degree of each of the polynomials is at most a constant. Then, we compute these \emph{generating} polynomials by a formula by brute force.

We now argue that the number of $y$ variables introduced in the process, and the total size of the formula for the final verifier is still polynomially bounded in $s, d$. The number of auxiliary $y$ variables introduced is given by the following recurrence. 
\[
m(d, s') \leq 5s' + 5m(d/2 + 1, 3s') \, . 
\]
The size of the formula ${F(d, s)}$ is upper bounded by the following recurrence. 
\[
F(d, s') \leq c\cdot s'^2 + 5F(d/2 + 1, 3s'),\ , 
\]
where $c>0$ is some constant. It is not hard to see that both $m(d,s')$ and $F(d,s')$ are upper bounded by a fixed polynomial function of $d, s'$.
\end{proof}

\end{document}